\documentclass[twoside,11pt]{article}

\usepackage[margin=1.0in]{geometry}
\usepackage{xcolor}
\usepackage{amsmath}
\usepackage{ mathtools, thmtools}
\usepackage{pdfsync}
\synctex=1
\usepackage[linesnumbered,ruled,vlined]{algorithm2e}

\usepackage{hyperref}       %
 \hypersetup{ hidelinks }
\usepackage{url}            %
\usepackage{booktabs}       %
\usepackage{amsfonts}       %
\usepackage{nicefrac}       %
\usepackage{microtype}      %
\usepackage{xcolor}         %
\usepackage{enumitem}
\usepackage{cite}
\usepackage{multirow}
\usepackage{graphicx}
\usepackage{subfig}
\usepackage{amsthm, amsmath, amssymb,mathtools, thmtools}

\newtheorem{vignette}{Vignette}
\newtheorem{assumption}{Assumption}
\newtheorem{lemma}{Lemma}
\newtheorem{theorem}{Theorem}
\newtheorem{corollary}{Corollary}
\newtheorem{proposition}{Proposition}
\newtheorem{remark}{Remark}
\newtheorem{definition}{Definition}
\newtheorem{example}{Example}
\usepackage{thm-restate}
\usepackage[ruled,vlined]{algorithm2e}
\usepackage{graphicx}
\usepackage[sort]{natbib}
\usepackage{subfig}
\usepackage[export]{adjustbox}

\newcommand{\LL}{\mc{L}}
\newcommand{\PP}{\mc{P}}
\newcommand{\bz}{\zeta}
\let\cite\citep

\DeclareMathOperator*{\ee}{\mathbb{E}}

\newcommand{\argmin}{\operatornamewithlimits{argmin}}

\newcommand{\R}{\ensuremath{\mathbb{R}}}
\newcommand{\mc}{\mathcal}
\newcommand{\mb}{\mathbb}

\newcommand{\zj}{z_i}
\newcommand{\lj}{\ell_i}
\newcommand{\query}{\delta}

\newcommand{\Dz}[1]{\mc{D}_{#1}}
\newcommand{\Dzj}{\Dz{i}}

\DeclareMathOperator{\proj}{proj}

\DeclareMathOperator*{\amin}{argmin}

\DeclareMathOperator*{\E}{\mathbb{E}}

\newcommand{\Blower}{c_l}

\newcommand{\Bl}{c_l}
\newcommand{\Bui}{c_{u,i}}

\newcommand{\players}{players}
\newcommand{\player}{player}
\newcommand{\X}{\mathcal{X}}

\newcommand{\Z}{\mathcal{Z}}

\newcommand{\cD}{\mathcal{D}}

\newcommand{\Na}{n}
\newcommand{\tr}{\textrm{tr}}

\begin{document}

\title{Multiplayer Performative Prediction:\\ Learning in Decision-Dependent
Games\thanks{Drusvyatskiy's research was supported by NSF
    DMS-1651851 and CCF-2023166 awards. Ratliff's research was supported by NSF
CNS-1844729 and Office of Naval Research YIP Award N000142012571. Fazel's
research was supported in part by awards NSF TRIPODS II-DMS 2023166, NSF
TRIPODS-CCF 1740551, and NSF CCF 2007036.}}
\author{Adhyyan Narang\thanks{Department of Electrical and Computer Engineering,
University of Washington, Seattle}, Evan Faulkner$^\ast$, Dmitriy
Drusvyatskiy\thanks{Department of Mathematics, University of Washington,
Seattle},
Maryam Fazel$^\ast$,
Lillian J. Ratliff$^\ast$}

\maketitle

\begin{abstract}
Learning problems commonly exhibit an interesting feedback mechanism wherein the population data  reacts to competing decision makers' actions.  This paper formulates a new game theoretic framework for this phenomenon, called {\em multi-player performative prediction}. We focus on two distinct solution concepts, namely $(i)$ performatively stable equilibria and $(ii)$ Nash equilibria of the game. The latter equilibria are arguably more informative, but can be found efficiently only when the game is monotone.
We show that under mild assumptions, the performatively stable equilibria can be found efficiently by a variety of algorithms, including repeated retraining and the repeated (stochastic) gradient method. We then establish transparent sufficient conditions for strong monotonicity of the game and use them to develop algorithms for finding Nash equilibria. We investigate derivative free methods and adaptive gradient algorithms wherein each player alternates between learning a parametric description of their distribution and gradient steps on the empirical risk. Synthetic and semi-synthetic numerical experiments illustrate the results.

\end{abstract}

\section{Introduction}
\label{sec:introduction}

Supervised  learning theory and algorithms crucially rely on the training and testing data being generated from the same distribution. This assumption, however, is often violated in contemporary applications because data distributions may ``shift" in reaction to the decision maker's actions. Indeed, supervised 
learning algorithms are increasingly being trained on data that is generated by strategic or even adversarial agents, and deployed in environments that react to the decisions that the algorithm makes. In such settings, the model learned on the training data may be unsuitable for downstream inference and prediction tasks. 

The method most commonly used in machine learning practice to address such distributional shifts is to periodically retrain the model to adapt to the changing distribution \citep{diethe2019continual,wu2020deltagrad}. Consequently, it is important to understand when such retraining heuristics converge and what types of solutions they find.
Despite the ubiquity of retraining heuristics in practice, one should be aware that training without consideration of strategic effects or decision-dependence can lead to unintended consequences, including reinforcing bias. This is a concern for applications with 
potentially significant social impact, such as predictive policing~\citep{lum2016predict}, criminal sentencing~\citep{angwin2016propub,courtland2018bias}, pricing equity in ride-share markets~\citep{chen2015peek}, and loan or job procurement \citep{bartlett2019consumer}.

Optimization over decision-dependent probabilities has classical roots in operations research; 
see for example the review article of \cite{hellemo2018decision} and  references therein. The more recent work of \citep{perdomo2020performative}, motivated by the strategic classification literature \citep{dong2018strategic,hardt2016strategic,miller2020strategic}, sets forth an elegant framework---aptly named \emph{performative prediction}---for modeling decision-dependent data distributions in machine learning settings.
There is now extensive research that develops  algorithms for performative prediction by leveraging  advances in convex  optimization \citep{drusvyatskiy2020stochastic, miller2021outside, mendler2020stochastic,perdomo2020performative,brown2020performative}. 

The existing strategic classification and performative prediction literature
focuses solely on the interplay between a single learner and the population that
reacts to the learner's actions. However, learning algorithms in
practice are often deployed alongside other algorithms which may even be
competing with one another. Concrete examples to keep in mind are those of
college admissions and loan procurement, wherein the applicants may tailor their
profile to make them more desirable for the college of their choice, or the loan with the terms (such as interest rate) that 
match the
applicant's current socio-economic and fiscal situation. In these cases, there are multiple competing learners (colleges, banks) and the population reacts based on the admissions policies of all the colleges (or banks) simultaneously. Examples of this type are widespread in applications; we provide further motivating vignettes in Section~\ref{sec:setup}.

\subsection{Contributions} 
\label{sec:contributions}

We formulate the first game theoretic model for decision-dependent learning in
the presence of competition,  called {\em multi-player performative
prediction}.\footnote{A preliminary version of this paper appeared in the
Proceedings of the 25th International Conference on Artificial Intelligence and
Statistics, 2022.}
This is a new class of stochastic games that model a variety of machine learning 
problems arising in many practical applications. 
The model captures, as a special case, 
important problems including strategic classification in settings with
multiple decision-making entities that model learning when each entity's data
distribution depends on the action taken. 
It opens up an entire new class of
problems that can be studied to determine important socio-economic implications
of using machine learning algorithms (including classifiers and predictors) in
settings where the ``data'' generated for training is produced by strategic
users that react based on their own internal preferences. 

We focus on two solution concepts for such games: $(i)$ performatively stable
equilibria and $(ii)$ Nash equilibria. The former arises naturally when
decision-makers employ na\"ive repeated retraining algorithms. This is very
common practice, and hence it is important to understand the equilibrium 
to which such algorithms converge and precisely when they do so.
We show that performatively stable
equilibria are sure to exist and to be unique under reasonable smoothness,
convexity, and Lipschitz assumptions (Section~\ref{sec:algorithms}). Moreover, repeated training and the
(stochastic) gradient methods succeed at finding such equilibrium strategies. The
finite time efficiency estimates (or iteration complexity) we obtain reduce to state-of-the art guarantees
 in the single player setting.

In some applications, a performatively stable equilibrium may be a poor solution
concept and instead a  
Nash equilibrium may be desirable. In particular, as machine
learning algorithms become more sophisticated, in the sense that at the time of
learning decision-dependence is taken into consideration, a more natural
equilibrium concept is a Nash equilibrium. Aiming towards algorithms for finding
Nash equilibria, we develop transparent conditions ensuring strong monotonicity
of the game (Section~\ref{sec:monotonicity}). Assuming that the game is strongly monotone, we then discuss a number of algorithms for finding Nash equilibria (Section~\ref{sec:algorithms_nash}). In particular, derivative-free methods are immediately applicable but have a high sample complexity $\mathcal{O}(\frac{d^2}{\epsilon^2})$ \cite{bravo2018bandit,drusvyatskiy2021improved}.
Seeking faster algorithms, we introduce an additional assumption that the data distribution depends linearly on the performative effects of all the players.
When the players know explicitly how the distribution depends on their own performative effects, but not those of their competitors, a simple stochastic gradient method is applicable and comes equipped with an efficiency guarantee of $\mathcal{O}(\frac{d}{\varepsilon})$. 
Allowing players to know their own performative effects may be unrealistic in some settings.  Consequently, we propose an \emph{adaptive} algorithm in the setting when the data distribution has an amenable parametric description. In the algorithm, the players alternate between estimating the parameters of the distribution and optimizing their loss, again with only empirical samples of their individual gradients given the estimated parameters. The sample complexity for this algorithm, up to variance terms, matches the rate $\mathcal{O}(\frac{d}{\varepsilon})$ of the stochastic gradient method.

Finally, we present illustrative numerical experiments using both a synthetic
example to validate the theoretical bounds, and a semi-synthetic example
generated using data from multiple ride-share platforms (Section~\ref{sec:numerics}). 

\subsection{Related Work}
\label{sec:lit_review}

\paragraph{Performative Prediction.} The multiplayer setting in the present paper is inspired by the single player  performative prediction  framework introduced by \cite{perdomo2020performative}, and further refined by \cite{mendler2020stochastic} and \cite{miller2021outside}. These works introduce the notions of performative optimality and stability, and show that repeated retraining and stochastic gradient methods convergence to a stable point. Subsequently, \citep{drusvyatskiy2020stochastic} showed that a variety of popular gradient-based algorithms in the decision-dependent setting can be understood as the analogous algorithms applied to a certain static problem corrupted by a vanishing bias. 
 In general, performative stability does not imply performative optimality. Seeking to develop algorithms for finding performatively optimal points, \cite{miller2021outside} provide sufficient conditions for the  prediction problem to be convex. 
For decision-dependent distributions described as location families, \cite{miller2021outside} additionally introduce a two-stage algorithm for finding performatively optimal points.   
The paper \citep{izzo2021learn} instead focuses on algorithms that estimate gradients with finite differences.
The performative prediction framework is largely motivated by the
problem of strategic classification \cite{hardt2016strategic}. This problem has been studied extensively from the perspective of
causal inference \cite{bechavod2020causal,miller2020strategic} and convex optimization
\cite{dong2018strategic}.

Another line of work in performative prediction has focused on the setting in
which the environment evolves dynamically in time or experiences time drift.
This line of work is more closely related to reinforcement learning wherein a
decision maker attempts to maximize their reward over time given that the
stochastic environment depends on their decision. In particular, \citep{brown2020performative} formulate a time-dependent performative prediction problem such that the decision-maker seeks to optimize the
stationary reward---i.e., the reward under the fixed point distribution induced
by the player's decision. Repeated retraining algorithms seeking the performatively stable solution
to this problem are studied. In contrast, \citep{ray2021dynamic} study
performative prediction in 
geometrically decaying environments, and provide conditions and algorithms that lead to the 
performatively optimal solution. 
The papers \citep{cutler2021stochastic} and \citep{wood2021online} study performative prediction problems wherein the environment is drifting not only due to the action of the decision maker
but also in time. These two papers analyze the tracking efficiency of the proximal stochastic gradient method and projected gradient descent under time drift. %

\paragraph{Gradient-Based Learning in Continuous Games.} There is a broad and growing literature on learning in games. We focus 
 on the most relevant subset: gradient-based learning in continuous games. 
In his seminal work, \citep{rosen1965existence} showed that convex games which are \emph{diagonal strictly
convex} 
admit a unique Nash equilibrium and the gradient method converges to it. 
There is a large literature extending this work to more general games.
For instance, \cite{ratliff2016characterization} provide a characterization of Nash equilibria in non-convex continuous games, and show that continuous time gradient dynamics locally converge to Nash; building on this work, \cite{chasnov20a} provide local convergence rates that extend to global rates when the game admits a potential function or is strongly monotone.

Under the assumption of strong monotonicity, 
 the iteration complexity of  stochastic  and derivative-free gradient
 methods has also been obtained \citep{mertikopoulos2019learning,
 bravo2018bandit,drusvyatskiy2021improved}. Relaxing strong monotonicity to
 monotonicity,   \citet{tatarenko2019learning,tatarenko2020bandit}  
show that the stochastic gradient and derivative free gradient methods---i.e.,
where players use a single-point query of the loss to construct an estimate of
their individual gradient
 of a smoothed version of their  loss function---converge asymptotically. The
 approach to deal with the lack of strong monotonicity is to add a
 regularization term that decays to zero asymptotically. The update players
 employ in this regularized game is then analyzed 
 as a stochastic gradient method with an additional bias term. We take a similar
 perspective to \citet{tatarenko2019learning,tatarenko2020bandit} and \citep{drusvyatskiy2020stochastic} in the analysis of all the algorithms we study---namely, we view
 the updates as a stochastic gradient method with additional bias.

\paragraph{Stochastic programming.} Stochastic optimization problems with decision-dependent uncertainties have  appeared in the classical stochastic programming literature, such as \citep{ahmed2000strategic,dupacova2006optimization,jonsbraaten1998class,rubinstein1993discrete,varaiya1988stochastic}. We  refer the reader to the recent paper \cite{hellemo2018decision}, which discusses taxonomy and various models of decision dependent uncertainties. An important theme of these works is to utilize structural assumptions on how the decision variables impact the distributions. Consequently, these works sharply deviate from the framework explored in  \cite{perdomo2020performative,mendler2020stochastic,miller2021outside}  and  from  our paper. 
Time-varying problems have also been studied under the title ``nonstationary optimization problems'' in, e.g., \cite{gaivoronskii1978,ermoliev1988}, where it is assumed that the time varying functions converges to a limit but there is no explicit assumption on decision or state-feedback.

\section{Notation and Preliminaries} \label{sec:preliminaries}
This section records basic notation that we will use. A reader that is familiar with convex games and the Wasserstein-1 distance between probability measures may safely skip this section.
Throughout, we let $\R^d$ denote a $d$-dimensional Euclidean space, with inner produce $\langle \cdot,\cdot\rangle$ and the induced norm $\|x\|=\sqrt{\langle x,x\rangle}$. The projection of a point $y\in\R^d$ onto a set $\X\subset\R^d$ will be denoted by
$$\proj_{\X}(y)=\argmin_{x\in\X} \|x-y\|.$$
The normal cone to a convex set $\X$ at $x\in\X$, denoted by $N_{\X}(x)$ is the set
$$N_{\X}(x)=\{v\in\R^d: \langle v,y-x\rangle\leq 0~~\forall y\in \X\}. $$

\subsection{Convex Games and Monotonicity}
Fix an index set $[n]=\{1,\ldots, n\}$, integers $d_i$ for $i\in[n]$, and set $d=\sum_{i=1}^n d_i$. We will always decompose vectors $x\in \R^d$ as $x=(x_1,\ldots, x_n)$ with $x_i\in \R^{d_i}$. Given an index $i$, we abuse notation and write $x = (x_i, x_{-i})$, where $x_{-i}$ denotes the vector of all coordinates except
$x_i$.
A collection of functions $\LL_i\colon\R^{d}\to \R$ and sets $\X_i\subset\R^{d_i}$, for $i\in [n]$, induces a game between $n$ players, wherein each player $i$ seeks to solve the problem
\begin{equation}\label{eqn:game_prelim}
\min_{x_i\in \X_i}~\LL_{i}(x_i,x_{-i}).
\end{equation}
Define the joint action space $\mc{X}=\mc{X}_1\times \cdots \times\mc{X}_n$.
A vector $x^{\star}\in\R^d$ is called a {\em Nash equilibrium} of the game \eqref{eqn:game_prelim} if the condition
$$x_i^{\star}\in \argmin_{x_i\in \X_i}~\LL_i(x_i,x_{-i}^{\star})\qquad \textrm{holds for each }i\in [n].$$
Thus $x^{\star}$ is a Nash equilibrium  if each
player $i$ has no incentive to deviate from $x_{i}^\star$ when the strategies of all
other players remain fixed at $x_{-i}^\star$.

We use $\nabla_i \LL(x)$ to denote the
derivative of $\LL(\cdot)$ with respect to $x_i$. With this notation, we define
the vector of individual gradients \[H(x):=(\nabla_1 \LL_1(x),\ldots,\nabla_n
\LL_n(x)).\]
This map arises naturally from writing down first-order optimality conditions
corresponding to \eqref{eqn:game_prelim}
for each player. Namely, we say that \eqref{eqn:game_prelim} is a  {\em
$C^1$-smooth convex game} if the sets $\X_i$ are closed and convex, the
functions $\LL_{i}(\cdot,x_{-i})$ are convex (i.e., $\LL_i$ is convex in $x_i$ when $x_{-i}$ are fixed), and the partial gradients $\nabla_i
\LL_i(x)$ exist and are continuous. Thus, the Nash equilibria $x^{\star}$ are characterized by the inclusion
$$0\in H(x^{\star})+N_{\X}(x^{\star}).$$

Generally speaking, finding global Nash equilibria is only possible for ``monotone''
games. A $C^1$-smooth convex game is called {\em $\alpha$-strongly monotone}
(for $\alpha\geq 0$) if it satisfies
$$\langle H(x)-H(x'),x-x'\rangle\geq \alpha\|x-x'\|^2\qquad \textrm{for all }x,x'\in\R^d.$$
If this condition holds with $\alpha=0$, the game is simply called {\em
monotone}. It is well-known from \citep{rosen1965existence}  that
$\alpha$-strongly monotone games (with $\alpha>0$)
over convex, closed and bounded strategy sets $\X$ admit a {\em unique} Nash equilibrium $x^{\star}$, which satisfies
$$\langle H(x),x-x^{\star}\rangle\geq \alpha\|x-x^{\star}\|^2\qquad \textrm{for all }x\in \X.$$

\subsection{Probability Measures and Gradient Deviation}
To simplify notation, we will always assume that when taking expectations with respect to a measure that the expectation exists and that integration and differentiation operations may be swapped whenever we encounter them. These assumptions are completely standard to justify under uniform integrability conditions.

We will be interested in random variables taking values in a  metric space.
Given a metric space $\Z$ with metric  $d(\cdot,\cdot)$, the symbol $\mathbb{P}(\Z)$  will denote the set of Radon probability measures $\mu$ on $\Z$ with a finite first moment $\ee_{z\sim \mu} [d(z,z_0)]<\infty$ for some $z_0\in \Z$.  We  measure the deviation between two measures $\mu,\nu\in \mathcal{Z}$ using the Wasserstein-1 distance:
\begin{equation}\label{eqn:KR}
W_1(\mu,\nu)=\sup_{h\in {\rm Lip}_1}\,\left\{\E_{X\sim \mu}[h(X)]-\E_{Y\sim \nu}[h(Y)]\right\},
\end{equation}
where ${\rm Lip}_1$ denotes the set of $1$-Lipschitz continuous functions $h\colon\Z\to\R$.
Fix a function $g\colon\R^d\times \Z\to \R$ and a measure $\mu\in \mathbb{P}(\Z)$, and define the expected loss
$$f_{\mu}(x)=\ee_{z\sim \mu}g(x,z).$$
The following standard result shows that the Wasserstein-1 distance controls how
the gradient $\nabla f_{\mu}(x)$ varies with respect to $\mu$; see, for
example,~\citet[Lemmas 1.1, 2.1]{drusvyatskiy2020stochastic} or \cite[Lemma C.4]{perdomo2020performative} for a short proof.

\begin{lemma}[Gradient deviation]\label{lem:exchange_integral_diff}
Fix a function $g\colon\R^d\times \Z\to \R$ such that $g(\cdot,z)$  is $C^1$-smooth  for all $z\in \Z$ and the map $z\mapsto\nabla_x g(x,z)$ is $\beta$-Lipschitz continuous for any $x\in\R^d$.
Then for any measures $\mu,\nu\in \mathbb{P}(\mathcal{Z})$,  the estimate holds:
$$\sup_x \|\nabla f_{\mu}(x)-\nabla f_{\nu}(x)\|\leq \beta\cdot W_1(\mu,\nu).$$
 \end{lemma}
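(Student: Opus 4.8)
The plan is to peel the statement down to the Kantorovich--Rubinstein duality formula~\eqref{eqn:KR} by probing the difference $\nabla f_{\mu}(x) - \nabla f_{\nu}(x)$ with an arbitrary unit vector. Concretely, I would first fix $x \in \R^d$ and use the standing convention that differentiation and integration may be exchanged to write $\nabla f_{\mu}(x) = \mathbb{E}_{z\sim\mu}[\nabla_x g(x,z)]$ and, likewise, $\nabla f_{\nu}(x) = \mathbb{E}_{z\sim\nu}[\nabla_x g(x,z)]$. Then for any unit vector $u \in \R^d$,
\[
\langle u,\, \nabla f_{\mu}(x) - \nabla f_{\nu}(x)\rangle = \mathbb{E}_{z\sim\mu}[h_u(z)] - \mathbb{E}_{z\sim\nu}[h_u(z)], \qquad h_u(z) := \langle u,\, \nabla_x g(x,z)\rangle .
\]

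The key remaining step is to observe that $h_u$ is $\beta$-Lipschitz on $\Z$: for $z,z' \in \Z$, the Cauchy--Schwarz inequality together with the assumed $\beta$-Lipschitz continuity of $z \mapsto \nabla_x g(x,z)$ yields $|h_u(z) - h_u(z')| \le \|\nabla_x g(x,z) - \nabla_x g(x,z')\| \le \beta\, d(z,z')$. Hence (when $\beta>0$) the rescaled function $\beta^{-1}h_u$ lies in $\mathrm{Lip}_1$, and applying the definition~\eqref{eqn:KR} of $W_1$ to this test function gives $\mathbb{E}_{z\sim\mu}[h_u(z)] - \mathbb{E}_{z\sim\nu}[h_u(z)] \le \beta\, W_1(\mu,\nu)$. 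Combining this with the previous display, $\langle u,\, \nabla f_{\mu}(x) - \nabla f_{\nu}(x)\rangle \le \beta\, W_1(\mu,\nu)$ for every unit vector $u$; taking the supremum over $u$ (which recovers $\|\nabla f_{\mu}(x) - \nabla f_{\nu}(x)\|$) and then the supremum over $x \in \R^d$ completes the argument.

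I do not expect a genuine obstacle here: the only slightly delicate ingredient is the interchange of the gradient with the expectation defining $f_{\mu}$, which is exactly what the paper's blanket uniform-integrability convention (stated immediately before the lemma) licenses, and which is also where one would invoke the cited references for a careful justification. The degenerate case $\beta = 0$ only needs a one-line remark---then $\nabla_x g(x,\cdot)$ is constant in $z$, so $h_u$ is constant and $\mathbb{E}_{z\sim\mu}[h_u(z)] - \mathbb{E}_{z\sim\nu}[h_u(z)] = 0 = \beta\, W_1(\mu,\nu)$ trivially---and requires no new idea.
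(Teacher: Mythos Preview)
Your argument is correct and is exactly the standard route: probe the gradient difference with a unit vector, recognize the resulting scalar function $h_u(z)=\langle u,\nabla_x g(x,z)\rangle$ as $\beta$-Lipschitz, and invoke the Kantorovich--Rubinstein duality~\eqref{eqn:KR}. The paper itself does not supply a proof of this lemma but defers to \cite[Lemmas 1.1, 2.1]{drusvyatskiy2020stochastic} and \cite[Lemma C.4]{perdomo2020performative}, whose arguments proceed in essentially the same way you outline; so there is nothing to compare beyond noting that your write-up matches the cited proofs.
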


\section{Decision-Dependent Games}
\label{sec:setup}

We model the problem of $n$ decision-makers, each facing a decision-dependent learning problem, as an $n$-{\player} game.
Each player $i\in [n]$ seeks to solve the decision-dependent optimization problem %
\begin{equation}\label{eqn:main_problem}
   \min_{x_i \in \X_i} \LL_i(x_i,x_{-i})\qquad \textrm{where}\qquad \LL_i(x):=\E_{\zj \sim \Dz{i}(x)} \lj(x,z_i).
\end{equation}
Throughout, we suppose that each set $\X_i$ lies in the Euclidean space $ \R^{d_i}$
and we set $d=\sum_{i=1}^{\Na} d_i$.
The loss function for the $i$'th player is denoted as $\lj\colon \R^{d} \times
\Z_i \to \R$, where $\Z_i$ is some metric
space and $\mathcal{D}_i(x)\in
\mathcal{P}(\Z_i)$ is a probability measure that depends on the joint decision $x\in \X$ and the player $i\in [n]$.  Observe that the  
random variable $\zj$ in the objective function of player $i$ is governed by the
distribution $\Dzj(x)$, which crucially depends on the strategies
$x=(x_1,\ldots, x_n)$ chosen by \emph{all} players. This is worth emphasizing:
the parameters chosen by one player have an influence on the data seen by all
other players. This is one of the critical ways in which the problems for the
different players are strategically coupled. The other is directly through the
loss function $\ell_i$ which also depends on the joint decision $x$. These two
sources of strategic coupling are why the game theoretic abstraction naturally arises. 
It is worth keeping in mind that in most practical settings (see the upcoming Vignettes \ref{vin:same} and \ref{vin:different}), the loss functions $\lj(x,z_i)$ depend only on $x_i$, that is $\lj(x,z_i)\equiv \lj(x_i,z_i)$. If this is the case, we will call the game 
{\em separable} (which refers to separable losses, not distributions). %
Thus, for separable games, the coupling among the players is due entirely to the distribution $\Dzj(x)$ that depends on the actions of all the players.

\begin{remark}{\rm 
The decision-dependence in the distribution map may encode the reaction of strategic users in a population to the 
announced joint decision $x$; hence, in these cases there is also a ``game" between the decision-makers  and the strategic users in the environment---a game %
with a different interaction structure known as a Stackelberg game \citep{von2010market}. 
This level of strategic interaction between decision-maker and strategic user is
abstracted away to an aggregate level in $\mc{D}_i(x_i,x_{-i})$. The game
between a single decision maker and the strategic user population has been
studied widely (cf.~Section~\ref{sec:lit_review}). We leave it to
future work to investigate both layers of strategic interaction simultaneously.
}
\end{remark}

We assume that each {\player} has full information of the other {\players}'
parameters. This is a reasonable assumption in our setup: if the data population
(e.g., strategic users) are able to respond to the {\players}' deployed
decisions
$x_i$, the other {\players} must be able to respond to these decisions as well.
In essense, these decisions are publicly announced. The following vignettes based on practical applications motivate different types of strategic coupling. 
\begin{vignette}[\textbf{Multiplayer forecasting}] 
{\rm Players have the same decision-dependent data distribution---namely, 
 $\mc{D}\equiv \Dzj \equiv \mc{D}_j$ for all $i, j\in[n]$.
   Multiple mapping applications forecast the travel time between different locations, yet the realized travel time is collectively influenced by all their forecasts. The decision-dependent {\players} are the mapping applications (firms).
    The decision $x_i$ each {\player} makes is the rule for recommending routes. Users choose routes, which then impact the realized travel time $z_i\equiv z\sim \mc{D}$ on the $m$ different road segments in the network observed by all firms.
   }
\label{vin:same}
\end{vignette}

\begin{vignette}
{\rm Players have different distributions---i.e., $\mc{D}_i\not\equiv\mc{D}_j$ for all $i,j\in [n]$. 
\begin{enumerate}[label=(\textbf{\alph*}), topsep=0pt, itemsep=-2pt]
    \item \textbf{Multiplayer Strategic Classification.} Multiple universities classify students as accepted or rejected using applicant data, where each applicant designs their application to fit desiderata of multiple universities.  The data $z_i\sim \mc{D}_i(x)$ is an application that university $i$ receives, and as a decision-dependent player, each university $i$  designs a classification rule $x_i$ to determine which applicants are accepted.
The goal of a university is to accept qualified candidates, and different types
of universities predominently cater  to different populations  (e.g., liberal
arts versus science and engineering), yet students may apply to multiple
programs across many universities thereby resulting in distinct distributions
$\mc{D}_i$ that depend on the joint decision rule $x$. 
\item \textbf{Revenue Maximization via Demand Forecasting.} In a ride-share
    market, multiple platforms
    forecast demand for rides (respectively, supply of drivers) at different
    locations in order to optimize their revenue by using the forecast to set
    prices. In most ride-share markets, drivers and passengers participate in multiple
    platforms by, e.g., ``price shopping''. Hence, the forecasted demand $z_i\sim\mc{D}_i(x)$  for platform $i$ depends on their
    own decision $x_i$ as well as their competitors' decisions $x_{-i}$. 
    \end{enumerate}
 }

\label{vin:different}
\end{vignette}

Prior formulations of decision dependent learning do not readily extend to the settings described in the vignettes without a game theoretic model. There are two natural solution concepts for the game \eqref{eqn:main_problem}. The first is the classical notion of Nash equilibrium; we repeat the definition here for ease of reference.
\begin{definition}[Nash Equilibrium]   \label{def:perf_opt}
{\rm
A vector $x^{\star}\in \X$ is a {\em Nash equilibrium} of the game \eqref{eqn:main_problem} if 
the inclusion 
$$x_i^{\star}\in \argmin_{x_i\in \X_i}~\LL_i(x_i,x_{-i}^\star)\qquad \textrm{holds for each }i\in [n].$$
}
\end{definition}
As previously observed, generally speaking, finding Nash equilibria is only possible for monotone games. The game \eqref{eqn:main_problem} can easily fail to be monotone even if the game is separable and the loss functions $\ell_i(\cdot,z)$ are strongly convex. In Section~\ref{sec:monotonicity}, we develop sufficient conditions for (strong) monotonicity and use them to analyze algorithms for finding Nash equilibria. The sufficient conditions we develop, which are in line with those in the single player setting \citep{miller2021outside}, are necessarily quite restrictive.

 On the other hand, there is an alternative solution concept, which is more amenable to numerical methods, and reduces to performatively stable points of \cite{perdomo2020performative} in the single player setting. The idea is to decouple the effects of a decision $x$ on the integrand $\ell(x,z)$ and on the distribution $\mathcal{D}(x)$. Setting notation, any vector $y\in \X$ induces a static game (without performative effects) wherein the distribution for player $i$ is fixed at $\mathcal{D}_i(y)$, that is:
\begin{equation}\label{eqn:param_fam_games}
\min_{x_i\in \X_i}~\LL_i^y(x_i,x_{-i})\qquad \textrm{where}\qquad \LL_i^y(x_i,x_{-i}):=\E_{z_i\sim \mathcal{D}_i(y)}\ell_i(x_i,x_{-i},z_i).\tag*{$\mathcal{G}(y)$}
\end{equation}
Notice that this is a parametric family of static games, indexed by $y\in \X$.

\begin{definition}[Performatively stable equilibria]
{\em
A strategy vector $x^{\star}\in \X$ is a {\em performatively stable equilibrium} of the game \eqref{eqn:main_problem} if it is a Nash equilibrium of the static game $\mathcal{G}(x^{\star})$ (with game $\mathcal{G}(\cdot)$ as defined above). 
}
\end{definition}

The difference between performatively stable equilibria and Nash equilibria of
is that the governing distribution for player $i$ is
fixed at $\mathcal{D}_i(x^{\star})$. Performatively stable equilibria have a
clear intuitive meaning: each player $i$ has no incentive to deviate from
$x^{\star}$ having access only to data drawn from $\mathcal{D}(x^{\star})$.
Notice that if the game \eqref{eqn:main_problem} is separable---the typical
setting---the static game $\mathcal{G}(y)$ is entirely decoupled for any $y$ in
the sense that  the problem that player $i$ aims to solve depends only on
$x_i$ and not on $x_{-i}$. This enables a variety of single player optimization
techniques to extend to the computation of performatively stable equilibria. 

Nash and performatively stable equilibria are typically distinct, even in the single player setting as explained in \cite{perdomo2020performative}. This distinction is worth highlighting. 
Under mild smoothness assumptions, the chain rule directly implies the following expression for the gradient of player $i$'th objective:
\begin{equation}\label{eqn:grad_exp}
\nabla_i  \LL_i(x_i,x_{-i})=
\underbrace{\E_{z_i\sim\mc{D}_i(x)}[\nabla_i\ell_i(x_i,x_{-i},z_i)]}_{P_i}+
\underbrace{\frac{d}{du_i}\E_{z_i\sim\mc{D}_i(u_i,x_{-i})}[\ell_i(x_i,x_{-i},z_i)]\Big|_{u_i=x_i}}_{Q_i}.
\end{equation}
If $x$ is a Nash equilibrium of the game \eqref{eqn:main_problem}, then equality
$0=\nabla_i \LL_i(x_i,x_{-i})=P_i+Q_i$ holds for all $i\in [n]$. On the other hand, provided the loss functions $\ell_i$ are convex, $x$ is a performatively stable equilibrium precisely when $P_i=0$ for all $i\in[n]$. This clearly shows that the two solution concepts are typically distinct, since performative stability in essence ignores the term $Q_i$ on the right side of \eqref{eqn:grad_exp}. 
It is an open question as to how these equilibrium concepts
        compare in terms of their efficiency relative to the social optimum.  We explore this empirically in Section~\ref{sec:numerics}.

In the rest of the paper we impose the following assumption that is in line with the performative prediction literature.

\begin{assumption}[Convexity and smoothness]\label{assump:convex}
There exist constants $\alpha>0$ and $\beta_i> 0$ such that for each $i\in [n]$, the following hold:
\begin{enumerate}[itemsep=-1pt, topsep=5pt]
    \item  For any $y\in\X$, the game $\mathcal{G}(y)$ is $\alpha$-strongly monotone.     
    \item Each loss $\ell_i(x_i,x_{-i},z_i)$  is $C^1$-smooth in $x_i$ and the map $z_i\mapsto\nabla_i \ell_i(x,z_i)$ is $\beta_i$-Lipschitz
        continuous for any $x\in\X$. 
  \end{enumerate}
\end{assumption}

It is worth noting that in the setting where the losses are seperable, the
game $\mathcal{G}(y)$ is $\alpha$-strongly monotone as long as each expected
loss $\E_{z\sim \mathcal{D}_i(y)}\ell_i(x_i,z_i)$ is $\alpha$-strongly convex in
$x_i$. Assumption~\ref{assump:convex} alone {\em does not} imply convexity of
the objective functions $\LL_i(x_i,x_{-i})$ in $x_i$ nor monotonicity of the game
\eqref{eqn:main_problem} itself. Sufficient conditions for convexity and strong
monotonicity of the game are given in Section~\ref{sec:monotonicity}.

Next, we require the distributions $\cD_i(x)$ to vary in a Lipschitz way with respect to $x$. 
\begin{assumption}[Lipschitz distributions]\label{assum:perm_pred}
	For each $i\in[n]$, there exists $\gamma_i>0$ satisfying 	
 $$W_1(\cD_i(x),\cD_i(y))\leq \gamma_i\cdot\|x-y\|\qquad \textrm{ for all }x,y\in \X.$$
 In this case, we define the constant $\rho:=\sqrt{\sum_{i=1}^n (\frac{\beta_i\gamma_i}{\alpha})^2}$.
\end{assumption}

We will see in Theorem~\ref{thm:rrm_converge} that the constant $\rho$ is fundamentally important for algorithms, since it characterizes settings in which algorithms can be expected to work.
We end this section with some convenient notation that will be used throughout.
To this end, fix two vectors $x=(x_1,\ldots, x_n)\in \X$ and $z=(z_1,\ldots, z_n)\in \Z_1\times \ldots\times \Z_n$. We then set 
$$g_i(x,z_i):=\nabla_i \ell_i(x,z_i)\qquad \textrm{and}\qquad g(x,z):=(g_1(x,z_1),\ldots, g_n(x,z_n)).$$
Taking expectations define 
\begin{equation}\label{eqn:game_jacobian}
G_{i,y}(x):=\E_{z_i\sim \mathcal{D}_i(y)}g_i(x,z_i)\qquad \textrm{and}\qquad G_{y}( x):=(G_{1,y}(x),\ldots, G_{n,y}( x)).
 \end{equation}
Thus $G_{y}(\cdot)$ is the vector of individual gradients corresponding to the game \ref{eqn:param_fam_games}. Notice that we may write
$$G_{y}( x):=\E_{z\sim \mathcal{D}_{\pi}(y)} g(x,z)$$
where $\mathcal{D}_{\pi}(y):=\mathcal{D}_1(y)\times\ldots\times\mathcal{D}_n(y)$ is the product measure.
 The following is a direct consequence of Lemma~\ref{lem:exchange_integral_diff}.

\begin{lemma}[Deviation in the vector of individual gradients]\label{lem:dev_game_jac}
Suppose Assumptions~\ref{assump:convex} and \ref{assum:perm_pred} hold. Then for
every $x,y,y'\in \X$ and index $i\in[n]$, the estimates hold:
\begin{align*}
\|G_{i,y}(x)-G_{i,y'}(x)\|&\leq \beta_i\gamma_i \cdot\|y-y'\|,\\
\|G_y(x)-G_{y'}(x)\|&\leq \sqrt{\sum_{i=1}^n \beta_i^2\gamma_i^2} \cdot\|y-y'\|.
\end{align*}
\end{lemma}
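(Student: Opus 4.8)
The plan is to reduce the claim directly to Lemma~\ref{lem:exchange_integral_diff}, applied coordinate-by-coordinate, and then assemble the per-player bounds into the joint bound via the definition of the Euclidean norm on the product space.

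First I would fix an index $i\in[n]$ and an arbitrary $x\in\X$, and instantiate Lemma~\ref{lem:exchange_integral_diff} with the function $g(\cdot,\cdot):=\ell_i(x_i,x_{-i},\cdot)$ on $\R^{d_i}\times\Z_i$ (treating $x_{-i}$ as frozen parameters, so the ``decision variable'' in the lemma plays the role of $x_i$). By part~2 of Assumption~\ref{assump:convex}, this $\ell_i$ is $C^1$-smooth in $x_i$ and the map $z_i\mapsto\nabla_i\ell_i(x,z_i)$ is $\beta_i$-Lipschitz; these are exactly the hypotheses of the lemma with constant $\beta_i$. Taking $\mu=\mathcal{D}_i(y)$ and $\nu=\mathcal{D}_i(y')$, the lemma yields
$$\|G_{i,y}(x)-G_{i,y'}(x)\|=\|\nabla_x f_{\mathcal{D}_i(y)}(x_i)-\nabla_x f_{\mathcal{D}_i(y')}(x_i)\|\leq \beta_i\cdot W_1(\mathcal{D}_i(y),\mathcal{D}_i(y')),$$
where I have used the notation $f_\mu(x_i)=\E_{z_i\sim\mu}\ell_i(x_i,x_{-i},z_i)$ and the identity $G_{i,y}(x)=\E_{z_i\sim\mathcal{D}_i(y)}g_i(x,z_i)=\nabla_i f_{\mathcal{D}_i(y)}(x)$ from \eqref{eqn:game_jacobian} (the interchange of expectation and gradient being exactly the standing assumption recorded before Lemma~\ref{lem:exchange_integral_diff}). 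Now applying Assumption~\ref{assum:perm_pred}, namely $W_1(\mathcal{D}_i(y),\mathcal{D}_i(y'))\leq\gamma_i\|y-y'\|$, gives the first displayed estimate $\|G_{i,y}(x)-G_{i,y'}(x)\|\leq\beta_i\gamma_i\|y-y'\|$.

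For the second estimate, recall $G_y(x)=(G_{1,y}(x),\ldots,G_{n,y}(x))$ is the concatenation of the per-player blocks, so by definition of the norm on $\R^{d_1}\times\cdots\times\R^{d_n}$ we have $\|G_y(x)-G_{y'}(x)\|^2=\sum_{i=1}^n\|G_{i,y}(x)-G_{i,y'}(x)\|^2$. Substituting the first estimate into each summand bounds this by $\sum_{i=1}^n\beta_i^2\gamma_i^2\|y-y'\|^2$, and taking square roots yields $\|G_y(x)-G_{y'}(x)\|\leq\sqrt{\sum_{i=1}^n\beta_i^2\gamma_i^2}\,\|y-y'\|$, as claimed.

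There is no real obstacle here — the lemma is essentially a bookkeeping corollary of Lemma~\ref{lem:exchange_integral_diff}. The only point requiring a word of care is the identification $G_{i,y}(x)=\nabla_i\E_{z_i\sim\mathcal{D}_i(y)}\ell_i(x,z_i)$, i.e.\ that differentiating under the expectation is legitimate; this is covered by the blanket integrability assumption stated at the start of the ``Probability Measures and Gradient Deviation'' subsection, so I would simply invoke it rather than reprove it.
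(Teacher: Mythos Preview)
Your proposal is correct and follows essentially the same approach as the paper: apply Lemma~\ref{lem:exchange_integral_diff} per player to get the $\beta_i W_1$ bound, invoke Assumption~\ref{assum:perm_pred} for the $\gamma_i\|y-y'\|$ bound, and then sum squares over $i$ for the joint estimate. The only difference is that you spell out the identification $G_{i,y}(x)=\nabla_i f_{\mathcal{D}_i(y)}(x)$ and the interchange of expectation and gradient more explicitly than the paper does, which is harmless extra care.
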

\begin{proof}
Using Lemma~\ref{lem:exchange_integral_diff} and the standing Assumptions~\ref{assump:convex} and \ref{assum:perm_pred} we compute
\begin{align*}
\|G_{i,y}(x)-G_{i,y'}(x)\|&=\left\|\E_{z_i\sim \mathcal{D}_i(y)}\nabla_i \ell_i(x,z_i)-\E_{z_i\sim \mathcal{D}_i(y')}\nabla_i \ell_i(x,z_i)\right\|\\
&\leq  \beta_i \cdot W_1(\mathcal{D}_i(y),\mathcal{D}_i(y'))\\
&\leq  \beta_i\gamma_i \cdot \|y-y'\|.
\end{align*}
Therefore, we deduce 
\begin{align*}
\|G_y(x)-G_{y'}(x)\|^2=\sum_{i=1}^n \|G_{i,y}(x)-G_{i,y'}(x)\|^2&\leq \sum_{i=1}^n \beta_i^2\gamma_i^2 \cdot \|y-y'\|^2.
\end{align*}
The proof is complete.
\end{proof}

\section{Algorithms for Finding Performatively Stable Equilibria}
\label{sec:algorithms}

The previous section isolated two solution concepts for decision dependent
games: Nash equilibria and performatively stable equilibria. In this section, we
discuss existence of the latter and algorithms for finding these. We
discuss three algorithms: repeated retraining, the repeated gradient method, and the
repeated stochastic  gradient method. While the first two are largely conceptual, the
repeated stochastic  gradient method is entirely implementable.

\subsection{Repeated Retraining}
Observe that performatively stable equilibria of \eqref{eqn:main_problem} are precisely the fixed points of the map
$$\mathtt{Nash}(x):=\{x'\in\X:  x'\textrm{ is a Nash equilibrium of the game }\mathcal{G}(x)\}.$$
A natural algorithm is therefore {\em repeated retraining}, which is simply the fixed point iteration
\begin{equation}\label{eqn:repeated_nash}
x^{t+1}=\mathtt{Nash}(x^t).
\end{equation}
In the single player settings, this algorithm was studied in
\citet{perdomo2020performative}. Unrolling notation, given a current decision
vector $x^t$, the updated decision vector $x^{t+1}$ is the Nash equilibrium of the game wherein each player $i\in [n]$ seeks to solve
\begin{equation}\label{eqn:monotono_game_exp}
\min_{y_i\in \X_i} \E_{z_i\sim \mathcal{D}_i(x^t)}\ell_i(y_i,y_{-i},z_i).
\end{equation}
It is important to notice that since $x^t$ is fixed, the game \eqref{eqn:monotono_game_exp} is strongly monotone in light of Assumption~\ref{assump:convex}.
Thus, in iteration $t$, the players play a Nash equilibrium (i.e., a best response) in this
game induced by the prior joint decision $x^t$.  Importantly, despite the fact that $x^{t+1}$ is a
Nash equilibrium of a game in iteration $t$,
the collective decision $x^{t+1}$ is \textbf{not} a Nash equilibrium for
the multiplayer performative prediction problem \eqref{eqn:main_problem}. In fact, players have an
incentive to change their action since it is possible that by changing $x_i^t$,
the change it induces in the distribution $\mc{D}_i(\cdot)$ reduces their expected loss.

The following theorem shows that in the regime $\rho<1$, the game \eqref{eqn:main_problem} admits a unique performatively stable equilibrium and repeated retraining converges linearly.

\begin{theorem}[Repeated retraining]
\label{thm:rrm_converge}
Suppose Assumptions~\ref{assump:convex}-\ref{assum:perm_pred} hold and that we are in the regime $\rho<1$.
Then the game \eqref{eqn:main_problem} admits a unique performatively stable equilibrium $x^{\star}$ and the repeated retraining process \eqref{eqn:repeated_nash} converges linearly:
 \[\|x^{t+1}-x^{\star} \|\leq \rho\cdot \|x^t-x^{\star}\|\qquad \textrm{for all}
 ~t\geq 0.\]
\end{theorem}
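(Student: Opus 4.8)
The plan is to show that the solution map $\mathtt{Nash}(\cdot)$ is in fact single-valued and $\rho$-Lipschitz on $\X$; since $\rho<1$, the Banach fixed point theorem then simultaneously delivers existence and uniqueness of a performatively stable equilibrium $x^{\star}$ (the unique fixed point of $\mathtt{Nash}$) and the linear rate, via $\|x^{t+1}-x^{\star}\|=\|\mathtt{Nash}(x^t)-\mathtt{Nash}(x^{\star})\|\le\rho\|x^t-x^{\star}\|$.

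First I would record that $\mathtt{Nash}(y)$ is well-defined as a single point for every $y\in\X$. By Assumption~\ref{assump:convex}(1), the static game $\mathcal{G}(y)$ is $\alpha$-strongly monotone with $\alpha>0$ over the closed convex set $\X$, and such a $C^1$-smooth convex game admits a unique Nash equilibrium, characterized by the variational inequality $0\in G_{y}(x)+N_{\X}(x)$ (this is the standard consequence of \citep{rosen1965existence} recalled in Section~\ref{sec:preliminaries}). Fix arbitrary $y,y'\in\X$ and write $x=\mathtt{Nash}(y)$, $x'=\mathtt{Nash}(y')$.

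The core estimate is $\|x-x'\|\le\rho\|y-y'\|$. From the two variational inequalities we get $\langle G_{y}(x),\,x'-x\rangle\ge 0$ and $\langle G_{y'}(x'),\,x-x'\rangle\ge 0$, and adding them yields $\langle G_{y}(x)-G_{y'}(x'),\,x-x'\rangle\le 0$. Splitting $G_{y}(x)-G_{y'}(x')=\bigl(G_{y'}(x)-G_{y'}(x')\bigr)+\bigl(G_{y}(x)-G_{y'}(x)\bigr)$ and applying $\alpha$-strong monotonicity of $\mathcal{G}(y')$ to the first bracket gives
\[
\alpha\|x-x'\|^2\le\langle G_{y'}(x)-G_{y'}(x'),\,x-x'\rangle\le\langle G_{y'}(x)-G_{y}(x),\,x-x'\rangle\le\|G_{y}(x)-G_{y'}(x)\|\cdot\|x-x'\|.
\]
Dividing by $\|x-x'\|$ (the claim is trivial when $x=x'$) and invoking the second inequality of Lemma~\ref{lem:dev_game_jac} yields $\|x-x'\|\le\frac{1}{\alpha}\sqrt{\sum_{i=1}^n\beta_i^2\gamma_i^2}\cdot\|y-y'\|=\rho\,\|y-y'\|$. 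Hence $\mathtt{Nash}\colon\X\to\X$ is a $\rho$-contraction on the complete metric space $\X$ (a closed subset of $\R^d$), so it has a unique fixed point $x^{\star}$ — precisely the unique performatively stable equilibrium — and the contraction estimate with $y=x^t$, $y'=x^{\star}$ gives the stated linear convergence.

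I expect the only genuine subtlety to be the well-definedness of $\mathtt{Nash}(\cdot)$, i.e. guaranteeing \emph{existence} (not merely uniqueness) of the Nash equilibrium of $\mathcal{G}(y)$; this is exactly where strong monotonicity together with closedness and convexity of $\X$ (and boundedness or coercivity, if one wants the most elementary argument) is used. Everything else is a clean consequence of strong monotonicity of the parametric game $\mathcal{G}(y')$ and the gradient-deviation bound of Lemma~\ref{lem:dev_game_jac}; notably, monotonicity of the original game \eqref{eqn:main_problem} is never invoked.
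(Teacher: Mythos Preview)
Your proposal is correct and follows essentially the same approach as the paper: both show that $\mathtt{Nash}(\cdot)$ is a $\rho$-contraction by combining the first-order optimality conditions for the two Nash equilibria, strong monotonicity of the static game, and the gradient-deviation bound of Lemma~\ref{lem:dev_game_jac}, then invoke the Banach fixed point theorem. The only cosmetic differences are that you swap the roles of the symbols $x$ and $y$ relative to the paper and use strong monotonicity of $\mathcal{G}(y')$ rather than $\mathcal{G}(y)$, which is an entirely symmetric choice.
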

\begin{proof}
We will show that the map $\mathtt{Nash}(\cdot)$ is Lipschitz continuous with parameter $\rho$. To this end, consider two points $x$ and $x'$ and set
$y:=\mathtt{Nash}(x)$ and $y':=\mathtt{Nash}(x')$. Note that first order optimality conditions for $y$ and $y'$ guarantee
$$\langle G_{x}(y),y-y'\rangle\leq 0\qquad \textrm{and}\qquad \langle G_{x'}(y'),y'-y\rangle\leq 0.$$
Strong monotonicity of the game $\mathcal{G}(x)$ therefore ensures
\begin{align*}
\alpha\cdot\|y-y'\|^2&\leq  \langle G_{x}(y)-G_{x}(y'),y-y'\rangle\\
&\leq  \langle G_{x'}(y')-G_{x}(y'),y-y'\rangle\\
&\leq \| G_{x'}(y')-G_{x}(y'))\|\cdot\|y-y'\|\ \\
&\leq \sqrt{\sum_{i=1}^n \gamma_i^2\beta_i^2}
\cdot\|x-x'\|\cdot \|y-y'\|,
\end{align*}
where the last inequality follows from Lemma~\ref{lem:dev_game_jac}.
Dividing through by $\|y-y'\|$ guarantees that $\mathtt{Nash}(\cdot)$ is indeed a
contraction with parameter $\rho$. The result follows immediately from the Banach fixed point theorem.
\end{proof}

Theorem~\ref{thm:rrm_converge} shows that the interesting parameter regime is
$\rho<1$, since outside of this setting performative equilibria may even fail to
exist. 
It is worth noting that when the game \eqref{eqn:main_problem} is separable, each iteration of repeated retraining \eqref{eqn:repeated_nash}  becomes
\begin{equation}\label{eqn:monotono_game_exp}
\min_{y_i\in \X_i} \E_{z_i\sim \mathcal{D}_i(x^t)}\ell_i(y_i,z_i).
\end{equation}
That is, the optimization problems faced by the players are entirely independent of each other. In the separable case, the regime when repeated retraining succeeds can be slightly enlarged from $\rho<1$ to $\sum_{i=1}^n\left(\frac{\beta_i\gamma_i}{\alpha_i}\right)^2<1$, where $\alpha_i$ is the strong convexity constant of the loss for player $i$. This is the content of the following theorem, whose proof is a slight modification of the proof of Theorem~\ref{thm:rrm_converge}.

\begin{theorem}[Improved contraction for separable games]
Suppose that the game \eqref{eqn:main_problem} is separable and that each loss
function $\LL_i^y(x_i)=\E_{z_i\sim \mathcal{D}_i(y)}\ell_i(x_i,z_i)$ is
$\alpha_i$-strongly convex in $x_i$ for every $y\in \X$. Suppose moreover that
Assumptions~\ref{assump:convex} and \ref{assum:perm_pred} hold, and that we are in the regime $\sum_{i=1}^n\left(\frac{\beta_i\gamma_i}{\alpha_i}\right)^2<1$. The game \eqref{eqn:main_problem} admits a unique performatively stable equilibrium $x^{\star}$ and the repeated retraining process converges linearly:
 \[\|x^{t+1}-x^{\star} \|\leq
 \sqrt{\sum_{i=1}^n\left(\tfrac{\beta_i\gamma_i}{\alpha_i}\right)^2}\cdot
 \|x^t-x^{\star}\|\qquad \textrm{for all}~t\geq 0.\]
 \end{theorem}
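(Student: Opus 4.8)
The plan is to reuse the argument of Theorem~\ref{thm:rrm_converge}, but to exploit separability so that the single strong-monotonicity constant $\alpha$ is replaced by the per-player strong-convexity constants $\alpha_i$. The first step is to note that when the game is separable, the game $\mathcal{G}(x^t)$ fully decouples: player $i$'s subproblem is $\min_{y_i\in\X_i}\LL_i^{x^t}(y_i)$, which by $\alpha_i$-strong convexity over the closed convex set $\X_i$ has a unique solution $\mathtt{Nash}_i(x^t)$. Hence $\mathtt{Nash}(x)=(\mathtt{Nash}_1(x),\dots,\mathtt{Nash}_n(x))$, and in the separable case $G_{i,y}(x)=\nabla\LL_i^y(x_i)$ depends on $x$ only through $x_i$. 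The goal, as before, is to show that $\mathtt{Nash}(\cdot)$ is a contraction, now with modulus $\sqrt{\sum_{i=1}^n(\beta_i\gamma_i/\alpha_i)^2}$.

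The second step is the per-player estimate. Fix $x,x'\in\X$ and set $y=\mathtt{Nash}(x)$, $y'=\mathtt{Nash}(x')$. For each $i$, first-order optimality gives $\langle\nabla\LL_i^x(y_i),y_i-y_i'\rangle\le 0$ and $\langle\nabla\LL_i^{x'}(y_i'),y_i-y_i'\rangle\ge 0$. Combining $\alpha_i$-strong convexity of $\LL_i^x$ with these two inequalities, followed by Cauchy--Schwarz and the first bound of Lemma~\ref{lem:dev_game_jac}, gives
\[\alpha_i\|y_i-y_i'\|^2\le\langle\nabla\LL_i^{x'}(y_i')-\nabla\LL_i^{x}(y_i'),\,y_i-y_i'\rangle\le\beta_i\gamma_i\|x-x'\|\cdot\|y_i-y_i'\|,\]
so $\|y_i-y_i'\|\le(\beta_i\gamma_i/\alpha_i)\|x-x'\|$. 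Squaring and summing over $i$ yields $\|y-y'\|^2\le\bigl(\sum_{i=1}^n(\beta_i\gamma_i/\alpha_i)^2\bigr)\|x-x'\|^2$. Since $\X$ is closed in $\R^d$, hence a complete metric space, the Banach fixed point theorem then delivers the unique performatively stable equilibrium $x^\star$ and the claimed linear rate.

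There is no substantive obstacle here; the argument is a routine modification of the preceding one. The two points worth checking carefully are (i) that separability genuinely makes player $i$'s optimality condition self-contained, i.e.\ that $G_{i,y}(x)$ does not involve $x_{-i}$, so that the individual best-response maps are well defined and one can work coordinate by coordinate; and (ii) that Lemma~\ref{lem:dev_game_jac} applies verbatim, since in the separable case the $\beta_i$-Lipschitz map is $z_i\mapsto\nabla_i\ell_i(x_i,z_i)$ and the lemma's per-player bound $\|G_{i,x'}(y')-G_{i,x}(y')\|\le\beta_i\gamma_i\|x-x'\|$ is exactly what is needed. It is also worth remarking that $\alpha_i$-strong convexity alone ensures existence of each best response on the possibly unbounded closed convex set $\X_i$, so no compactness of $\X$ is required.
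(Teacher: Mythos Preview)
Your proof is correct and follows essentially the same approach as the paper: both use the per-player first-order optimality conditions together with $\alpha_i$-strong convexity and the per-player bound from Lemma~\ref{lem:dev_game_jac} to show $\mathtt{Nash}(\cdot)$ is a contraction with the stated modulus. The only cosmetic difference is that you work coordinate by coordinate (deriving $\|y_i-y_i'\|\le(\beta_i\gamma_i/\alpha_i)\|x-x'\|$ and then summing squares), whereas the paper packages the same computation using a Hadamard product with the weight vector $(\alpha_1^{-1},\dots,\alpha_n^{-1})$ before applying Cauchy--Schwarz once at the vector level.
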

\begin{proof}
We will show that the map $\mathtt{Nash}(\cdot)$ is Lipschitz continuous with parameter  $\sqrt{\sum_{i=1}^n\left(\frac{\beta_i\gamma_i}{\alpha_i}\right)^2}$. To this end, consider two points $x$ and $x'$ and set
$y:=\mathtt{Nash}(x)$ and $y':=\mathtt{Nash}(x')$. Note that first order optimality conditions for $y$ and $y'$ guarantee
$$\langle G_{i,x}(y),y_i-y'_i\rangle\leq 0\qquad \textrm{and}\qquad \langle G_{i,x'}(y'),y'_i-y_i\rangle\leq 0\qquad \textrm{for all }i\in [n].$$
Set $v=(\alpha_1^{-1},\ldots, \alpha_n^{-1})$ and let $\odot$ denote the Hadamard product between two vectors.
 Strong convexity of the loss functions therefore ensures
\begin{align*}
\|y-y'\|^2&\leq \sum_i \alpha_i^{-1}\langle G_{i,x}(y)-G_{i,x}(y'),y_i-y'_i\rangle\\
&\leq  \sum_i \alpha_i^{-1}\langle G_{i,x'}(y')-G_{i,x}(y'),y_i-y'_i\rangle\\
&=  \langle v\odot ( G_{x'}(y')-G_{x}(y')),y-y'\rangle\\
&\leq \|v\odot ( G_{x'}(y')-G_{x}(y'))\|\cdot\|y-y'\|\ \\
&\leq \sqrt{\sum_{i=1}^n\frac{\beta_i^2\gamma_i^2}{\alpha_i^2}}
\cdot\|x-x'\|\cdot \|y-y'\|,
\end{align*}
where the last inequality follows from Lemma~\ref{lem:exchange_integral_diff} and the standing Assumptions~\ref{assump:convex} and \ref{assum:perm_pred}.
Dividing through by $\|y-y'\|$ guarantees that $\mathtt{Nash}(\cdot)$ is indeed a
contraction with parameter $\sqrt{\sum_{i=1}^n\frac{\beta_i^2\gamma_i^2}{\alpha_i^2}}$. The result follows immediately from the Banach fixed point theorem.
\end{proof}

\subsection{Repeated Gradient Method}
\label{sec:rgd}
Repeated retraining is largely a conceptual algorithm since in each iteration it
requires computation of the exact Nash equilibrium of a stochastic game
\eqref{eqn:monotono_game_exp}. A more realistic algorithm would instead take a
single gradient step on the game \eqref{eqn:monotono_game_exp}. With this in
mind, given a step-size parameter $\eta> 0$, the {\em repeated gradient method} repeats the updates:
$$x^{t+1}=\proj_{\X}(x^t-\eta G_{x^t}(x^t)).$$
More explicitly, in iteration $t$, each player $i\in [n]$ performs the update
$$x^{t+1}_i=\proj_{\X_i}\left(x^t-\eta_t \E_{z_i\sim \mathcal{D}_i(x^t)}\nabla_i \ell_i(x_i^t,x_{-i}^t,z_i)\right).$$
This algorithm is largely conceptual since each player needs to compute an expectation; nonetheless we next show that this process converges linearly under the following additional smoothness assumption.

\begin{assumption}[Smoothness]\label{assump:smoothnessL}
For every $y\in\X$, the vector of individual gradients $G_{y}(x)$ is $L$-Lipschitz continuous in $x$.
\end{assumption}

The following is the main result of this section.

\begin{theorem}[Repeated gradient method]
Suppose that Assumptions~\ref{assump:convex}-\ref{assump:smoothnessL} hold and
that we are in the regime $\rho<\frac{1}{2+\sqrt{2}}$. Then the iterates $x^t$
generated by the repeated gradient method with $\eta=\frac{\alpha}{L^2}$ converge
linearly to the %
performatively stable equilibrium $x^{\star}$---that is, the following
estimate holds:
\begin{equation}\label{eqn:grad_discnt_dcrs}
\|x^{t+1}-x^{\star}\|\leq
\left(\frac{1}{\sqrt{1+\frac{\alpha^2}{L^2}}}+\frac{\alpha^2\rho}{L^2}\right)\|x^t-x^{\star}\|
\qquad \textrm{for all } t\geq 0.
\end{equation}
\end{theorem}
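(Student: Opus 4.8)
The plan is to regard one step of repeated gradient play as a perturbed projected pseudo-gradient step on the single static game $\mathcal{G}(x^\star)$, which is $\alpha$-strongly monotone (Assumption~\ref{assump:convex}) and whose individual-gradient map $G_{x^\star}(\cdot)$ is $L$-Lipschitz (Assumption~\ref{assump:smoothnessL}). First I would invoke Theorem~\ref{thm:rrm_converge} (applicable since $\rho<\tfrac{1}{2+\sqrt2}<1$) to obtain existence and uniqueness of the performatively stable equilibrium $x^\star$, and record that, being a Nash equilibrium of $\mathcal{G}(x^\star)$, it satisfies $-G_{x^\star}(x^\star)\in N_\X(x^\star)$, equivalently the fixed-point identity $x^\star=\proj_\X(x^\star-\eta G_{x^\star}(x^\star))$ for every $\eta>0$. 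I would also note the automatic inequality $\alpha\le L$, obtained by combining $\alpha$-strong monotonicity and $L$-Lipschitzness of $G_{x^\star}$; this is what makes the constant $2+\sqrt2$ appear.

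Next I would insert the ``ideal'' iterate $\hat x^{t+1}:=\proj_\X(x^t-\eta G_{x^\star}(x^t))$ and use the triangle inequality
$$\|x^{t+1}-x^\star\|\le\|x^{t+1}-\hat x^{t+1}\|+\|\hat x^{t+1}-x^\star\|.$$
For the first term, nonexpansiveness of $\proj_\X$ and Lemma~\ref{lem:dev_game_jac} give $\|x^{t+1}-\hat x^{t+1}\|\le\eta\|G_{x^t}(x^t)-G_{x^\star}(x^t)\|\le\eta\sqrt{\sum_i\beta_i^2\gamma_i^2}\,\|x^t-x^\star\|=\eta\alpha\rho\,\|x^t-x^\star\|$, which equals $\tfrac{\alpha^2\rho}{L^2}\|x^t-x^\star\|$ once $\eta=\alpha/L^2$ is substituted. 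For the second term, I would use the fixed-point identity for $x^\star$, nonexpansiveness, and then expand
$$\|\hat x^{t+1}-x^\star\|^2\le\|x^t-x^\star\|^2-2\eta\langle G_{x^\star}(x^t)-G_{x^\star}(x^\star),x^t-x^\star\rangle+\eta^2\|G_{x^\star}(x^t)-G_{x^\star}(x^\star)\|^2,$$
bounding the middle term below by $\alpha\|x^t-x^\star\|^2$ (strong monotonicity) and the last term above by $L^2\|x^t-x^\star\|^2$ (Lipschitzness), to arrive at $\|\hat x^{t+1}-x^\star\|^2\le(1-2\eta\alpha+\eta^2L^2)\|x^t-x^\star\|^2=(1-\tfrac{\alpha^2}{L^2})\|x^t-x^\star\|^2$ for $\eta=\alpha/L^2$. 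Since $\sqrt{1-a}\le\tfrac{1}{\sqrt{1+a}}$ for $a\in[0,1]$, this yields $\|\hat x^{t+1}-x^\star\|\le\tfrac{1}{\sqrt{1+\alpha^2/L^2}}\|x^t-x^\star\|$, and summing the two bounds reproduces \eqref{eqn:grad_discnt_dcrs} exactly.

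It remains to check that the displayed rate is strictly less than one. Writing $c:=\alpha^2/L^2\in(0,1]$, I would use the identity $1-\tfrac{1}{\sqrt{1+c}}=\tfrac{c}{\sqrt{1+c}(\sqrt{1+c}+1)}$ to see that $\tfrac{1}{\sqrt{1+c}}+c\rho<1$ is equivalent to $\rho<\tfrac{1}{\sqrt{1+c}(\sqrt{1+c}+1)}$; since $c\le1$ forces $\sqrt{1+c}(\sqrt{1+c}+1)\le\sqrt2(\sqrt2+1)=2+\sqrt2$, the hypothesis $\rho<\tfrac{1}{2+\sqrt2}$ suffices. The only mild obstacle is the bookkeeping in this last threshold manipulation together with remembering the free fact $\alpha\le L$; everything else is a standard strongly-monotone projected-gradient estimate glued to the perturbation bound of Lemma~\ref{lem:dev_game_jac}.
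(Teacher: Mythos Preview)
Your proof is correct and shares the paper's overall structure: both split $\|x^{t+1}-x^\star\|$ via the ``ideal'' iterate $\hat x^{t+1}=\proj_\X(x^t-\eta G_{x^\star}(x^t))$ and bound the perturbation term $\|x^{t+1}-\hat x^{t+1}\|$ identically using nonexpansiveness of the projection together with Lemma~\ref{lem:dev_game_jac}.

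The difference lies in how the contraction of the ideal step $\|\hat x^{t+1}-x^\star\|$ is established. The paper does not use the fixed-point identity $x^\star=\proj_\X(x^\star-\eta G_{x^\star}(x^\star))$; instead it exploits the $1$-strong convexity of $x\mapsto\tfrac12\|x^t-\eta G_{x^\star}(x^t)-x\|^2$ at its minimizer $\hat x^{t+1}$, then expands and applies strong monotonicity at the point $\hat x^{t+1}$ together with a Young's-inequality estimate on $\langle G_{x^\star}(x^t)-G_{x^\star}(\hat x^{t+1}),\hat x^{t+1}-x^\star\rangle$, arriving directly at $\|\hat x^{t+1}-x^\star\|^2\le \tfrac{1}{1+2\alpha\eta-\eta^2L^2}\|x^t-x^\star\|^2$, which for $\eta=\alpha/L^2$ gives the factor $\tfrac{1}{\sqrt{1+\alpha^2/L^2}}$ without any further weakening. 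Your route---fixed-point identity plus nonexpansiveness and a direct expansion---is the textbook projected-gradient estimate and yields the sharper intermediate factor $\sqrt{1-\alpha^2/L^2}$, which you then relax to $\tfrac{1}{\sqrt{1+\alpha^2/L^2}}$ via $\sqrt{1-a}\le 1/\sqrt{1+a}$. Your argument is shorter and more elementary; the paper's argument, on the other hand, lands exactly on the stated constant without an extra inequality. You also spell out the verification that the contraction factor is below one (the paper just calls it ``an elementary argument''), and your computation there is correct.
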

\begin{proof}
Using the triangle inequality, we estimate
\begin{equation}\label{eqn:start_here}
\begin{aligned}
\|x^{t+1}-x^{\star}\|&=\|\proj_{\X}(x^t-\eta G_{x^t}(x^t))-x^{\star}\|\\
		     &\leq \|\proj_{\X}(x^t-\eta G_{x^{\star}}(x^t))-x^{\star}\|+\|\proj_{\X}(x^t-\eta G_{x^t}(x^t))-\proj_{\X}(x^t-\eta G_{x^{\star}}(x^t))\|\\
&\leq \|\proj_{\X}(x^t-\eta G_{x^{\star}}(x^t))-x^{\star}\|+\eta\| G_{x^t}(x^t))- G_{x^{\star}}(x^t))\|\\
&\leq  \|\proj_{\X}(x^t-\eta G_{x^{\star}}(x^t))-x^{\star}\|+\eta\sqrt{\sum_i \beta_i^2\gamma_i^2}\cdot \|x^t-x^{\star}\|,
\end{aligned}
\end{equation}
where the last inequality follows from Lemma~\ref{lem:dev_game_jac}.
The rest of the argument is standard. We will simply show that the first term on
the on right-side is a fraction of $\|x^t-x^*\|$. To this end, set
$y^{t+1}=\proj_{\X}(x^t-\eta G_{x^{\star}}(x^t))$. Since the function $x\mapsto
\frac{1}{2}\|x^t-\eta G_{x^{\star}}(x^t)-x\|^2$ is 1-strongly convex and $y^{t+1}$ is its minimizer over $\X$, we deduce
\begin{align*}
\frac{1}{2}\|y^{t+1}-x^{\star}\|^2
\leq \frac{1}{2}\|x^t-\eta G_{x^{\star}}(x^t)-x^{\star}\|^2-\frac{1}{2}\|x^t-\eta G_{x^{\star}}(x^t)-y^{t+1}\|^2.\end{align*}
Expanding the right hand side yields
\begin{equation}\label{eqn:young}
\begin{aligned}
\frac{1}{2}\|y_{t+1}-x^{\star}\|^2&\leq \frac{1}{2}\|x^t-x^{\star}\|^2-\eta\langle   G_{x^{\star}}(x^t), y^{t+1}-x^{\star}\rangle -\frac{1}{2}\|y^{t+1}-x^t\|^2\\
&=\frac{1}{2}\|x^t-x^{\star}\|^2-\eta\langle  G_{x^{\star}}(y^{t+1}), y^{t+1}-x^{\star} \rangle\\
&\quad -\eta\langle   G_{x^{\star}}(x^{t})-G_{x^{\star}}(y^{t+1}), y^{t+1}-x^{\star}\rangle-\frac{1}{2}\|y^{t+1}-x^t\|^2.
\end{aligned}
\end{equation}
Strong convexity of the loss functions ensures
\begin{equation}\label{eqn:contract}
\eta\langle G_{x^{\star}}(y^{t+1}), y^{t+1}-x^{\star}\rangle\geq \eta\langle G_{x^{\star}}(y^{t+1})-G_{x^{\star}}(x^{\star}), y^{t+1}-x^{\star}\rangle\geq \alpha\eta \|y^{t+1}-x^{\star}\|^2.
\end{equation}
Young's inequality in turn implies
\begin{equation}\label{eqn:lastone}
\begin{aligned}
\eta |\langle G_{x^{\star}}(x^{t})-G_{x^{\star}}(y^{t+1}), y^{t+1}-x^{\star} \rangle| &\leq  \frac{\|G_{x^{\star}}(x^{t})-G_{x^{\star}}(y^{t+1})\|^2}{2L^2}+ \frac{\eta^2 L^2\|y^{t+1}-x^{\star} \|^2}{2}\\
&\leq \frac{\|x^t-y^{t+1}\|^2}{2}+\frac{\eta^2 L^2 \|y^{t+1}-x^{\star}\|^2}{2},
\end{aligned}
\end{equation}
where the last inequality follows from Assumption~\ref{assump:smoothnessL}.
Putting the estimates \eqref{eqn:young}-\eqref{eqn:lastone} together yields
$$\frac{1}{2}\|y^{t+1}-x^{\star}\|^2 \leq \frac{1}{2}\|x^{t}-x^{\star}\|^2-\frac{2\alpha\eta-\eta^2L^2}{2}\|y^{t+1}-x^{\star}\|^2.$$
Rearranging gives
$\|y^{t+1}-x^{\star}\|^2\leq \tfrac{1}{1+2\alpha\eta-\eta^2L^2}\|x^{t}-x^{\star}\|^2.$
Returning to \eqref{eqn:start_here} we therefore conclude
$$\|x^{t+1}-x^{\star}\|\leq \left(\frac{1}{\sqrt{1+2\alpha\eta-\eta^2L^2}}+\eta\sqrt{\sum_i \beta_i^2\gamma_i^2}\right)\|x^t-x^{\star}\|.$$
Plugging in $\eta=\frac{\alpha}{L^2}$  yields the claimed estimate \eqref{eqn:grad_discnt_dcrs}. An elementary argument shows that in the assumed regime $\rho<\frac{1}{2+\sqrt{2}}$, the term in the parentheses in \eqref{eqn:grad_discnt_dcrs} is indeed smaller than one.
\end{proof}

\subsection{Repeated Stochastic  Gradient Method}
\label{sec:stochastic_RGP}
As observed earlier, the repeated gradient method is still largely a conceptual
algorithm since an expectation has to be computed in every iteration. We next analyze an implementable algorithm that approximates the expectation in each step of gradient with an unbiased estimator. Namely, in each iteration $t$ of the
{\em  repeated stochastic gradient method}, each player $i\in [n]$ performs the update:
\begin{equation*}\left\{
\begin{aligned}
&{\rm Sample }~z^t_i\sim \mathcal{D}_i(x^t) \\
&{\rm Set } ~x^{t+1}_{i}=\proj_{\X_i} \left(x^t_i-\eta \nabla_i \ell_i(x_i^t,x_{-i}^t, z_i^t)\right)
\end{aligned}\right\}.
\end{equation*}
We will analyze the method under the following standard variance assumption.

\begin{assumption}[Finite variance]\label{assump_fin_var}
There exists a constant  $\sigma\geq 0$ satisfying
 $$\E_{z\sim \mathcal{D}_{\pi}(x)}\|G_{x}(x)-g(x,z)\|^2\leq \sigma^2\qquad \textrm{for all }x\in \X.$$
 \end{assumption}

Convergence analysis for the repeated stochastic gradient method will follow from the
following simple observation. Noticing the equality $G_x(x)=\E_{z\sim
    \mathcal{D}_{\pi}(x)} g(x,z)$, Lemma~\ref{lem:dev_game_jac} directly implies
    that
$$\|G_x(x)-G_{x^{\star}}(x)\|\leq \alpha\rho \|x-x^{\star}\|.$$
That is, we may interpret the repeated stochastic gradient method as a standard stochastic gradient algorithm applied to the static problem $\mathcal{G}(x^{\star})$ with a bias that is linearly bounded by the distance to  $x^{\star}$. With this realization, we can forget about dynamics and simply analyze the stochastic gradient method on a static problem with this special form of bias. Appendix~\ref{sec:append:generic_res} does exactly that. In particular, the following is a direct consequence of Theorem~\ref{thm:sgd} in Appendix~\ref{sec:append:generic_res}.

In the following theorem, let $\mb{E}_t$ denote the conditional expectation with respect to the $\sigma$-algebra generated by $(x^l)_{l=1,\ldots,t}$.
\begin{theorem}[One-step improvement]\label{thm:one_step_sgd}
Suppose that Assumptions~\ref{assump:convex}-\ref{assump_fin_var} hold and that we are in the regime $\rho<1$. Then with any $\eta<\frac{\alpha(1-\rho)}{8 L^2}$, the repeated stochastic gradient method generates a sequence $x^t$ satisfying
$$\mathbb{E}_t\|x^{t+1}-x^{\star}\|^2\leq \frac{1+2\eta\alpha \rho +2\eta^2\alpha^2 \rho^2}{1+2\eta\alpha(\frac{1+\rho}{2})}\|x^t-x^{\star}\|^2+\frac{4\eta^2\sigma^2}{1+2\eta\alpha(\frac{1+\rho}{2})},$$
where $x^{\star}$ is the %
performatively stable equilibrium of the game
\eqref{eqn:main_problem}.
\end{theorem}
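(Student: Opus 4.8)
The plan is to recognize repeated stochastic gradient play as ordinary projected stochastic gradient descent applied to the \emph{static} game $\mathcal{G}(x^{\star})$, corrupted by a bias whose size is linearly controlled by the distance to $x^{\star}$, and then to specialize the generic one-step estimate of Theorem~\ref{thm:sgd}. Concretely, I would write one iteration as
\[
x^{t+1}=\proj_{\X}\bigl(x^t-\eta\bigl(G_{x^{\star}}(x^t)+b^t+\xi^t\bigr)\bigr),
\]
where $b^t:=G_{x^t}(x^t)-G_{x^{\star}}(x^t)$ is a deterministic bias and $\xi^t:=g(x^t,z^t)-G_{x^t}(x^t)$ is stochastic noise. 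Since $z^t_i\sim\mathcal{D}_i(x^t)$ conditionally on $x^t$, Assumption~\ref{assump_fin_var} gives $\mathbb{E}_t[\xi^t]=0$ and $\mathbb{E}_t\|\xi^t\|^2\le\sigma^2$. The structural fact that makes everything work---already recorded just before the theorem---is the bias bound
\[
\|b^t\|=\|G_{x^t}(x^t)-G_{x^{\star}}(x^t)\|\le\sqrt{\textstyle\sum_{i}\beta_i^2\gamma_i^2}\,\|x^t-x^{\star}\|=\alpha\rho\,\|x^t-x^{\star}\|,
\]
which is Lemma~\ref{lem:dev_game_jac} applied with $x=y=x^t$ and $y'=x^{\star}$, combined with $\rho=\sqrt{\sum_i(\beta_i\gamma_i/\alpha)^2}$.

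Next I would verify the remaining hypotheses of Theorem~\ref{thm:sgd} for the static vector field $G_{x^{\star}}(\cdot)$. By Assumption~\ref{assump:convex}(1) the game $\mathcal{G}(x^{\star})$ is $\alpha$-strongly monotone, so $\langle G_{x^{\star}}(x)-G_{x^{\star}}(x'),x-x'\rangle\ge\alpha\|x-x'\|^2$ for all $x,x'\in\X$; since $x^{\star}$ is by definition a Nash equilibrium of $\mathcal{G}(x^{\star})$, first-order optimality gives $\langle G_{x^{\star}}(x^{\star}),x-x^{\star}\rangle\ge 0$ for all $x\in\X$, and combining the two yields the variational inequality $\langle G_{x^{\star}}(x),x-x^{\star}\rangle\ge\alpha\|x-x^{\star}\|^2$ for all $x\in\X$. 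Finally $G_{x^{\star}}(\cdot)$ is $L$-Lipschitz by Assumption~\ref{assump:smoothnessL}. With these facts in hand, the claimed recursion is the conclusion of Theorem~\ref{thm:sgd} specialized to strong-monotonicity constant $\alpha$, Lipschitz constant $L$, bias constant $\alpha\rho$, variance $\sigma^2$, and step-size $\eta<\tfrac{\alpha(1-\rho)}{8L^2}$.

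For the argument in place of a black-box citation, the generic computation proceeds exactly as in the proof of Repeated gradient play above: using the strong-convexity characterization of the projection for $x^{t+1}=\proj_{\X}(x^t-\eta(G_{x^{\star}}(x^t)+b^t+\xi^t))$, one expands $\|x^{t+1}-x^{\star}\|^2$, invokes $\langle G_{x^{\star}}(x^t),x^{t+1}-x^{\star}\rangle$ together with the variational inequality for $x^{\star}$, trades the term $\|x^{t+1}-x^t\|^2$ against $\|x^{t+1}-x^{\star}\|^2$ and $\|x^t-x^{\star}\|^2$ through the $L$-Lipschitzness of $G_{x^{\star}}$ and Young's inequality, inserts $\|b^t\|\le\alpha\rho\|x^t-x^{\star}\|$ via another Young split to produce the $2\eta\alpha\rho$ and $2\eta^2\alpha^2\rho^2$ terms, and takes $\mathbb{E}_t[\cdot]$ to kill the martingale term $\langle\xi^t,\cdot\rangle$ while retaining $\eta^2\mathbb{E}_t\|\xi^t\|^2\le\sigma^2\eta^2$ (the constant $4$ arising from the last Young split). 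Collecting $\|x^{t+1}-x^{\star}\|^2$ on the left gives precisely the stated numerator and denominator. I expect the only delicate point to be pure bookkeeping: choosing the Young weights so that the $O(\eta^2 L^2)$ remainders are absorbed exactly when $\eta<\tfrac{\alpha(1-\rho)}{8L^2}$ and so that the coefficients come out as written. The genuinely new content beyond Theorem~\ref{thm:sgd} is the bias estimate $\|b^t\|\le\alpha\rho\|x^t-x^{\star}\|$ and the verification that $G_{x^{\star}}(\cdot)$ satisfies the strong-monotonicity, variational, and Lipschitz hypotheses; the rest is routine.
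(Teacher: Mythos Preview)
Your proposal is correct and follows exactly the paper's approach: the paper's proof is a one-line citation of the generic biased-SGD estimate (Theorem~\ref{thm:sgd}) with $G(x)=G_{x^{\star}}(x)$, $g^t=g(x^t,z^t)$, $C_t=D=0$, and $B=\alpha\rho$, which is precisely the instantiation you describe. Your additional verification that $G_{x^{\star}}$ is $\alpha$-strongly monotone and $L$-Lipschitz, that $x^{\star}$ solves the relevant VI, and that the bias/variance bounds hold via Lemma~\ref{lem:dev_game_jac} and Assumption~\ref{assump_fin_var} simply makes explicit what the paper leaves implicit.
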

\begin{proof}
This follows directly from applying Theorem~\ref{thm:sgd} in Appendix~\ref{sec:append:generic_res} with $G(x)=G_{x^{\star}}(x)$, $g^t=g(x^t,z^t)$, $C_t=D=0$, and $B=\alpha\rho$.
\end{proof}

With Theorem~\ref{thm:one_step_sgd} at hand, it is straightforward to obtain global efficiency estimates under a variety of step-size choices. One particular choice, highlighted by \cite{ghadimi2013optimal}, is the step-decay schedule that periodically cuts $\eta$ by a fraction. The resulting algorithm and its convergence guarantees are summarized in the following corollary.

\begin{corollary}[Repeated stochastic gradient method with a step-decay schedule]\label{cor:main_cor_sgd}
Suppose that Assumptions~\ref{assump:convex}-\ref{assump_fin_var} hold and we are in the regime $\rho<\frac{1}{2}$. Set $\eta_0:=\frac{\alpha(1-\rho)}{4}\cdot\min\{1,\frac{1}{2L^2} \}$. 
Consider running the repeated stochastic gradient method in $k=0,\ldots, K$ epochs,
for $T_k$ iterations each, with constant step-size $\eta_k=2^{-k}\eta_0$, and
such that the last iterate of epoch $k$ is used as the first iterate in epoch
$k+1$. Fix a target accuracy $\varepsilon>0$ and suppose we have available a
constant $R\geq \|x^0-x^{\star}\|^2$. Set
$$T_0=\left\lceil\tfrac{10}{(1-\rho)\alpha\eta_0}\log(\tfrac{2R}{\varepsilon})\right\rceil, ~~T_k=\left\lceil\tfrac{10\log(4)}{(1-\rho) \alpha\eta_k} \right\rceil~~\textrm{ for }~~k\geq 1, \qquad\textrm{and}\qquad K=\left\lceil1+\log_2\left(\tfrac{40\eta_0\sigma^2}{(1-\rho)\alpha\varepsilon}\right)\right\rceil.$$
The final iterate $x$ produced satisfies $\E\|x-x^\star\|^2\leq \varepsilon$, while
the total number of iterations of the repeated stochastic gradient method is at most
$$\mathcal{O}\left(\frac{L^2}{(1-\rho)\alpha^2}\cdot\log\left(\frac{2R}{\varepsilon}\right)+\frac{\sigma^2}{(1-\rho)^2\alpha^2\varepsilon}\right).$$
\end{corollary}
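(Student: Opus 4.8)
The plan is to regard the one-step estimate of Theorem~\ref{thm:one_step_sgd} as a biased linear recursion and to run the classical step-decay argument of \cite{ghadimi2013optimal} on top of it. Write $q(\eta):=\frac{1+2\eta\alpha\rho+2\eta^2\alpha^2\rho^2}{1+\eta\alpha(1+\rho)}$ and $r(\eta):=\frac{4\eta^2\sigma^2}{1+\eta\alpha(1+\rho)}$, so that Theorem~\ref{thm:one_step_sgd} reads $\mathbb{E}_t\|x^{t+1}-x^{\star}\|^2\le q(\eta)\|x^t-x^{\star}\|^2+r(\eta)$ for any admissible $\eta$. The first step is purely algebraic: rewrite
$$1-q(\eta)=\frac{\eta\alpha(1-\rho)-2\eta^2\alpha^2\rho^2}{1+\eta\alpha(1+\rho)},$$
and check that in the regime $\rho<\tfrac12$ and for every $\eta\le\eta_0$, with $\eta_0$ as in the statement, the subtracted $2\eta^2\alpha^2\rho^2$ is at most $\tfrac12\eta\alpha(1-\rho)$ while the denominator is at most $2$, so that $q(\eta)\le 1-\tfrac14\eta\alpha(1-\rho)\le\exp(-\tfrac14\eta\alpha(1-\rho))$ and $r(\eta)\le 4\eta^2\sigma^2$. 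The precise form of $\eta_0$, and in particular the constraint $\eta_0\le\frac{\alpha(1-\rho)}{8L^2}$ demanded by Theorem~\ref{thm:one_step_sgd}, is exactly what makes these two reductions go through.

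Second, I would analyze a single epoch $k$, in which the step size is frozen at $\eta_k=2^{-k}\eta_0$. Taking total expectations and unrolling the recursion $T_k$ times across the epoch, together with $1-q(\eta_k)\ge\tfrac14\eta_k\alpha(1-\rho)$, gives
$$\Delta_k\ \le\ q(\eta_k)^{T_k}\,\Delta_{k-1}\ +\ \frac{r(\eta_k)}{1-q(\eta_k)}\ \le\ q(\eta_k)^{T_k}\,\Delta_{k-1}\ +\ \frac{16\,\eta_k\sigma^2}{(1-\rho)\alpha},$$
where $\Delta_k$ is the expected squared distance to $x^\star$ of the last iterate of epoch $k$ and $\Delta_{-1}:=R\ge\|x^0-x^\star\|^2$. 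Thus each epoch contracts the ``signal'' $\Delta_{k-1}$ by the factor $q(\eta_k)^{T_k}$ and leaves a ``noise floor'' proportional to $\eta_k$; the key point is that halving the step size halves the noise floor, which is the whole reason the step-decay schedule is used.

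Third, I would close the recursion across epochs. Choosing $T_0$ so that $q(\eta_0)^{T_0}\le\frac{\varepsilon}{2R}$ — which by the exponential bound is guaranteed by the stated $T_0$ — yields $\Delta_0\le\tfrac{\varepsilon}{2}+\frac{16\eta_0\sigma^2}{(1-\rho)\alpha}$. For $k\ge1$, choosing $T_k$ so that $q(\eta_k)^{T_k}\le\tfrac14$ — exactly the stated $T_k$ up to the absolute constant — turns the per-epoch bound into $\Delta_k\le\tfrac14\Delta_{k-1}+\frac{16\eta_k\sigma^2}{(1-\rho)\alpha}$ with $\eta_k=\tfrac12\eta_{k-1}$, and a short induction then gives $\Delta_k\le 2^{-k}\tfrac{\varepsilon}{2}+C\,2^{-k}\frac{\eta_0\sigma^2}{(1-\rho)\alpha}$ for an absolute constant $C$. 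Picking $K$ so that the second term at $k=K$ is at most $\tfrac{\varepsilon}{2}$ — which is precisely the stated $K$ up to constants — yields $\mathbb{E}\|x-x^\star\|^2=\Delta_K\le\varepsilon$. The total iteration count is $\sum_{k=0}^K T_k$: the $k=0$ term is $\mathcal{O}(\frac{1}{(1-\rho)\alpha\eta_0}\log\frac{2R}{\varepsilon})$, while $\sum_{k\ge1}T_k$ is a geometric sum dominated by its last term, $\mathcal{O}(\frac{2^K}{(1-\rho)\alpha\eta_0})=\mathcal{O}(\frac{\sigma^2}{(1-\rho)^2\alpha^2\varepsilon})$; substituting the value of $\eta_0$ recovers the stated complexity.

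The genuinely fiddly part, and the step I would be most careful with, is the bookkeeping in the last paragraph: one must simultaneously track the geometrically shrinking ``signal'' and the ``noise floor'' that decays only because $\eta_k$ is halved, and make the three schedule parameters $T_0$, $T_k$, $K$ fit together so that both contributions land below $\varepsilon$ while $\sum_k T_k$ stays at the advertised rate. The conversion in the first paragraph of the biased one-step bound into a clean contraction $q(\eta)\le 1-\tfrac14\eta\alpha(1-\rho)$ is elementary but also needs attention, since the $+2\eta^2\alpha^2\rho^2$ bias term in the numerator of $q(\eta)$ would otherwise destroy the contraction; it is precisely the combination $\rho<\tfrac12$ with the smallness of $\eta_0$ that rescues it.
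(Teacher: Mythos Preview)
Your proposal is correct and follows essentially the same approach as the paper: reduce the one-step bound of Theorem~\ref{thm:one_step_sgd} to a clean contraction $q(\eta)\le 1-c\eta\alpha(1-\rho)$ valid for $\eta\le\eta_0$ under $\rho<\tfrac12$, and then run the step-decay schedule on top of it. The only difference is presentational: the paper obtains the contraction constant $c=\tfrac{1}{10}$ via a slightly different algebraic chain and then defers the entire epoch-by-epoch bookkeeping to \cite[Lemma~B.2]{drusvyatskiy2020stochastic}, whereas you spell that induction out explicitly (with $c=\tfrac14$), which is why your constants land a bit off from the stated $T_0,T_k,K$ but still deliver the same $\mathcal{O}(\cdot)$ complexity. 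One small point to make explicit in your algebraic step: the bounds ``$2\eta^2\alpha^2\rho^2\le\tfrac12\eta\alpha(1-\rho)$'' and ``denominator $\le 2$'' implicitly use $\alpha\le L$, which follows from Assumptions~\ref{assump:convex} and~\ref{assump:smoothnessL}.
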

\begin{proof}
Consider a sequence $x^0, x^1,\ldots, x^t$ generated by the stochastic gradient method
with a fixed step-size $\eta\leq \eta_0$. Using Theorem~\ref{thm:one_step_sgd} together with the tower rule for conditional expectations, we deduce
\begin{equation}\label{eqn:one_step_equivaleb}
\mathbb{E}\|x^{t+1}-x^{\star}\|^2\leq \frac{1+2\eta\alpha \rho +2\eta^2\alpha^2 \rho^2}{1+2\eta\alpha(\frac{1+\rho}{2})}\E\|x^t-x^{\star}\|^2+\frac{4\eta^2\sigma^2}{1+2\eta\alpha(\frac{1+\rho}{2})}.
\end{equation}
Our choice of $\eta_0$ ensures
$$\frac{1+2\eta\alpha \rho +2\eta^2\alpha^2
    \rho^2}{1+2\eta\alpha(\frac{1+\rho}{2})}\leq
    \frac{1+2\eta\alpha\cdot\frac{1+3\rho}{4}}{1+2\eta\alpha\cdot\frac{1+\rho}{2}}=1-\frac{2\eta\alpha(\frac{1-\rho}{4})}{1+2\eta\alpha(\frac{1+\rho}{2})}\leq
    1-\tfrac{1-\rho}{10}\eta\alpha.$$
Therefore iterating \eqref{eqn:one_step_equivaleb} we obtain the estimate
$$\mathbb{E}\|x_{t+1}-x^{\star}\|^2\leq \left(1-\psi(\eta)\right)^{t+1}\|x_0-x^{\star}\|^2+\Gamma\eta,$$
where we set $\psi(\eta)=c\eta$ with $c=\tfrac{1-\rho}{10}\alpha$ and $\Gamma=\frac{40\sigma^2}{\alpha(1-\rho)}$.
 The result now follows directly from \cite[Lemma B.2]{drusvyatskiy2020stochastic}.
\end{proof}

The efficiency estimate in Corollary~\ref{cor:main_cor_sgd} coincides with the standard efficiency estimate for the stochastic gradient method on static problems, up to multiplication by $(1-\rho)^{-2}$.

\section{Monotonicity of Decision-Dependent Games}\label{sec:monotonicity}
Our next goal is to develop algorithms for finding true Nash equilibria of the
 game \eqref{eqn:main_problem}. As the first step, this section presents
 sufficient conditions for the game to be monotone along with some examples. We
 note, however, that the sufficient conditions we present are strong, and
 necessarily so because the game  \eqref{eqn:main_problem} is typically not
 monotone. When specialized to the single player setting $n=1$, the sufficient
 conditions we derive are identical to those in \cite{miller2021outside} although the proofs are entirely different.

First, we impose the following mild smoothness assumption.

\begin{assumption}[Smoothness of the distribution]\label{assump:smooth_distr}
For each index $i\in [n]$ and point $x\in \X$, the map $u_i\mapsto\E_{z_i\sim \mathcal{D}(u_i,x_{-i})} \ell_i(x,z_i)$ is differentiable at $u_i=x_i$ and its derivative is continuous in $x$.
\end{assumption}

Under Assumption~\ref{assump:smooth_distr}, the chain rule implies the following expression for the derivative of player $i$'s loss function
$$\nabla_i  \LL_i(x_i,x_{-i})= \frac{d}{du_i}\E_{z_i\sim\mc{D}_i(x_i,x_{-i})}[\ell_i(u_i,x_{-i},z_i)]\Big|_{u_i=x_i}+\frac{d}{du_i}\E_{z_i\sim\mc{D}_i(u_i,x_{-i})}[\ell_i(x_i,x_{-i},z_i)]\Big|_{u_i=x_i}.$$
To simplify notation, define
\[H_{i,x}(y):=\frac{d}{du_i}\E_{z_i\sim \mathcal{D}(u_i,x_{-i})} \ell_i(y,z_i)\Big|_{u_i=x_i}.\]
Therefore, we may equivalently write
$$\nabla_i  \LL_i(x_i,x_{-i})= G_{i,x}(x)+H_{i,x}(x)$$
where $G_{i,x}(x)$ is defined in \eqref{eqn:game_jacobian}. Stacking together the individual partial gradients $H_{i,x}(y)$ for each player,
we set $H_x(y)=(H_{1,x}(y),\ldots,H_{n,x}(y))$. Therefore the vector of
individual gradients for \eqref{eqn:main_problem} is simply the map $D(x):=G_x(x)+H_x(x).$ Thus the game  \eqref{eqn:main_problem} is monotone, as long as $D(x)$ is a monotone mapping.

The sufficient conditions we present in Theorem~\ref{thm:monotone} are simply that we are in the regime $\rho<\frac{1}{2}$ and that the map $x\mapsto H_x(y)$ is monotone for any $y$. The latter can be understood as requiring that for any $y\in \X$, the auxiliary game wherein each player aims to solve
\[\min_{x_i\in \X} ~\E_{z_i\sim \mc{D}_i(x_i,x_{-i})} \ell_i(y,z_i).\]
is monotone. In the single player setting $n=1$, this simply means that the
function $x\mapsto\E_{z_i\sim \mc{D}_i(x)} \ell(y,z_i)$ is convex for any fixed $y\in \X$, thereby reducing exactly to the requirement in \cite[Theorem 3.1]{miller2021outside}.

The proof of Theorem~\ref{thm:monotone} crucially relies on the following useful lemma.

\begin{lemma}
Suppose that Assumptions~\ref{assump:convex}, \ref{assum:perm_pred}, \ref{assump:smooth_distr} hold.
 For any $x\in \X$, the map $H_x(y)$ is Lipschitz continuous in $y$ with parameter $\sqrt{\sum_{i=1}^n\beta_i^2\gamma_i^2}$.
       \label{lem:Hyx_Lipschitz}
\end{lemma}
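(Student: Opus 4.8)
The plan is to reduce to a per-player estimate and then use Pythagoras. Since $H_x(y)=(H_{1,x}(y),\dots,H_{n,x}(y))$ is a stacking of the individual vectors,
\[
\|H_x(y)-H_x(y')\|^2=\sum_{i=1}^n\|H_{i,x}(y)-H_{i,x}(y')\|^2,
\]
so it suffices to show $\|H_{i,x}(y)-H_{i,x}(y')\|\le\beta_i\gamma_i\,\|y-y'\|$ for each $i\in[n]$. Fixing $x,y,y'\in\X$ and using linearity of the derivative (which exists by Assumption~\ref{assump:smooth_distr}), I would write
\[
H_{i,x}(y)-H_{i,x}(y')=\frac{d}{du_i}\bigg|_{u_i=x_i}\E_{z_i\sim\mathcal{D}_i(u_i,x_{-i})}\big[\delta_i(z_i)\big],\qquad \delta_i(z_i):=\ell_i(y,z_i)-\ell_i(y',z_i),
\]
so the task splits into two ingredients: (i) the scalar integrand $z_i\mapsto\delta_i(z_i)$ is Lipschitz in $z_i$ with constant $\beta_i\|y-y'\|$; and (ii) differentiating $u_i\mapsto\E_{z_i\sim\mathcal{D}_i(u_i,x_{-i})}[h(z_i)]$ at $u_i=x_i$ produces a vector of norm at most $\gamma_i\,\mathrm{Lip}(h)$ for any Lipschitz scalar $h$ on $\Z_i$.

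For ingredient (i), I would invoke the fundamental theorem of calculus along the segment $y_s:=y'+s(y-y')$ together with $C^1$-smoothness of $\ell_i$ in its decision argument to write $\delta_i(z_i)=\int_0^1\langle\nabla_i\ell_i(y_s,z_i),\,y-y'\rangle\,ds$ (in the typical separable case this is immediate, as $\nabla_i\ell_i$ is then the full gradient of $\ell_i$); then for any $\bar z_i\in\Z_i$, the $\beta_i$-Lipschitz continuity of $z_i\mapsto\nabla_i\ell_i(y_s,z_i)$ from Assumption~\ref{assump:convex} and Cauchy--Schwarz give
\[
|\delta_i(z_i)-\delta_i(\bar z_i)|\le\int_0^1\|\nabla_i\ell_i(y_s,z_i)-\nabla_i\ell_i(y_s,\bar z_i)\|\,\|y-y'\|\,ds\le\beta_i\|y-y'\|\,d(z_i,\bar z_i).
\]
For ingredient (ii), I would fix a unit vector $v\in\R^{d_i}$ and, using differentiability of the map at $x_i$, identify the $v$-component of its gradient with $\lim_{t\to0}t^{-1}\big(\E_{z_i\sim\mathcal{D}_i(x_i+tv,x_{-i})}[h]-\E_{z_i\sim\mathcal{D}_i(x_i,x_{-i})}[h]\big)$; Kantorovich--Rubinstein duality \eqref{eqn:KR} applied to $h/\mathrm{Lip}(h)\in\mathrm{Lip}_1$, followed by Assumption~\ref{assum:perm_pred}, bounds the increment inside the limit by $\mathrm{Lip}(h)\cdot W_1(\mathcal{D}_i(x_i+tv,x_{-i}),\mathcal{D}_i(x_i,x_{-i}))\le\mathrm{Lip}(h)\,\gamma_i|t|$, so dividing by $|t|$, letting $t\to0$, and taking the supremum over unit $v$ yields (ii).

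Putting $h=\delta_i$ (so $\mathrm{Lip}(h)\le\beta_i\|y-y'\|$) into (ii) gives $\|H_{i,x}(y)-H_{i,x}(y')\|\le\beta_i\gamma_i\|y-y'\|$, and summing the squares over $i$ finishes the proof, since $\sqrt{\sum_i\beta_i^2\gamma_i^2}$ is exactly the claimed Lipschitz constant. I expect the only delicate points — as opposed to routine bookkeeping — to be the interchange of the limit in $t$ with the two expectations in ingredient (ii), which is covered by the paper's standing uniform-integrability conventions, and, when $x_i$ lies on the boundary of $\X_i$, the need to restrict $v$ to feasible directions (or to assume $\mathcal{D}_i$ extends Lipschitz-continuously to a neighborhood) so that Assumption~\ref{assum:perm_pred} still applies to the perturbed argument $(x_i+tv,x_{-i})$. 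The one genuinely conceptual observation is ingredient (i): replacing the integrand $\ell_i(y,\cdot)$ by $\ell_i(y',\cdot)$ perturbs it by a function that is $O(\|y-y'\|)$-Lipschitz in $z_i$, which is precisely what lets the $\gamma_i$-Lipschitz decision-dependence of $\mathcal{D}_i$ be converted into a Lipschitz bound on $H_x(\cdot)$, in the same spirit as Lemma~\ref{lem:dev_game_jac}.
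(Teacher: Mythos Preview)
Your proof is correct and follows essentially the same approach as the paper's: both write $H_{i,x}(y)-H_{i,x}(y')$ as the $u_i$-derivative of the expected difference $\ell_i(y,\cdot)-\ell_i(y',\cdot)$, apply the fundamental theorem of calculus along the segment from $y'$ to $y$, invoke the $\beta_i$-Lipschitz continuity of $z_i\mapsto\nabla_i\ell_i$ together with the $\gamma_i$-Lipschitz dependence of $\mathcal{D}_i$, and then sum the squares over $i$. The only cosmetic difference is the order of operations: the paper swaps the $u_i$-derivative inside the $s$-integral and applies Lemma~\ref{lem:dev_game_jac} to the vector-valued map $u_i\mapsto\E_{z_i\sim\mathcal{D}_i(u_i,x_{-i})}\nabla_i\ell_i(\gamma(s),z_i)$ for each fixed $s$, whereas you first establish that the scalar $\delta_i$ is $\beta_i\|y-y'\|$-Lipschitz in $z_i$ and then apply Kantorovich--Rubinstein once.
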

\begin{proof}
Fix three points $x,x',y\in \X$. Player $i$'s coordinate of $H_{x'}(x)-H_{x'}(y)$ is simply
$$H_{i,x'}(x)-H_{i,x'}(y)=\frac{d}{du_i} \E_{z_i\sim \mathcal{D}_i(u_i,x_{-i}')}\left(\ell_i(x,z_i)- \ell_i(y,z_i)\right)\Big|_{u_i=x_i'}.$$
 Setting $\gamma(s)=y+s(x-y)$ for any $s\in (0,1)$, the fundamental theorem of calculus ensures
\[
\ell_i(x,z_i)-\ell_i(y,z_i)=\int_{s=0}^1\langle \nabla_i
\ell_i(\gamma(s),z_i),x-y\rangle~ ds.
\]
Therefore differentiating, taking an expectation, and using the Cauchy-Schwarz inequality we deduce
\begin{equation}\label{eqn:intergralmean}
\|H_{i,x'}(x)-H_{i,x'}(y)\|\leq \int_{s=0}^1\left\|\frac{d}{du_i} \E_{z_i\sim
    \mathcal{D}_i(u_i,x_{-i}')}\nabla_i \ell_i(\gamma(s),z_i)\Big|_{u_i=x_i'} \right\|\cdot \|x-y\|~ ds.
\end{equation}
Now for any $s\in (0,1)$, Lemma~\ref{lem:dev_game_jac} guarantees that the map 
$u_i\mapsto \E_{z_i\sim \mathcal{D}_i(u_i,x_{-i}')}\nabla_i \ell_i(\gamma(s),z_i)$ is Lipschitz continuous with parameter $\beta_i\gamma_i$ and therefore its derivative is upper-bounded in norm by $\beta_i\gamma_i$.
We therefore deduce that the right hand side of \eqref{eqn:intergralmean} is
upper bounded by $\beta_i\gamma_i\|x-y\|$. Applying this argument to each
player leads to the claimed Lipschitz constant on $H_x(y)$ 
with respect to $x$.
\end{proof}

\begin{theorem}[Monotonicity of the decision-dependent game]\label{thm:monotone}
  Suppose that Assumptions~\ref{assump:convex}, \ref{assum:perm_pred}, and
  \ref{assump:smooth_distr} hold, and that we are in the regime $\rho<\frac{1}{2}$ and
    the map $x\mapsto H_x(y)$ is monotone for any $y$. The game
     \eqref{eqn:main_problem} is strongly monotone with parameter $\left(1-2\rho\right)\alpha$.
\end{theorem}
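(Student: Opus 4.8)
The plan is to show directly that $D(x) = G_x(x) + H_x(x)$ is $(1-2\rho)\alpha$-strongly monotone, i.e. that $\langle D(x) - D(x'), x - x'\rangle \geq (1-2\rho)\alpha\|x-x'\|^2$ for all $x, x' \in \X$. I would split the difference $D(x) - D(x')$ in a way that isolates a "clean" strongly monotone piece coming from the static game at a fixed parameter value, and then control the remaining cross terms using the Lipschitz-in-parameter estimates already established (Lemma~\ref{lem:dev_game_jac} for $G$, Lemma~\ref{lem:Hyx_Lipschitz} for $H$).

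Concretely, fix $x, x' \in \X$ and write
\[
D(x) - D(x') = \underbrace{\big(G_x(x) - G_x(x')\big)}_{(A)} + \underbrace{\big(G_x(x') - G_{x'}(x')\big)}_{(B)} + \underbrace{\big(H_x(x) - H_x(x')\big)}_{(C)} + \underbrace{\big(H_x(x') - H_{x'}(x')\big)}_{(D)}.
\]
For term $(A)$, strong monotonicity of the static game $\mathcal{G}(x)$ (Assumption~\ref{assump:convex}(1)) gives $\langle (A), x - x'\rangle \geq \alpha\|x-x'\|^2$. For term $(C)$, the monotonicity hypothesis on $y' \mapsto$ well, here we need to be careful: the assumption is that $x \mapsto H_x(y)$ is monotone for fixed $y$; applied with $y = x$ this does not immediately match $(C)$, which varies the \emph{second} argument of $H$. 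So instead I would regroup: write $H_x(x) - H_{x'}(x') = \big(H_x(x) - H_{x'}(x)\big) + \big(H_{x'}(x) - H_{x'}(x')\big)$. The first bracket, paired with $x - x'$, is $\geq 0$ by the monotonicity assumption on $x \mapsto H_x(x)$ (taking $y = x$ in the statement — this is exactly the case $y$ equal to the point). Wait — the assumption says monotone "for any $y$", so in particular the map $x \mapsto H_x(x)$ obtained by also moving $y$ is not covered; what \emph{is} covered is $x \mapsto H_x(y_0)$ for each fixed $y_0$. I would therefore take $y_0 = x'$ (say): $\langle H_x(x') - H_{x'}(x'), x - x'\rangle \geq 0$ by monotonicity of $x \mapsto H_x(x')$. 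This handles $(D)$ with a favorable sign. The leftover terms are then $(B) = G_x(x') - G_{x'}(x')$ and $H_x(x) - H_x(x')$ (term $(C)$), which I bound in norm: $\|(B)\| \leq \sqrt{\sum_i \beta_i^2\gamma_i^2}\,\|x-x'\| = \alpha\rho\|x-x'\|$ by Lemma~\ref{lem:dev_game_jac}, and $\|(C)\| \leq \sqrt{\sum_i \beta_i^2\gamma_i^2}\,\|x-x'\| = \alpha\rho\|x-x'\|$ by Lemma~\ref{lem:Hyx_Lipschitz}.

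Combining via Cauchy–Schwarz,
\[
\langle D(x)-D(x'), x-x'\rangle \geq \alpha\|x-x'\|^2 + 0 + 0 - \alpha\rho\|x-x'\|^2 - \alpha\rho\|x-x'\|^2 = (1-2\rho)\alpha\|x-x'\|^2,
\]
which is the claim; when $\rho < \tfrac12$ the constant is positive. The main thing to get right is the bookkeeping in the regrouping step: one must partition $D(x) - D(x')$ so that the two genuinely "signed" contributions (static-game strong monotonicity of $G$, and monotonicity of $x \mapsto H_x(y_0)$ for a \emph{single} fixed $y_0$) land on matched arguments, while the two discrepancy terms — one measuring how $G_{(\cdot)}(x')$ moves with its parameter, one measuring how $H_x(\cdot)$ moves with its evaluation point — are each exactly of the form controlled by the two deviation lemmas with constant $\sqrt{\sum_i\beta_i^2\gamma_i^2}=\alpha\rho$. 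I do not expect any analytic difficulty beyond this; the smoothness Assumption~\ref{assump:smooth_distr} is only needed to justify the chain-rule decomposition $\nabla_i\LL_i = G_{i,x}(x) + H_{i,x}(x)$ that the whole argument rests on.
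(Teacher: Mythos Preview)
Your proposal is correct and follows essentially the same approach as the paper: decompose $D(x)-D(x')$ into four pieces, extract $\alpha\|x-x'\|^2$ from the static-game strong monotonicity of $G$, extract a nonnegative term from the monotonicity of $x\mapsto H_x(y)$ at a fixed $y$, and control the two remaining cross terms by the deviation Lemmas~\ref{lem:dev_game_jac} and~\ref{lem:Hyx_Lipschitz}, each contributing $-\alpha\rho\|x-x'\|^2$. The only cosmetic difference is the choice of anchor point (you fix the parameter at $x$ for $G$ and the evaluation point at $x'$ for $H$, whereas the paper makes the symmetric choices $x'$ and $x$ respectively), which is immaterial.
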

\begin{proof}
    Fix an arbitrary pair of points $x, x'\in \X$. Expanding the following
    inner product, we have
    \begin{align}
\langle D(x)-D(x'),x-x'\rangle=\langle G_x(x)-G_{x'}(x'),x-x'\rangle +\langle H_x(x)-H_{x'}(x'),x-x'\rangle.\label{eqn:key_decomp_inner}
\end{align}
We estimate the first term as follows:
\begin{align}
\langle G_x(x)-G_{x'}(x'),x-x'\rangle&=\langle G_{x'}(x)-G_{x'}(x'),x-x'\rangle+\langle G_x(x)-G_{x'}(x),x-x'\rangle\notag\\
&\geq \alpha\|x-x'\|^2-\left(\sum_{i=1}^n\beta_i^2\gamma_i^2\right)^{1/2}\cdot \|x-x'\|^2\label{line:comput1}\\
&=\left(1-\rho\right)\alpha \|x-x'\|^2,\label{eqn:last}
\end{align}
where \eqref{line:comput1} follows from Assumption~\ref{assump:convex} and Lemma~\ref{lem:dev_game_jac}.
Next, we estimate the second term on the right side of \eqref{eqn:key_decomp_inner} as follows:
\begin{align}
\langle H_x(x)-H_{x'}(x'),x-x'\rangle&=\langle H_{x'}(x)-H_{x'}(x'),x-x'\rangle+\langle H_x(x)-H_{x'}(x),x-x'\rangle\notag\\
&\geq \langle H_{x'}(x)-H_{x'}(x'),x-x'\rangle\label{eqn:mon_proof}\\
&\geq -\|H_{x'}(x)-H_{x'}(x')\|\cdot \|x-x'\|\\
&\geq -\left(\sum_{i=1}^n\beta_i^2\gamma_i^2\right)^{1/2}\|x-x'\|^2,\label{eqn:key_step}
\end{align}
where \eqref{eqn:mon_proof} follows from the assumption that the map $x\mapsto H_x(y)$ is monotone and \eqref{eqn:key_step} follows from Lemma~\ref{lem:Hyx_Lipschitz}.
Combining \eqref{eqn:key_decomp_inner}, \eqref{eqn:last}, and \eqref{eqn:key_step} completes the proof.
\end{proof}

The following two examples of multiplayer performative prediction problems illustrate settings where the mapping $x\mapsto H_x(y)$ is indeed monotone and therefore Theorem~\ref{thm:monotone} may be applied to deduce monotonicity of the game.
\begin{example}[Revenue Maximization in Ride-Share Markets]\label{example:revenue_max}
{\rm
    Consider a ride share market with two firms that
each would like to maximize their revenue by setting the price $x_i$. The demand $z_i$ that
each ride share firm sees is influenced not only by the price they set but also
the price that their competitor sets. Suppose that firm $i$'s loss is given by
\[\ell_i(x_i,z_i)=-z_i^{\top}x_i+\frac{\lambda_i}{2}\|x_i\|^2\]
where $\lambda_i\geq 0$ is some
regularization parameter. Moreover, let us suppose  that the random demand $z_i$
takes the semi-parametric form
$z_i=\bz_{i}+A_ix_i+A_{-i}x_{-i}$, where $\bz_{i}$ follows some base
distribution $\PP_{i}$ and
the parameters $A_i$ and $A_{-i}$ capture price
elasticities to player $i$'s and its competitor's change in price, respectively.
The mapping $x\mapsto H_x(y)$ is monotone. Indeed, observe that $i$-th component of $H_x(y)$ is given by
\[H_{i,x}(y)=\E_{\bz_{i}\sim \PP_{i}}A_{i}^\top
    \nabla_{z_i}\ell_i(y_i,\bz_{i}+A_ix_i+A_{-i}x_{-i})=-A^\top_{i}y_i.\]
Hence, the map $x\mapsto H_x(y)$ is constant and is therefore trivially monotone.
}
\end{example}

The next example is a multiplayer extension of Example 3.2 in
\cite{miller2021outside} which models the single player decision-dependent
problem of predicting the final vote margin in an election contest.

\begin{example}[Strategic Prediction]\label{ex:strategic_prediction} %
    {\rm Consider
two election prediction platforms. Each platform seeks to predict the vote
margin. Not only can predicting a large margin in either direction dissuade
people from voting, but people may look at multiple  platforms  as a source for
information. Features $\theta$ such as past polling averages are drawn i.i.d.~from a
static distribution $\PP_\theta$. Each platform observes a sample drawn from the conditional distribution
\[z_i|\theta\sim \varphi_i(\theta)+A_ix_i+A_{-i}x_{-i}+w_i,\]
where $\varphi_i:\mb{R}^{d_i}\to \mb{R}$ is an arbitrary map, the parameter matrices
$A_i\in \mb{R}^{m_i\times d_i}$ and $A_{-i}\in \mb{R}^{m_i\times d_{-i}}$ are
fixed, and $w_i$ is a zero-mean random variable.  Each player seeks to predict $z_i$ by minimizing the loss
\[\ell_i(x_i,z_i)=\frac{1}{2}\|z_i-\theta^\top x_i\|^2.\]
We claim that  the map $x\mapsto H_x(y)$ is monotone as long as
$$\sqrt{n-1}\cdot\max_{i\in [n]}\|A_{-i}^\top A_i\|_{\rm op}\leq \min_{i\in [n]}\lambda_{\rm{min}}(A_i^{\top}A_i),$$
where $\lambda_{\rm{min}}$ denotes the minimal eigenvalue.
 The
interpretation of this condition is that the performative effects due to
interaction with competitors are small relative to any player's own performative
effects. To see
the claim, set $\bar{A}_i$ to be the
matrix satisfying $\bar{A}_ix=A_ix_i+A_{-i}x_{-i}$ and observe that the $i$-th component of $H_x(y)$ is given by
\begin{align*}
H_{i,x}(y)&=\E_{\theta, w_i}A_{i}^\top
    \nabla_{z_i}\ell_i(y_i,\varphi_i(\theta)+\bar{A}_ix+w_i)\\
    &=\E_{\theta,
    w_i}A^\top_{i}(\varphi_i(\theta)+\bar{A}_ix-\theta^\top y_i+w_i)\\
    &=A_i^{\top}A_i x_i+A_i^{\top}A_{-i} x_{-i}+\E_{\theta,
    w_i}A^\top_{i}(\varphi_i(\theta)-\theta^\top y_i+w_i).
    \end{align*}
Therefore, the map  $H_{i,x}(y)$ is affine in $x$. Consequently, monotonicity of $x\mapsto H_{i,x}(y)$ is equivalent to monotonicity of  the linear map
$x\mapsto V(x)+W(x)$,
where $V$ is the block diagonal matrix $V(x)=\textrm{Diag}(A_1^{\top}A_1,\ldots, A_n^{\top}A_n)x$ and we define the linear map $W(x)=(A_1^{\top}A_{-1}x_{-1}, \ldots,A_n^{\top}A_{-n}x_{-n})$.
The minimal eigenvalue of $V$ is simply $\min_{i\in [n]}\lambda_{\min}( A_i^{\top}A_i)$. Let us estimate the operator norm of $W$. To this end, set $s:=\max_{i}\|A_i^{\top}A_{-i}\|_{\rm op}$ and for any $x$ we compute
$$\|W(x)\|^2=\sum_{i=1}^n \|A_i^{\top}A_{-i}x_{-i}\|^2\leq \sum_{i=1}^n s^2\|x_{-i}\|^2= (n-1) s^2\|x\|^2.$$
Thus, under the stated assumptions, the operator norm of $W$ is smaller than the minimal eigenvalue of $V$, and therefore the sum $V+W$ is monotone.
 }
\end{example}

\section{Algorithms for Finding Nash Equilibria}
\label{sec:algorithms_nash}

In contrast to Section~\ref{sec:algorithms}, in this section we analyze algorithms that
converge to the Nash equilibrium of the $n$--player performative prediction game \eqref{eqn:main_problem} when the game is strongly monotone. Recall
that the Nash equilibrium
    $x^\star$ of this game is characterized by the relation
\[x^\star_i\in\amin_{x_i\in
\X_i}\E_{z_i\sim \mc{D}_i(x_i,x_{-i}^\star)}\ell_i(x_i,x_{-i}^\star,z_i)\qquad \forall i\in [n].\]
It is important to stress the distinction between the
performatively stable equilibria studied in Section~\ref{sec:algorithms} and the Nash equilibrium
$x^\star$ of the game \eqref{eqn:main_problem}: namely, for the latter concept the distribution
$\mc{D}_i$ explicitly depends on the optimization variable $x_i$ versus being
fixed at $x^\star=(x_i^\star,x_{-i}^\star)$.
Theorem~\ref{thm:monotone} (cf.~Section~\ref{sec:monotonicity}) gives  sufficient conditions under which the multiplayer
performative prediction game  \eqref{eqn:main_problem} is strongly monotone and hence, admits a unique Nash equilibrium.

In the following subsections, we
study natural learning dynamics---namely, variants of \emph{gradient play} as it
is referred to in the literature on learning and games---for continuous games seeking Nash equilibrium in different information settings. Specifically, we study
gradient-based learning methods where players update using an estimate of
their individual gradient
 consistent with the information available to them. It is important to contrast
 the gradient updates in Section~\ref{sec:algorithms} with the updates considered in this
 section: the Nash-seeking algorithms studied in this section all use gradient
 estimates of the individual gradient
 \begin{equation}
    \nabla_i  \LL_i(x_i,x_{-i})= \E_{z_i\sim\mc{D}_i(x)}[\nabla_i\ell_i(x,z_i)]+\frac{d}{du_i}\E_{z_i\sim\mc{D}_i(u_i,x_{-i})}[\ell_i(x,z_i)]\Big|_{u_i=x_i}
     \label{eq:individual_gradient}
 \end{equation}
 for each player $i\in[n]$, whereas performatively stable equilibrium seeking
 algorithms of Section~\ref{sec:algorithms} are defined such that the gradient
 update only uses the first term on the right hand side of
 \eqref{eq:individual_gradient}.

The main difficulty with applying gradient-based methods is that estimation of the second term on the right hand side of
 \eqref{eq:individual_gradient}, without some parametric assumptions on the distributions $\mc{D}_i$. Consequently, we start in Section~\ref{sec:dfo} with derivative free methods, wherein
 each player only has access to loss
 function queries. This does not require players to have any information on the
 distribution $\mc{D}_i$, but results in a slow algorithm, roughly with
 complexity $\mc{O}(\frac{d^2}{\varepsilon^2})$. In practice, the players may have some
 information on $\mc{D}_i$, and it's reasonable that they would exploit this
 information during learning. Hence, in Section~\ref{sec:stochastic_gradplay_pp}
 we impose a specific parametric assumption of the distributions and study
 \emph{stochastic gradient play}\footnote{This method is known as stochastic
     gradient play in the game theory literature; here we refer to as the \emph{stochastic gradient method} to be consistent with
 the naming convention of other methods in the paper.} under the assumption that each player knows their own
 ``influence'' parameter on the distribution. The resulting algorithm enjoys
 efficiency on the order of $\mc{O}(\frac{1}{\varepsilon})$.
 Section~\ref{sec:adapt_grad_play} instead develops a variant of a stochastic gradient method
 wherein each player adaptively learns their influence parameters, and uses
 their current estimate of those parameters to
 optimizing their loss function by taking a step along the direction of their
 individual gradient; the resulting process has efficiency on the
 order of $\mc{O}(\frac{d}{\varepsilon})$.

 \subsection{Derivative Free  Method for Performative Prediction Games }
 \label{sec:dfo}
As just alluded to, the first information setting we consider for multiplayer performative
prediction is the ``bandit feedback'' setting, where players have oracle
access to queries of their loss function only, and therefore are faced with the
problem of  creating an estimate
of their gradient from such queries.
This setting requires the least assumptions on what information
is available to players. In the optimization literature, when a first order oracle is not available, derivative free or zeroth order methods are typically applied.
Derivative free methods have been extended to games 
\citep{bravo2018bandit,drusvyatskiy2021improved}. The results in this section are direct consequences of the results in these papers. We concisely spell them out here in order to compare them with the convergence guarantees discussed in the following two sections.

 The {\em derivative free (gradient) method} we consider proceeds as follows. Fix a parameter $\delta>0$.
 In each iteration $t$, each player $i\in [n]$ performs the update:
\begin{equation}\label{eqn:DFO}\left\{
\begin{aligned}
    &{\rm Sample}~v_i^t\in \mb{S}_i\\
    &{\rm Sample }~z^t_{i}\sim \mathcal{D}_{i}(x^t+\delta v^t) \\
    &{\rm Set } ~x^{t+1}_{i}=\proj_{(1-\query)\X_i}\left(x_{i}^t-\eta_t \frac{d_i}{\query}\ell_i(x^t+\query
v^t,z_i^t)v_{i}^t\right)
\end{aligned}\right\}.
\end{equation}
Recall that $\mb{S}_i$ denotes the unit sphere with dimension $d_i$. The reason for
 projecting onto the set $(1-\delta)\X_i$ is simply to ensure that in the next
 iteration $t+1$, the strategy played by player $i$  remains in $\X_i$. We state
 the convergence guarantees of the method informally here because they are meant
 only as a baseline result. The formal statement for derivative free methods in general games can be found in \citet{drusvyatskiy2021improved}.\footnote{Though Theorem 2 in \citet{drusvyatskiy2021improved} is stated for deterministic games, it applies verbatim whenever the value of the loss function for each each player is replaced by an unbiased estimator of their individual loss functions---our setting.}

\begin{proposition}[Convergence rate of the derivative free method]\label{corzeroordernew} Consider an $n$--player
    decision-dependent game \eqref{eqn:main_problem}. Under reasonable smoothness and bounded variance assumptions, algorithm \eqref{eqn:DFO} with appropriately chosen parameters $\delta$ and $\eta_t$ will find a point $x$ satisfying $\mathbb{E}[\|x-x^{\star}\|^2]\leq \varepsilon$ after at most $O(\frac{d^2}{\varepsilon^2})$ iterations.
\end{proposition}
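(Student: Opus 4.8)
The plan is to obtain the statement as a direct corollary of the convergence theory for derivative-free learning in strongly monotone games in \citet{drusvyatskiy2021improved}, by checking that the game \eqref{eqn:main_problem} fits that framework in the regime where Theorem~\ref{thm:monotone} is in force. Concretely, three ingredients have to be verified: (i) the game \eqref{eqn:main_problem} is strongly monotone, so that it admits a unique Nash equilibrium $x^\star$; (ii) the vector of individual gradients $D(x)=G_x(x)+H_x(x)$ is Lipschitz continuous and each loss $\LL_i$ is smooth, which is what lets the one-point zeroth-order scheme control its bias at order $O(\delta)$; and (iii) the loss queries used in \eqref{eqn:DFO} are unbiased estimates of the individual losses $\LL_i$ with bounded variance, so that the \emph{stochastic} version of the derivative-free method applies, exactly as in the footnote preceding the statement.

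For (i), I would simply invoke Theorem~\ref{thm:monotone}: under Assumptions~\ref{assump:convex}, \ref{assum:perm_pred}, \ref{assump:smooth_distr}, the regime $\rho<\tfrac12$, and monotonicity of $x\mapsto H_x(y)$, the map $D$ is $(1-2\rho)\alpha$-strongly monotone, which is the quantitative version of the ``reasonable'' hypothesis in the statement. For (ii), Lipschitz continuity of $D$ follows by combining Assumption~\ref{assump:smoothnessL} (Lipschitzness of $x\mapsto G_y(x)$ for fixed $y$) with Lemma~\ref{lem:dev_game_jac} (which controls the dependence of $G_y(x)$ on $y$, hence handles the diagonal term $G_x(x)$) and Lemma~\ref{lem:Hyx_Lipschitz} together with an analogous smoothness hypothesis on $x\mapsto H_x(x)$; smoothness of each $\LL_i$ is then read off from the identity $\nabla_i\LL_i(x)=G_{i,x}(x)+H_{i,x}(x)$ in \eqref{eq:individual_gradient}. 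These regularity facts are precisely what the cited theorem needs in order to bound the zeroth-order bias and trade it off against the step size.

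For (iii), I would note the key structural point of the update \eqref{eqn:DFO}: in iteration $t$ player $i$ samples $z_i^t\sim\mathcal{D}_i(x^t+\delta v^t)$ and evaluates $\ell_i$ at the \emph{same} perturbed point $x^t+\delta v^t$, so $\E[\ell_i(x^t+\delta v^t,z_i^t)\mid x^t,v^t]=\LL_i(x^t+\delta v^t)$; thus $\ell_i(x^t+\delta v^t,z_i^t)$ is an unbiased estimate of the perturbed individual loss, which is exactly the situation covered verbatim by \citet[Theorem~2]{drusvyatskiy2021improved}. Under a standard bounded second-moment assumption on this query, the smoothed one-point estimator $\frac{d_i}{\delta}\ell_i(x^t+\delta v^t,z_i^t)v_i^t$ has second moment that scales like $1/\delta^2$; balancing this variance inflation against the $O(\delta)$ bias, with $\delta$ and $\eta_t$ chosen as prescribed there, yields $\E[\|x-x^\star\|^2]\le\varepsilon$ after $O(d^2/\varepsilon^2)$ iterations. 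The only genuine content — and the mild subtlety worth flagging — is precisely step (iii): one must keep the point at which $\ell_i$ is evaluated aligned with the point at which $\mathcal{D}_i$ is centered, which is why \eqref{eqn:DFO} queries the loss at $x^t+\delta v^t$ and not at $x^t$; beyond this observation, the proof is a direct appeal to \citet[Theorem~2]{drusvyatskiy2021improved} and the strong monotonicity supplied by Theorem~\ref{thm:monotone}.
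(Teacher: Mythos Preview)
Your proposal is correct and follows essentially the same approach as the paper, which does not give a formal proof but simply cites \citet[Theorem~2]{drusvyatskiy2021improved} and notes in the preceding footnote that the result applies verbatim once each player's loss value is replaced by an unbiased estimator---precisely your point (iii). Your verification of hypotheses (i)--(iii), including the observation that sampling $z_i^t\sim\mathcal{D}_i(x^t+\delta v^t)$ and evaluating $\ell_i$ at the same perturbed point is what yields unbiasedness, is in fact more detailed than anything the paper supplies.
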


The rate $O(\frac{d^2}{\varepsilon^2})$ can be extremely slow in practice.
Therefore  in the rest of the paper, we focus on stochastic gradient based methods, which enjoy significantly better efficiency guarantees (at cost of access to a richer oracle).

\subsection{Stochastic Gradient Method in Performative Prediction Games}
\label{sec:stochastic_gradplay_pp}
In practice, often players have some information regarding their
data distribution $\mc{D}_i$ and can leverage this during learning. Stochastic
gradient play---which we refer to as the stochastic gradient method to be
consistent with the rest of the paper---is
a natural learning algorithm commonly adopted in the literature on learning in
games for settings where players have an unbiased
estimate of their individual gradient. To apply the stochastic gradient method to multiplayer
performative prediction, players need oracle access to the gradient of their
loss with respect to their choice variable, which requires some knowledge of how the distribution $\mc{D}_i$ depends on
the joint action profile $x$. To this end, let us impose the following parametric assumption, which we have already encountered in Example~\ref{example:revenue_max}.

 \begin{assumption}[Parametric assumption]\label{assum:param}
    For each index $i\in[n]$, there exists a probability measure $\PP_{i}$ and matrices $A_i$ and $A_{-i}$ satisfying
  \[z_i\sim \mc{D}_i(x) \qquad \Longleftrightarrow\qquad
  z_i=\bz_{i}+A_ix_i+A_{-i}x_{-i} \quad \textrm{ for } \bz_{i}\sim \PP_{i}.\]
 The mean and covariance of $\bz_{i}$ are defined as $\mu_i:=\E_{\bz_{i}\sim
 \PP_{i}}[\bz_{i}]$ and $\Sigma_i:=\E_{\bz_{i}\sim
 \PP_{i}}[(\bz_{i}-\mu_i)(\bz_{i}-\mu_i)^{\top}]$, respectively.  
 \end{assumption}

 Assumption~\ref{assum:param} is very natural and generalizes an
 analogous assumption used in the single player setting in
 \cite{miller2021outside}. It asserts that the distribution used by player $i$
 is a ``linear perturbation'' of some base distribution $\PP_{i}$. We can interpret the matrices $A_i$ and $A_{-i}$ as quantifying the performative effects of player $i$'s decisions and the rest of the players' decisions, respectively, on the  distribution $\mathcal{D}_i$ governing player $i$'s data.

Under Assumption~\ref{assum:param}, we may write player $i$-th  loss function as
\begin{equation}\label{eqn:biglos_madeez}
\LL_i(x)=\E_{\bz_{i}\sim \PP_{i}} \ell_i(x,\bz_{i}+A_ix_i+A_{-i}x_{-i}).
\end{equation}
Under mild smoothness assumptions, differentiating \eqref{eqn:biglos_madeez} using the chain rule, we see that the gradient of the
$i$-th player's loss is simply
\begin{equation}
    \nabla_i  \LL_i(x)=
    \E_{z_{i}\sim\mc{D}_{i}(x)}[\nabla_i\ell_i(x,z_i)+A_i^\top
    \nabla_{z_i}\ell_i(x,z_i)].   \label{eq:individual_gradient_parametric}
 \end{equation}
Therefore, given a point $x$, player $i$ may draw $z_i\sim \mc{D}_{i}(x)$ and form the vector
$$w_i(x,z_i)=\nabla_i\ell_i(x,z_i)+A_i^\top
    \nabla_{z_i}\ell_i(x,z_{i}).$$
By definition, $w_i(x,z_i)$ is an unbiased estimator of $\nabla_i \LL_i(x)$, that is
$$\E_{z_i\sim\mathcal{D}_i(x)}w_i(x,z)=\nabla_i \LL_i(x).$$
With this notation, the stochastic gradient method proceeds as follows: in each iteration $t\geq 0$ each player $i\in [n]$ performs the update:
\begin{equation}\label{eqn:performSGD}\left\{
\begin{aligned}
    &{\rm Sample }~z^t_{i}\sim \mathcal{D}_i(x^t) \\
&{\rm Set }~x^{t+1}_{i}=\proj_{\X_i} \left(x^t_i-\eta_t \cdot w_i(x^t,z^t_i) \right)
\end{aligned}\right\}.
\end{equation}

Let us look at the computation that is required in each iteration. Evaluation of
the vector $w_i(x,z_i)$ requires evaluation of both $\nabla_i \ell_i(x,z_i)$ and
$\nabla_{z_i} \ell_i(x,z_i)$, and knowledge of the matrix $A_i$. When the game
is separable, it is very reasonable that each player can explicitly compute
$\nabla_i \ell_i(x_i,z_i)$ and $\nabla_{z_i} \ell_i(x_i,z_i)$ assuming oracle
access to queries $z_i$ from the environment which does depend on $x_{-i}$
and $x_i$.
Moreover, the matrix $A_i$ depends only on the performative effects of player
$i$, and in this section we will suppose that it is indeed known to each player.
In the next section, we will develop an adaptive algorithm wherein each player
$i\in [n]$ simultaneously learns $A_i$ and $A_{-i}$ while optimizing their loss.

In order to apply standard convergence guarantees for stochastic gradient play,
we need to assume that $(i)$ the vector of individual gradients is Lipschitz continuous and (ii) that the variance of $w(x,z_i)$ is bounded.
Let us begin with the former.

 \begin{assumption}[Smoothness]\label{assum:smoothness_elli}
 Suppose that the map $(\nabla_1\LL_1(x),\nabla_2 \LL_2(x),\ldots, \nabla_n \LL_n(x))$ is $L$-Lipschitz continuous.
 \end{assumption}

 The constant $L$ may be easily estimated from the smoothness parameters of each
 individual loss function $\ell_i(x,z)$ and the magnitude of the matrices $A_i$
 and $A_{-i}$; this is the content of the following lemma. In what follows, we
 define the mixed partial derivative
 $\nabla_{i,z_i}\ell_i(x,z_i)=(\nabla_{i}\ell_i(x,z_i),\nabla_{z_i}\ell_i(x,z_i))$. Recall that $\nabla_i\ell_i(x_i,x_{-i},z_i)$ denotes the partial derivative of $\ell_i$
 with respect to the $x_i$ argument and $\nabla_{z_i}\ell_i(x_i,x_{-i},z_i)$
 denotes the partial derivative with respect to $z_i$.

\begin{lemma}[Sufficient conditions for Assumption~\ref{assum:smoothness_elli}.]
 Suppose that Assumption~\ref{assum:param} holds and that there exist constants $\xi_i\geq 0$ such that for each index $i$ the map
 $(x,z_i)\mapsto \nabla_{i,z_i}\ell_i(x,z_i)$ is $\xi_i$-Lipschitz continuous. Then Assumption~\ref{assum:smoothness_elli} holds with
 $$L=\sqrt{\sum_{i=1}^n \xi_i^2\max\{1,\|A_i\|_{\rm op}^2\}\cdot (1+\|\bar A_i\|_{\rm op}^2)}.$$
\end{lemma}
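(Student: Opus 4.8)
The plan is to bound the Lipschitz constant of the map $x\mapsto(\nabla_1\LL_1(x),\ldots,\nabla_n\LL_n(x))$ by controlling each coordinate block separately and then aggregating. Fix an index $i$. From the chain-rule expression \eqref{eq:individual_gradient_parametric}, and using the parametric form $z_i = \bz_i + A_i x_i + A_{-i}x_{-i} = \bz_i + \bar A_i x$, I would rewrite
\[
\nabla_i\LL_i(x)=\E_{\bz_i\sim\PP_i}\bigl[\nabla_i\ell_i(x,\bz_i+\bar A_i x)+A_i^\top\nabla_{z_i}\ell_i(x,\bz_i+\bar A_i x)\bigr].
\]
The key observation is that the integrand is a composition of the $\xi_i$-Lipschitz map $(x,z_i)\mapsto\nabla_{i,z_i}\ell_i(x,z_i)$ with the affine map $x\mapsto(x,\bz_i+\bar A_i x)$, followed by the linear map $(a,b)\mapsto a + A_i^\top b$.

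The main steps are then as follows. First, for fixed $\bz_i$, estimate how much the argument $(x,\bz_i+\bar A_i x)$ moves: $\|(x,\bz_i+\bar A_i x)-(x',\bz_i+\bar A_i x')\|^2 = \|x-x'\|^2+\|\bar A_i(x-x')\|^2\le(1+\|\bar A_i\|_{\rm op}^2)\|x-x'\|^2$. Hence $\|\nabla_{i,z_i}\ell_i(x,\bz_i+\bar A_i x)-\nabla_{i,z_i}\ell_i(x',\bz_i+\bar A_i x')\|\le \xi_i\sqrt{1+\|\bar A_i\|_{\rm op}^2}\,\|x-x'\|$. Second, apply the linear map $(a,b)\mapsto a+A_i^\top b$; since $\|a + A_i^\top b\|\le \sqrt{1+\|A_i\|_{\rm op}^2}\,\|(a,b)\|$ (indeed $\|(I,A_i^\top)\|_{\rm op}^2 = 1 + \|A_i\|_{\rm op}^2 \le 2\max\{1,\|A_i\|_{\rm op}^2\}$), we get that the integrand is Lipschitz in $x$ with constant at most $\xi_i\sqrt{1+\|A_i\|_{\rm op}^2}\sqrt{1+\|\bar A_i\|_{\rm op}^2}\le \xi_i\sqrt{2\max\{1,\|A_i\|_{\rm op}^2\}}\sqrt{1+\|\bar A_i\|_{\rm op}^2}$. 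Wait — to land exactly on the claimed constant $\xi_i\sqrt{\max\{1,\|A_i\|_{\rm op}^2\}(1+\|\bar A_i\|_{\rm op}^2)}$ without the stray $\sqrt 2$, I would instead bound the linear map more carefully, using that $w_i(x,z_i)$ is being viewed in the appropriate norm; alternatively absorb the harmless constant. Third, since a Lipschitz bound on an integrand passes through the expectation $\E_{\bz_i\sim\PP_i}$ (Jensen / triangle inequality for integrals, legitimate by the standing integrability conventions), the block $x\mapsto\nabla_i\LL_i(x)$ is $L_i$-Lipschitz with $L_i\le\xi_i\sqrt{\max\{1,\|A_i\|_{\rm op}^2\}(1+\|\bar A_i\|_{\rm op}^2)}$ (up to the constant factor noted). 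Fourth, aggregate: for the full map, $\|(\nabla_i\LL_i(x)-\nabla_i\LL_i(x'))_i\|^2=\sum_i\|\nabla_i\LL_i(x)-\nabla_i\LL_i(x')\|^2\le\bigl(\sum_i L_i^2\bigr)\|x-x'\|^2$, giving $L=\sqrt{\sum_i L_i^2}$ as claimed.

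I expect the only real subtlety — rather than obstacle — to be bookkeeping the operator-norm constant in the step that combines the two components $\nabla_i\ell_i$ and $A_i^\top\nabla_{z_i}\ell_i$ into $w_i$, so that the final bound matches the stated form exactly; the factor $\max\{1,\|A_i\|_{\rm op}^2\}$ in the theorem is presumably chosen precisely so that $1+\|A_i\|_{\rm op}^2\le 2\max\{1,\|A_i\|_{\rm op}^2\}$ and the $2$ is then folded into a slightly loose estimate, or the paper treats $\nabla_{i,z_i}\ell_i$ as a single object whose Lipschitz constant already accounts for stacking. Everything else is a routine chain-rule-plus-triangle-inequality computation, and the differentiation-under-the-integral step is covered by the blanket integrability assumption stated in the preliminaries.
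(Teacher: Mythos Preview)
Your approach is essentially identical to the paper's: chain rule, compose the affine map $x\mapsto(x,\bz_i+\bar A_i x)$ with the $\xi_i$-Lipschitz $\nabla_{i,z_i}\ell_i$ and a final linear map, pass the bound through the expectation, and aggregate via $L^2=\sum_i L_i^2$. The paper resolves your ``stray $\sqrt 2$'' exactly as you guessed in your last paragraph: it writes $\nabla_i\LL_i(x)=\E_{\bz_i}V^\top\nabla_{i,z_i}\ell_i(x,\bz_i+\bar A_i x)$ with the \emph{block-diagonal} matrix $V=\begin{bmatrix}I&0\\0&A_i\end{bmatrix}$ and uses $\|V\|_{\rm op}=\max\{1,\|A_i\|_{\rm op}\}$, landing on the stated constant without the extra factor. (One may note that $V^\top\nabla_{i,z_i}\ell_i$ is literally the pair $(\nabla_i\ell_i,A_i^\top\nabla_{z_i}\ell_i)$ rather than the sum, so the paper's identity involves a mild abuse of notation; your factorization through $[I\ \ A_i^\top]$ with operator norm $\sqrt{1+\|A_i\|_{\rm op}^2}$ is arguably cleaner, at the cost of the harmless $\sqrt 2$ you flagged.)
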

\begin{proof}
Let $\bar A_i$ be a matrix satisfying $\bar A_i x=A_i x_i+A_{-i}x_{-i}$.
Observe
that we may write
$$\nabla_i \LL_i(x)=\E_{\bz_{i,0}\sim
\PP_{i}}V^{\top}\nabla_{i,z_i}\ell_i(x,\bz_{i}+\bar A_i x)\qquad  \textrm{where}\qquad V=\begin{bmatrix} I &0\\
0 &A_i
\end{bmatrix}.$$
Therefore, we deduce
\begin{align*}
\|\nabla_i\LL_i(x)-\nabla_i \LL_i(x')\|&\leq \|V\|_{\rm op}\E_{\bz_{i}\sim
\PP_{i}}\|\nabla_{i,z_i}\ell_i(x,\bz_{i}+\bar A_i
x)-\nabla_{i,z_i}\ell_i(x',\bz_{i}+\bar A_i x') \|\\
&\leq \max\{1,\|A_i\|_{\rm op}\}\cdot \xi_i\cdot \E_{\bz_{i}\sim
\PP_{i}}\|(x,\bz_{i}+\bar A_i x)-(x',\bz_{i}+\bar A_i x')\|\\
&=  \max\{1,\|A_i\|_{\rm op}\}\cdot \xi_i\cdot \sqrt{\|x-x'\|^2+\|\bar A_i(x-x')\|^2}\\
&\leq  \max\{1,\|A_i\|_{\rm op}\}\cdot \xi_i\cdot \sqrt{1+\|\bar A_i\|_{\rm op}^2}\cdot \|x-x'\|.
\end{align*}
This completes the proof.
\end{proof}

Next we assume a finite variance bound.

\begin{assumption}[Finite variance]
Suppose that there exists a constant $\sigma>0$ satisfying
$$\E_{z\sim \mathcal{D}_{\pi}(x)}\|w(x,z)-\E_{z'\sim \mathcal{D}_{\pi}(x)}w(x,z')\|^2\leq \sigma^2 \qquad \forall x\in \X.$$
    \label{a:bddvariance_lip}
\end{assumption}

Let us again present a sufficient condition for Assumption~\ref{a:bddvariance_lip} to hold in in terms of the variance of the individual gradients $\nabla_{i,z_i}\ell(x,z_i)$. The proof is immediate and we omit it.

\begin{lemma}[Sufficient conditions for Assumption~\ref{a:bddvariance_lip}]
Suppose that there exist constants $s_1,s_2\geq 0$ such that for all $x\in \X$ and $i\in[n]$ the estimates hold:
   \begin{align*}
\E_{z_i'\sim\mathcal{D}_i(x)}\|\nabla_i\ell_i(x,z_i')-\E_{z_i\sim\mathcal{D}_i(x)}\nabla_i\ell_i(x,z_i)\|^2 &\leq s_1^2\\
\E_{z_i'\sim\mathcal{D}_i(x)}\|\nabla_{z_i}\ell_i(x,z_i')-\E_{z_i\sim\mathcal{D}_i(x)}\nabla_{z_i}\ell_i(x,z_i)\|^2 &\leq s_2^2
\end{align*}
Then Assumption~\ref{a:bddvariance_lip} holds with $\sigma^2=\sum_{i=1}^n 2(s_1^2+\|A_i\|_{\rm op}^2s_2^2)$.
\end{lemma}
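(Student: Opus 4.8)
The plan is the standard bias--variance bookkeeping: expand $w$ into its player blocks, reduce the claimed bound to a per-player estimate, and close each per-player estimate with the crude inequality $\|a+b\|^2\le 2\|a\|^2+2\|b\|^2$.

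First I would recall that $w(x,z)=(w_1(x,z_1),\dots,w_n(x,z_n))$ with $w_i(x,z_i)=\nabla_i\ell_i(x,z_i)+A_i^\top\nabla_{z_i}\ell_i(x,z_i)$, and that the $i$-th block of $w$ depends only on $z_i$. Since $\mathcal{D}_\pi(x)=\mathcal{D}_1(x)\times\cdots\times\mathcal{D}_n(x)$, linearity of expectation gives $\mathbb{E}_{z'\sim\mathcal{D}_\pi(x)}w_i(x,z_i')=\mathbb{E}_{z_i'\sim\mathcal{D}_i(x)}w_i(x,z_i')$, so the centering vector splits blockwise and
$$\mathbb{E}_{z\sim\mathcal{D}_\pi(x)}\Big\|w(x,z)-\mathbb{E}_{z'\sim\mathcal{D}_\pi(x)}w(x,z')\Big\|^2=\sum_{i=1}^n\mathbb{E}_{z_i\sim\mathcal{D}_i(x)}\Big\|w_i(x,z_i)-\mathbb{E}_{z_i'\sim\mathcal{D}_i(x)}w_i(x,z_i')\Big\|^2.$$
This reduces the claim to bounding each summand by $2(s_1^2+\|A_i\|_{\rm op}^2s_2^2)$.

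Next, fixing $i$ and writing $w_i(x,z_i)-\mathbb{E}w_i=a_i(z_i)+A_i^\top b_i(z_i)$ with $a_i(z_i)=\nabla_i\ell_i(x,z_i)-\mathbb{E}_{z_i'}\nabla_i\ell_i(x,z_i')$ and $b_i(z_i)=\nabla_{z_i}\ell_i(x,z_i)-\mathbb{E}_{z_i'}\nabla_{z_i}\ell_i(x,z_i')$, the inequality $\|a+b\|^2\le 2\|a\|^2+2\|b\|^2$ together with $\|A_i^\top b_i\|\le\|A_i\|_{\rm op}\|b_i\|$ gives $\mathbb{E}\|w_i-\mathbb{E}w_i\|^2\le 2\mathbb{E}\|a_i\|^2+2\|A_i\|_{\rm op}^2\,\mathbb{E}\|b_i\|^2\le 2s_1^2+2\|A_i\|_{\rm op}^2 s_2^2$, where the last step is exactly the two hypotheses. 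Summing over $i\in[n]$ yields $\sigma^2=\sum_{i=1}^n 2(s_1^2+\|A_i\|_{\rm op}^2 s_2^2)$, as claimed.

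There is essentially no obstacle here; the argument is routine. The only points worth stating explicitly are that the centered vector decomposes blockwise (immediate from linearity, and not even requiring independence, only that $w_i$ depends on $z_i$ alone), and that the constant $2$ is precisely what the crude split $\|a+b\|^2\le 2\|a\|^2+2\|b\|^2$ produces; a sharper Young-type split would change the constant but the stated form uses $2$. This is presumably why the authors call the proof immediate.
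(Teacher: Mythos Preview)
Your proposal is correct and is exactly the routine computation the authors have in mind; the paper itself states ``The proof is immediate and we omit it,'' so there is no argument to compare against beyond confirming that your blockwise decomposition plus the crude split $\|a+b\|^2\le 2\|a\|^2+2\|b\|^2$ reproduces the stated constant.
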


The following is now a direct consequence of standard convergence guarantees for stochastic gradient methods.

\begin{theorem}[Stochastic gradient play]\label{thm:perfsgd_on_step}
    Consider an $n$-player performative prediction game
    \eqref{eqn:main_problem}. Suppose that
    Assumptions~\ref{assum:param}-\ref{a:bddvariance_lip} hold and that the game
    is $\alpha$-strongly monotone with $\alpha>0$.  Then a single step of the stochastic gradient method \eqref{eqn:performSGD}  with any constant $\eta\leq
\frac{\alpha}{2 L^2}$ satisfies
\begin{equation}\label{eqn:recursion_sgd}
\E[\|x^{t+1}-x^\star\|^2]\leq
\frac{1}{1+\alpha\eta}\E[\|x^{t}-x^\star\|^2]+\frac{2\eta^2\sigma^2}{1+\eta\alpha},
\end{equation}
where $x^\star$ is the Nash equilibrium of the game \eqref{eqn:main_problem}.
\label{thm:stochasticgradientplay}
\end{theorem}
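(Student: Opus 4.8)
The plan is to read off the update \eqref{eqn:performSGD} as a single step of projected stochastic gradient descent on the variational inequality $0\in D(x)+N_{\X}(x)$ associated with the game \eqref{eqn:main_problem}, driven by a \emph{conditionally unbiased} estimator of the vector of individual gradients $D(x):=(\nabla_1\LL_1(x),\dots,\nabla_n\LL_n(x))$, and then to invoke the generic one-step lemma for stochastic gradient play, Theorem~\ref{thm:sgd} in Appendix~\ref{sec:append:generic_res}, with the bias parameter set to zero. This mirrors the argument behind Theorem~\ref{thm:one_step_sgd}; the only structural difference is that there the relevant step direction carried a bias of size $\alpha\rho\|x^t-x^\star\|$, whereas here the bias vanishes identically, which is precisely what produces the cleaner contraction $\tfrac{1}{1+\alpha\eta}$ and the weaker step-size restriction $\eta\le\tfrac{\alpha}{2L^2}$.

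Concretely, first I would stack the players' updates: writing $w(x,z)=(w_1(x,z_1),\dots,w_n(x,z_n))$, the iteration \eqref{eqn:performSGD} becomes $x^{t+1}=\proj_{\X}\!\big(x^t-\eta\, w(x^t,z^t)\big)$ with $z^t\sim\mathcal{D}_{\pi}(x^t)$. By the chain-rule identity \eqref{eq:individual_gradient_parametric}, which relies on the parametric structure of Assumption~\ref{assum:param}, each coordinate satisfies $\E_{z_i\sim\mathcal{D}_i(x)}w_i(x,z_i)=\nabla_i\LL_i(x)$, hence $\E_{z\sim\mathcal{D}_{\pi}(x)}w(x,z)=D(x)$. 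Thus, conditioned on the history through iteration $t$, the step direction $w(x^t,z^t)$ is an unbiased estimate of $D(x^t)$ whose conditional variance is at most $\sigma^2$ by Assumption~\ref{a:bddvariance_lip}.

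Next I would record the two structural facts about $D$ that feed into the generic lemma. Since the game is $\alpha$-strongly monotone and $x^\star$ is its (necessarily unique) Nash equilibrium, the first-order condition $0\in D(x^\star)+N_{\X}(x^\star)$ gives $\langle D(x^\star),x-x^\star\rangle\ge 0$ for all $x\in\X$; combined with $\langle D(x)-D(x^\star),x-x^\star\rangle\ge\alpha\|x-x^\star\|^2$ this yields the restricted monotonicity estimate $\langle D(x),x-x^\star\rangle\ge\alpha\|x-x^\star\|^2$ on all of $\X$ — the Rosen-type inequality recalled in Section~\ref{sec:preliminaries}. The second fact is that $D$ is $L$-Lipschitz, which is exactly Assumption~\ref{assum:smoothness_elli}. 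With these in hand one applies Theorem~\ref{thm:sgd} with $G(\cdot)=D(\cdot)$, $g^t=w(x^t,z^t)$, bias constant $B=0$, the additional slack parameters (denoted $C_t$ and $D$ in that theorem) both zero, and variance $\sigma$; specializing its conclusion to $B=0$ collapses the contraction factor to $\tfrac{1}{1+\alpha\eta}$ and the noise term to a multiple of $\tfrac{\eta^2\sigma^2}{1+\alpha\eta}$ under $\eta\le\tfrac{\alpha}{2L^2}$, which is exactly \eqref{eqn:recursion_sgd}.

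The only genuine work is hidden inside Theorem~\ref{thm:sgd}, and the step I expect to be delicate there is the treatment of the projection together with the second-moment term $\eta^2\,\mathbb{E}_t\|w(x^t,z^t)\|^2$: this must be split into the variance contribution $\eta^2\sigma^2$ and a term controlled by $\|D(x^t)\|^2$, and the latter then absorbed using $L$-Lipschitzness of $D$ and the fact that $\langle D(x^\star),x^t-x^\star\rangle\ge 0$ (so that the possibly nonzero vector $D(x^\star)$ — recall $x^\star$ may lie on the boundary of $\X$ — is cancelled rather than estimated). Since all of this is carried out once and for all in the appendix, at the level of this theorem the proof reduces to checking that the hypotheses of Theorem~\ref{thm:sgd} hold with zero bias, which is what the paragraphs above establish.
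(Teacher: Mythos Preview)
Your proposal is correct and matches the paper's approach exactly: the paper's proof is the one-line observation that the result is immediate from Theorem~\ref{thm:sgd} with $B\equiv C_t\equiv D\equiv 0$, which is precisely the specialization you describe after verifying unbiasedness, the variance bound, $L$-Lipschitzness of the vector of individual gradients, and $\alpha$-strong monotonicity. Your additional remarks about the restricted monotonicity inequality and the internals of Theorem~\ref{thm:sgd} are accurate elaborations of what the appendix does, but at the level of this theorem nothing more is needed.
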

\begin{proof}
This is immediate from Theorem~\ref{thm:sgd} in Appendix~\ref{sec:append:generic_res} with $B\equiv C_t\equiv D\equiv 0$.
\end{proof}
Analogous to the analysis of the stochastic repeated gradient method, applying a step-decay schedule on $\eta$ yields the
following corollary. The proof follows directly from the recursion \eqref{eqn:recursion_sgd} and the generic results on step decay schedules; e.g. \cite[Lemma B.2]{drusvyatskiy2020stochastic}.

\begin{restatable}[Stochastic gradient method with a step-decay
    schedule]{corollary}{corstochasticgradplay}\label{cor:main_cor_sgd_pp} Suppose that the assumptions of
    Theorem~\ref{thm:stochasticgradientplay} hold.
 Consider running stochastic gradient method in $k=0,\ldots, K$ epochs, for $T_k$
 iterations each, with constant step-size
 $\eta_k=\frac{\alpha}{2{L}^2}\cdot 2^{-k}$, and such that the last iterate
 of epoch $k$ is used as the first iterate in epoch $k+1$. Fix a target accuracy
 $\varepsilon>0$ and suppose we have available a constant $R\geq \|x^0-x^{\star}\|^2$. Set
\[T_0=\left\lceil\tfrac{2}{\alpha\eta_0}\log(\tfrac{2R}{\varepsilon})\right\rceil,
~~T_k=\left\lceil\tfrac{2\log(4)}{ \alpha\eta_k} \right\rceil~~\textrm{ for
}~~k\geq 1, \qquad\textrm{and}\qquad
K=\left\lceil1+\log_2\left(\tfrac{2\eta_0 \sigma^2}{\alpha\varepsilon}\right)\right\rceil.\]
The final iterate $x$ produced satisfies $\E\|x-x^*\|^2\leq \varepsilon$, while
the total number of iterations of stochastic gradient play  called is at most
\[\mathcal{O}\left(\frac{{L}^2}{\alpha^2}\cdot\log\left(\frac{2R}{\varepsilon}\right)+\frac{\sigma^2}{\alpha^2\varepsilon}\right).\]
\end{restatable}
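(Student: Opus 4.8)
The plan is to derive this purely from the one-step recursion \eqref{eqn:recursion_sgd} of Theorem~\ref{thm:perfsgd_on_step} together with the generic step-decay analysis in \cite[Lemma B.2]{drusvyatskiy2020stochastic}; the argument is essentially the one used for Corollary~\ref{cor:main_cor_sgd}, only now there is no bias term. First I would restrict attention to a single epoch $k$, on which the step-size is frozen at $\eta=\eta_k\le\eta_0=\frac{\alpha}{2L^2}$, and put \eqref{eqn:recursion_sgd} into the canonical ``contraction plus noise'' form
\[
\E\|x^{t+1}-x^\star\|^2\le\bigl(1-\psi(\eta)\bigr)\,\E\|x^{t}-x^\star\|^2+D\eta^2 .
\]

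Turning \eqref{eqn:recursion_sgd} into this form requires only two elementary coefficient bounds. Strong monotonicity with parameter $\alpha$ together with $L$-Lipschitzness of the vector of individual gradients forces $\alpha\le L$, so $\alpha\eta\le\frac{\alpha^2}{2L^2}\le\frac12$, whence $\frac{1}{1+\alpha\eta}\le 1-\frac{\alpha\eta}{2}$; and trivially $\frac{2\eta^2\sigma^2}{1+\eta\alpha}\le 2\sigma^2\eta^2$. Thus the recursion holds with $\psi(\eta)=\frac{\alpha}{2}\eta$ and $D=2\sigma^2$; summing the induced geometric series across an epoch shows the corresponding noise floor is $\Gamma\eta$ with $\Gamma=2D/\alpha=\frac{4\sigma^2}{\alpha}$. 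This is exactly the hypothesis format of \cite[Lemma B.2]{drusvyatskiy2020stochastic}.

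It then remains to feed the prescribed epoch lengths $T_0,\dots,T_K$ and epoch count $K$ into that lemma and read off the conclusions. Epoch $0$, of length $T_0=\lceil\frac{2}{\alpha\eta_0}\log(2R/\varepsilon)\rceil$, drives the ``signal'' term $(1-\psi(\eta_0))^{T_0}R$ below $\varepsilon/2$; each subsequent epoch halves both the current error bound and the step-size (hence also the noise floor $\Gamma\eta_k$), so after $K+1=\Theta\bigl(\log(\eta_0\sigma^2/(\alpha\varepsilon))\bigr)$ epochs the error is at most $\varepsilon$. For the iteration total: since $\eta_0=\frac{\alpha}{2L^2}$ we get $T_0=\Theta\bigl(\frac{L^2}{\alpha^2}\log(2R/\varepsilon)\bigr)$, while $\sum_{k=1}^{K}T_k$ is a geometric sum dominated by its last term $T_K=\Theta\bigl(\frac{L^2}{\alpha^2}2^K\bigr)=\Theta\bigl(\frac{L^2}{\alpha^2}\cdot\frac{\eta_0\sigma^2}{\alpha\varepsilon}\bigr)=\Theta\bigl(\frac{\sigma^2}{\alpha^2\varepsilon}\bigr)$; adding the two gives the claimed bound $\mathcal{O}\bigl(\frac{L^2}{\alpha^2}\log(\frac{2R}{\varepsilon})+\frac{\sigma^2}{\alpha^2\varepsilon}\bigr)$.

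I do not expect any genuine difficulty: the one-step recursion already does the analytic work, and everything after is geometric-series bookkeeping and invocation of a black box. The only points needing minor care are checking $\alpha\le L$ (so that the step-size restriction $\eta\le\frac{\alpha}{2L^2}$ really yields a contraction factor bounded away from $1$) and confirming that the constants hard-wired into the stated $T_k$ and $K$ match the thresholds that \cite[Lemma B.2]{drusvyatskiy2020stochastic} demands once $\psi$, $D$, and $R$ are plugged in.
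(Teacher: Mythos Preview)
Your proposal is correct and follows essentially the same route as the paper: the paper's proof is a single sentence stating that the result follows directly from the recursion \eqref{eqn:recursion_sgd} together with \cite[Lemma B.2]{drusvyatskiy2020stochastic}, and your write-up simply unpacks that invocation (massaging the contraction factor, identifying $\psi(\eta)=\tfrac{\alpha}{2}\eta$ and the noise floor $\Gamma\eta$, then reading off the epoch counts). The only extra content you add beyond the paper is the explicit verification that $\alpha\le L$ forces $\alpha\eta\le\tfrac12$ so that $\tfrac{1}{1+\alpha\eta}\le 1-\tfrac{\alpha\eta}{2}$, which is a reasonable detail to spell out.
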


\subsection{Adaptive Gradient Method in Performative Prediction Games}\label{sec:adapt_grad_play}

Throughout this section, we continue working under the parametric Assumption~\ref{assum:param}.
An apparent deficiency of the stochastic gradient method discussed in Section~\ref{sec:stochastic_gradplay_pp} is that each player $i$ needs to know the matrix $A_i$ that governs the performative effect of the player on the distribution.
In typical settings, the matrix $A_i$ may be unknown to the player, but it might be possible to estimate it from data. Inspired by methods in adaptive control to simultaneously estimate the
parameters of the system and optimize the control input, we propose the
\emph{adaptive gradient method} outlined in
Algorithm~\ref{alg:adaptive_updates}.\footnote{We remark that the word ``adaptive" here refers to adaptively estimating the model parameters, and is different from its meaning in methods like AdaGrad, where it is the algorithm's stepsize that is being adapted.}
In each iteration, each player simultaneously
estimates their distribution
parameters and  myopically optimizes their individual loss via stochastic gradient method on the current
estimated loss. More precisely, the algorithm maintains two sequences: $(i)$
$x^t$ that eventually converges to the Nash equilibrium $x^{\star}$, and $(ii)$
 estimates
$\hat A^t_i$ that dynamically estimates the unknown matrix $\bar A_i$. In each iteration $t$, the algorithm draws samples $z_i^t\sim \mathcal{D}_i(x^t)$, and each player $i$ takes the gradient step
$$x^{t+1}_i =\proj_{\X_i}\left(x_i^t- \eta_t
  ((\nabla_{i}\ell_i(x^t,z_i^t)+(\hat{A}_{ii}^t)^{\top}\nabla_{z_i}\ell_i(x^t,z_i^t))\right),$$
where $\hat{A}_{ii}^t$ denotes the submatrix of $\hat{A}_i^t$ whose columns are indexed by player $i$'s action space. Next, in order to update $\hat{A}^t$, the algorithm draws a sample $q^t_i\sim \mathcal{D}_{i}(x^t+u^t)$ where $u^t$ is a user-specified noise sequence.
Observe that conditioned on $u^t$, the equality holds:$$\E[q_i^t-z_i^t\mid u^t]=\bar A_i u^t.$$
Therefore, a good strategy for forming a new estimate $\hat A_i^{t+1}$ of $\bar{A}_i$ from $\hat A_i^{t}$ is to take a gradient step on the least squares objective
$$\min_{B_i}\frac{1}{2}\|q_i^t-z^t_i-B_iu^t\|^2.$$
Explicitly, this gives the update
$$\hat{A}_{i}^{t+1} =  \hat{A}_{i}^t + \nu_t (q_i^t-z_{i}^t-\hat{A}_{i}^t u^{t} )
(u^{t})^\top.$$
Analogous to estimation in adaptive control or machine learning, we exploit noise injection $u_t$ to
ensure sufficient exploration of the parameter space. In particular, the noise vector needs to be sufficiently isotropic. We impose the following assumption.

\begin{assumption}[Injected Noise]
\label{a:injected_noise}
{\rm
The injected noise vector $u^t=(u_{1}^t,\ldots, u_{n}^t)\in \mb{R}^d$
is a zero-mean random vector that is independent of $x^t$, and independent of
the injected noise at any previous queries to the environment by any player.
Moreover, there exists constants $\Bl, R>0$ and $\Bui>0$ for each $i\in [n]$ such that for all $t\geq 0$ and $i\in [n]$ the random vector $v_i:=u_i^t$ satisfies
\[0\prec\Bl\cdot I\preceq \mb{E}[v_iv^\top_i],\qquad \E\|v_i\|^2\leq \Bui,\qquad \text{and} \qquad
\mb{E}[\|v_i\|^2v_iv_i^{\top}]\preceq R^2 \mathbb{E}[v_iv_i^{\top}].\]
}
\end{assumption}

In the simple Gaussian case where $u_{t} \sim \mc{N}(0, I_{d})$, we may set\footnote{For the justification of the expression for $R^2$, see \cite[Section 2.1]{dieuleveut2017harder}.
}
$$\Bl=1,\qquad  \Bui = d_i, \qquad \textrm{and}\qquad R^2 = 3\max_{i\in [n]} d_i.$$
Analyzing the convergence of Algorithm~\ref{alg:adaptive_updates} amounts to
decomposing the analysis into
convergence of the stochastic gradient method  on the estimated losses induced by the
sequence of
$\hat{A}_i^t$, and convergence of the estimation
error $\E\|\hat{A}_i^t-\bar{A}_i\|^2$. The former analysis proceeds in  an
analogous fashion to that of
Theorem~\ref{thm:stochasticgradientplay} in
Section~\ref{sec:stochastic_gradplay_pp}. For the latter, we leverage the
injected noise to ensure there is sufficient \emph{exploration}. The following lemma establishes a one-step improvement guarantee on estimation of $\bar A_i$. Throughout, we set $\hat A^t:=(\hat A_1^t,\ldots, \hat A_n^t)$ and let $\|\cdot\|_F$ denote the Frobenius norm.
We also let $\mathbb{E}_t$ be the conditional expectation with respect to the $\sigma$-algebra generated by $(x^l,u^l)_{l=1,\ldots,t}$.

\begin{algorithm}[t!]
\caption{Adaptive Gradient Method}\label{alg:adaptive_updates}
\SetAlgoLined
\textbf{Input}: %
Stepsizes $\{\eta_t\}_{t\geq 1}$, $\{\nu_t\}_{t\geq 1}$; initial $x^1\in\R^d$,
$\hat{A}_{i}^1\in \R^{m\times d}$\;

\For{$t=1, \ldots, t$}{

\For{$i\in [n]$}{
\textbf{Query the environment}: Draw samples $z_{i}^t\sim \mathcal{D}_i(x^t)$ and $q_{i}^t\sim \mathcal{D}_i(x^t+u^t)$\;
  \textbf{Individual gradient update:}
  $x^{t+1}_i =\proj_{\X_i}\left(x_i^t- \eta_t
  (\nabla_{i}\ell_i(x^t,z_i^t)+(\hat{A}_{ii}^t)^{\top}\nabla_{z_i}\ell_i(x^t,z_i^t))\right)$,\\
  where $\hat{A}_{ii}^t$ denotes the submatrix of $\hat{A}_i^t$ whose columns are indexed by player $i$.

\noindent \textbf{Estimation update:}
$\hat{A}_{i}^{t+1} =  \hat{A}_{i}^t + \nu_t (q_i^t-z_{i}^t-\hat{A}_{i}^tu^t_i)(u^t_i)^\top$

}}
\end{algorithm}

\begin{lemma}[Estimation error]\label{lem:estimation_error}
Suppose that Assumptions~\ref{assum:param} and~\ref{a:injected_noise} hold and choose $\nu_t\in (0, \frac{2}{R^2})$. Then the matrices $\hat A_i^{t}$ generated by Algorithm~\ref{alg:adaptive_updates} satisfy the estimate:
\begin{equation}\label{eqn:one_step_estima}
\frac{1}{2}\mathbb{E}_t\|\hat A_i^{t+1}-\bar{A}_i\|^2_F\leq\frac{1-\Blower
\nu_t(2-\nu_t R^2)}{2}\|\hat A_i^t-\bar A_i\|^2_F+\nu^2_t{\normalfont
\tr}(\Sigma_i)\Bui.
\end{equation}
Therefore when setting $\nu_t=\frac{2}{\left(\Bl(t+\frac{2 R^2}{\Bl})\right)}$ for all $t\geq 0$, the estimate holds:
$$\mathbb{E}\|\hat A^{t}-\bar{A}\|^2_F\leq \frac{\max\left\{(1+\frac{2
R^2}{\Bl})\|\hat{A}_1-\bar A\|^2_F,\frac{8\sum_{i=1}^n
\normalfont{\tr}(\Sigma_{i})\Bui}{\Bl^2}\right\}}{t+\frac{2 R^2}{\Bl}}.$$
\end{lemma}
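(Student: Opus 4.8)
The plan is to prove the one-step recursion \eqref{eqn:one_step_estima} first, and then unwind it into the stated $\mathcal{O}(1/t)$ bound by an elementary inductive argument. For the one-step bound, fix an index $i$ and abbreviate $E^t := \hat A_i^t - \bar A_i$, $v := u_i^t$, and write the estimation update as $\hat A_i^{t+1} = \hat A_i^t + \nu_t(q_i^t - z_i^t - \hat A_i^t v)v^\top$. Using Assumption~\ref{assum:param}, decompose $q_i^t - z_i^t = \bar A_i v + w$ where $w := (q_i^t - \mu_i - \bar A_i(x^t+u^t)) - (z_i^t - \mu_i - \bar A_i x^t)$ is a difference of two independent centered noise draws from $\PP_i$, independent of $v$ and of $E^t$. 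Substituting, $E^{t+1} = E^t - \nu_t E^t v v^\top + \nu_t w v^\top$. Then I would expand $\tfrac12\|E^{t+1}\|_F^2$ and take the conditional expectation $\mathbb{E}_t$. The cross term between $-\nu_t E^t vv^\top$ and $\nu_t w v^\top$ vanishes because $w$ is conditionally centered and independent of $(E^t, v)$; the cross term between $E^t$ and $\nu_t w v^\top$ likewise vanishes. What remains is
\[
\tfrac12\mathbb{E}_t\|E^{t+1}\|_F^2 = \tfrac12\|E^t\|_F^2 - \nu_t \,\mathbb{E}_t\langle E^t vv^\top, E^t\rangle + \tfrac{\nu_t^2}{2}\mathbb{E}_t\|E^t vv^\top\|_F^2 + \tfrac{\nu_t^2}{2}\mathbb{E}_t\|w v^\top\|_F^2.
\]

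Now I bound the three $\mathbb{E}_t$ terms using Assumption~\ref{a:injected_noise}. For the negative term, $\mathbb{E}_t\langle E^t vv^\top, E^t\rangle = \mathbb{E}_t\,\mathrm{tr}(E^t vv^\top (E^t)^\top) = \mathrm{tr}(E^t\,\mathbb{E}[vv^\top]\,(E^t)^\top) \ge \Blower\|E^t\|_F^2$ by the lower bound $\Blower I \preceq \mathbb{E}[vv^\top]$. For the third term, $\|E^t vv^\top\|_F^2 = \|v\|^2\|E^t v\|^2 = \langle (E^t)^\top E^t, \|v\|^2 vv^\top\rangle$, so $\mathbb{E}_t\|E^t vv^\top\|_F^2 \le \langle (E^t)^\top E^t, R^2\,\mathbb{E}[vv^\top]\rangle \le R^2 \cdot \lambda_{\max}(\mathbb{E}[vv^\top]) \|E^t\|_F^2$; one needs to be slightly careful here — the cleaner route is $\langle (E^t)^\top E^t, \|v\|^2 vv^\top\rangle \le R^2 \langle (E^t)^\top E^t, \mathbb{E}[vv^\top]\rangle$ in expectation, and then $\langle (E^t)^\top E^t, \mathbb{E}[vv^\top]\rangle = \mathbb{E}_t\langle E^t vv^\top, E^t\rangle$, which we already know is at least $\Blower\|E^t\|_F^2$ from below, but here we want an upper bound of $R^2$ times it; combining gives the term $\tfrac{\nu_t^2 R^2}{2}\mathbb{E}_t\langle E^t vv^\top, E^t\rangle$, and together with the negative term this produces the factor $-\Blower\nu_t(2-\nu_t R^2)/2$ after pulling out $\tfrac12\|E^t\|_F^2$. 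For the last term, $\mathbb{E}_t\|wv^\top\|_F^2 = \mathbb{E}\|w\|^2 \cdot \mathbb{E}\|v\|^2$; since $w$ is a difference of two independent centered draws, $\mathbb{E}\|w\|^2 = 2\,\mathrm{tr}(\Sigma_i)$, and $\mathbb{E}\|v\|^2 \le \Bui$, giving $\le 2\,\mathrm{tr}(\Sigma_i)\Bui$, whence the $\nu_t^2\,\mathrm{tr}(\Sigma_i)\Bui$ term in \eqref{eqn:one_step_estima} after the $\tfrac12$ cancels. Collecting everything yields \eqref{eqn:one_step_estima}.

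For the second claim, I would take total expectations in \eqref{eqn:one_step_estima}, sum the per-player bounds over $i$ (the right-hand side only involves the $i$-th block and $\sum_i \mathrm{tr}(\Sigma_i)\Bui$), and prove by induction on $t$ that $\mathbb{E}\|\hat A^t - \bar A\|_F^2 \le C/(t + 2R^2/\Blower)$ with $C$ equal to the maximum in the statement. With the choice $\nu_t = 2/(\Blower(t + 2R^2/\Blower))$, the contraction factor is $1 - \Blower\nu_t(2 - \nu_t R^2)$; a short computation shows $\nu_t R^2 \le 1$ (since $t + 2R^2/\Blower \ge 2R^2/\Blower$ forces $\Blower(t+2R^2/\Blower) \ge 2R^2$), so $2 - \nu_t R^2 \ge 1$ and the factor is at most $1 - \Blower\nu_t = 1 - 2/(t + 2R^2/\Blower)$. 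Writing $\tau := t + 2R^2/\Blower$, the recursion becomes $a_{t+1} \le (1 - 2/\tau)a_t + 4\,\mathrm{tr}(\cdot)\Bui/(\Blower^2\tau^2)$ (summed over $i$), and the standard telescoping argument for sequences of the form $a_{t+1}\le (1-c/\tau)a_t + b/\tau^2$ with $c=2$ delivers $a_t \le \max\{\tau_1 a_1, b/(c-1)\}/\tau$; plugging in $c-1=1$ and the value of $b = 8\sum_i \mathrm{tr}(\Sigma_i)\Bui/\Blower^2$ and $\tau_1 = 1 + 2R^2/\Blower$ gives exactly the claimed bound.

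The main obstacle I anticipate is the bookkeeping in the one-step bound — specifically, getting the $(2 - \nu_t R^2)$ factor to come out cleanly requires handling the term $\mathbb{E}_t\|E^t vv^\top\|_F^2$ via the fourth-moment control $\mathbb{E}[\|v\|^2 vv^\top] \preceq R^2\,\mathbb{E}[vv^\top]$ rather than a crude operator-norm bound, and then matching it against the first-order decrease term; the rest (independence-driven vanishing of cross terms, $\mathbb{E}\|w\|^2 = 2\mathrm{tr}(\Sigma_i)$, and the induction) is routine. One should also double-check that the base case of the induction is immediate from the definition of $C$ as a max, which it is.
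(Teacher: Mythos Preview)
Your proposal is correct and is essentially the same argument as the paper's, just with the auxiliary lemmas inlined: the paper packages the one-step expansion you perform into a standalone online least-squares lemma (its Lemma~\ref{lem:least_square_step}), and packages your induction into a generic recursion lemma (its Lemma~\ref{lem:basic_recurs_lemma}), but the actual computations---the decomposition $E^{t+1}=E^t-\nu_t E^t vv^\top+\nu_t w v^\top$, the use of $\E[\|v\|^2 vv^\top]\preceq R^2\,\E[vv^\top]$ to merge the quadratic term with the linear decrease, and the noise bound $\E\|w\|^2=2\,\tr(\Sigma_i)$---are identical. One minor arithmetic slip: when you plug $\nu_t=2/(\Bl\tau)$ into the additive term $2\nu_t^2\sum_i\tr(\Sigma_i)\Bui$ you should get $8\sum_i\tr(\Sigma_i)\Bui/(\Bl^2\tau^2)$, not $4$; your final value of $b$ is nonetheless the correct $8\sum_i\tr(\Sigma_i)\Bui/\Bl^2$.
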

\begin{proof}
This follows from a standard estimate for online least squares, which appears as
Lemma~\ref{lem:least_square_step} in Appendix~\ref{app:proof_adaptive}. Namely,
let $\mathcal{G}_1$ be the $\sigma$-algebra generated by $x^1,\ldots, x^t$
and
let $\mathcal{G}_2$ be the $\sigma$-algebra generated by
$\mathcal{G}_1\cup\{u^t\}$. Set $b=q_i^t-z_i^t$, $y=u^t_i$,  $B=\hat A_i^t$,
$V=\bar A_i$, $v=u_i^t$, $\lambda_1=\Bl$, and $\lambda_2=\Bui$.

Let us upper bound the variance $\E[\|Vy-b\|^2\mid \mathcal{G}_2]$.
To this end, let $w$ and $w'$ be drawn i.i.d from $\PP_{i}$. Observe that conditioned on $u_i^t$, the random vector $\bar A_i u_i^t-(q_i^t-z_i^t)$ has the same distribution as
 $w-w'$. Let us compute
$$\E \|w-w'\|^2=\tr(\E((w-w')(w-w')^{\top})=2\tr(\Sigma_i).$$
Therefore, we may set $\sigma^2=2\tr(\Sigma_i)$.
An application of Lemma~\ref{lem:least_square_step} completes the proof of \eqref{eqn:one_step_estima}.
Summing up \eqref{eqn:one_step_estima} for $i=1,\ldots, n$ and using the tower rule for for conditional expectations yields:
$$\mathbb{E}\|\hat
A^{t+1}-\bar{A}\|^2_F\leq(1-\nu_t\Bl(2-\nu_t^2R^2)) \mathbb{E}\|\hat A^t-\bar
A\|^2_F+2\nu^2_t\sum_{i=1}^n{\normalfont \tr}(\Sigma_i)\Bui.$$
Noting $\nu_t\leq \frac{1}{R^2}$, we deduce $1-\nu_t\Bl(2-\nu_t^2R^2)\leq 1-\nu_t\Bl$.
The result follows directly from plugging in the value of  $\nu_t$ and using Lemma~\ref{lem:basic_recurs_lemma} in Appendix~\ref{sec:technical_sequences}.
\end{proof}

Next we show that the direction of motion of Algorithm~\ref{sec:adapt_grad_play}
is well-aligned with the direction of motion of the stochastic gradient method.
To this end, define the true (stochastic) vector of individual gradients
\[
v^t:=(\nabla_{i}\ell_i(x^t,z_{i}^t)+A_{i}^{\top}\nabla_{z_i}\ell_i(x^t,z_{i}^t))_{i\in[n]},
\]
and its estimator that is used by the algorithm
\[\hat{v}^t:=(\nabla_{i}\ell_i(x^t,z_{i}^t)+(\hat{A}_{ii}^t)^{\top}\nabla_{z_i}\ell_i(x^t,z_{i}^t))_{i\in [n]}.\]
We make the following Lipschitzness assumption on the loss $\ell_i(x,z_i)$ in the variable $z_i$.

\begin{assumption}[Lipschitz continuity in $z$]\label{assump:lip_func}
Suppose that there exists a constant $\delta>0$ such that for all $x\in \X$, the
estimate holds:
$$\E_{z\sim \mathcal{D}_{\pi}(x)}\sqrt{\sum_{i=1}^n\|\nabla \ell_i(x,z_i)\|^2}\leq \delta.$$
\end{assumption}

\begin{lemma}\label{lem:basic_thm_grad_est}
Suppose that Assumptions~\ref{assum:param} and~\ref{assump:lip_func} hold. Then for each $t\geq 1$ and $i\in [n]$, the estimate holds:
$$\mathbb{E}_t\|\hat{v}^t-v^t\|\leq \delta \|\hat{A}^t- \bar A\|^2_F.$$
\end{lemma}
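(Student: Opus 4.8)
The plan is to exploit that $\hat v^t$ and $v^t$ are built from the same samples $z_i^t\sim\mathcal D_i(x^t)$ and the same gradients, differing only in the matrix premultiplying the $z$-derivatives. First I would subtract the two vectors block by block: in the $i$-th block the common terms $\nabla_i\ell_i(x^t,z_i^t)$ cancel, leaving
\[
(\hat v^t-v^t)_i=\big((\hat A_{ii}^t)^{\top}-A_i^{\top}\big)\,\nabla_{z_i}\ell_i(x^t,z_i^t).
\]
Thus the entire discrepancy is controlled by the block errors $\hat A_{ii}^t-A_i$, and the argument reduces to relating these to the global estimation error $\hat A^t-\bar A$ whose decay is furnished by Lemma~\ref{lem:estimation_error}.

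The crucial structural step is to recognize, under Assumption~\ref{assum:param}, that $A_i$ is exactly the sub-block of $\bar A_i$ whose columns are indexed by player $i$'s coordinates (recall $\bar A_i x=A_ix_i+A_{-i}x_{-i}$), and that $\hat A_{ii}^t$ is by definition the corresponding sub-block of $\hat A_i^t$. Hence $\hat A_{ii}^t-A_i$ is a submatrix of $\hat A_i^t-\bar A_i$, and since deleting columns cannot increase the Frobenius norm while the operator norm is dominated by it,
\[
\|\hat A_{ii}^t-A_i\|_{\mathrm{op}}\le\|\hat A_i^t-\bar A_i\|_F\le\|\hat A^t-\bar A\|_F,
\]
the last step using $\|\hat A^t-\bar A\|_F^2=\sum_i\|\hat A_i^t-\bar A_i\|_F^2$. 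This is where the Frobenius-norm factor of the bound is generated, and aligning the column-block indexing of the estimate with the definition of $A_i$ is the point that must be stated carefully.

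I would then bound each block by $\|(\hat v^t-v^t)_i\|\le\|\hat A^t-\bar A\|_F\,\|\nabla_{z_i}\ell_i(x^t,z_i^t)\|$, stack the blocks, and factor out the conditionally deterministic quantity $\|\hat A^t-\bar A\|_F$ to reach $\|\hat v^t-v^t\|\le\|\hat A^t-\bar A\|_F\,\sqrt{\sum_{i=1}^n\|\nabla_{z_i}\ell_i(x^t,z_i^t)\|^2}$; taking $\mathbb E_t$ and invoking Assumption~\ref{assump:lip_func}, which bounds $\mathbb E_t\sqrt{\sum_i\|\nabla\ell_i(x^t,z_i^t)\|^2}\le\delta$, is how one arrives at the estimate $\mathbb E_t\|\hat v^t-v^t\|\le\delta\,\|\hat A^t-\bar A\|_F^2$ asserted in the lemma. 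A minor technical point is the measurability licensing the extraction of $\|\hat A^t-\bar A\|_F$ from $\mathbb E_t$: since $x^t$ and $\hat A^t$ are fixed by the conditioning history, this factor is deterministic and only the gradient term is averaged. The main obstacle, however, is the \emph{power} of the Frobenius factor: because $(\hat v^t-v^t)_i$ depends linearly on $\hat A_{ii}^t-A_i$, the route above carries the error to the first power, so matching the squared factor recorded in the statement is the delicate point that must be justified separately --- most cleanly by carrying the first-power estimate forward, which already suffices for the subsequent convergence analysis since $\mathbb E\|\hat A^t-\bar A\|_F$ is itself driven to zero by Lemma~\ref{lem:estimation_error}, and interpreting the displayed squared form accordingly.
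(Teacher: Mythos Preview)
Your argument is essentially the paper's: write the discrepancy as a block-diagonal matrix (with blocks $(\hat A_{ii}^t-A_i)^\top$) acting on the stacked $z$-gradients, bound by the Frobenius norm of that block matrix, relate it to $\|\hat A^t-\bar A\|_F$ via the submatrix observation, and apply Assumption~\ref{assump:lip_func}. The paper packages this as $\hat v^t-v^t=B^tw^t$ with $\|B^tw^t\|\le\|B^t\|_F\,\|w^t\|$, but the content is identical.

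You are right to flag the exponent. The paper's own proof only establishes the first-power bound $\mathbb E_t\|\hat v^t-v^t\|\le\delta\,\|\hat A^t-\bar A\|_F$; the squared Frobenius factor in the displayed statement (and in the last line of the paper's proof) is a typo. This is confirmed by the way the lemma is used in the proof of Theorem~\ref{thm:cconv_adapt}: there the bias $C_t$ is fed into the term $\tfrac{2\eta C_t^2}{\alpha}$ of Theorem~\ref{thm:sgd}, producing $\tfrac{2\eta_t\delta^2\|\hat A^t-\bar A\|_F^2}{\alpha}$, which matches $C_t=\delta\|\hat A^t-\bar A\|_F$ rather than the squared version. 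So your first-power estimate is exactly what is needed, and no separate justification of a squared factor is required.
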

\begin{proof}
Notice that we may write $\hat{v}^t-v^t=B^tw^t$, where $B^t$ is the block diagonal matrix with blocks $\hat{A}_{ii}^t-A_i$ and we set $w^t=(\nabla_{z_i}\ell_i(x^t,z_{i}^t))_{i=1}^n$. Using H\"{o}lder's inequality we estimate:
\begin{align*}
    \mathbb{E}_t\|\hat{v}^t-v^t\|&= \mathbb{E}_t\|B^tw^t\|\leq \|B^t\|_F\cdot \mathbb{E}_t \|w^t\|\leq \delta \|\hat A^t-\bar A\|^2_F,
\end{align*}
as claimed.
\end{proof}

In light of Lemmas~\ref{lem:estimation_error} and \ref{lem:basic_thm_grad_est}, we may interpret Algorithm~\ref{sec:adapt_grad_play} as an approximation to the stochastic gradient method with a bias that tends to zero; we may then simply invoke generic convergence guarantees for biased stochastic gradient methods, which we record in Theorem~\ref{thm:sgd} of Appendix~\ref{sec:append:generic_res}. We will make use of the following assumption.

\begin{assumption}[Finite variance]
Suppose that there exists $\sigma>0$ such that for all $x\in \X$, the variance bound holds:
$$\E_{z_i\sim \mathcal{D}_i(x^t)}\|\nabla_{i,z_i} \ell_i(x^t,z_i^t)-\E_{z_i'\sim
    \mathcal{D}_i(x^t)}\nabla_{i,z_i} \ell_i(x^t,z_i')\|^2\leq \sigma^2.$$
\end{assumption}

The end result is the following theorem, which in particular implies a $\mathcal{O}(d/t)$ rate of convergence when $u^t$ are standard Gaussian. See the discussion after the theorem.

\begin{theorem}[Convergence of the adaptive method]\label{thm:cconv_adapt}
    Suppose that Assumptions~\ref{assum:param}, \ref{assum:smoothness_elli},
    \ref{a:injected_noise}, and~\ref{assump:lip_func} hold and that the game
    \eqref{eqn:main_problem} is $\alpha$-strongly monotone. Define the constant
    $k_0=1+\frac{8 L^2}{\alpha^2}$ and $q_0=\frac{2R^2}{\Bl}$ and set
    $\eta_t=\frac{2}{\alpha(t+k_0-2)}$ and $\nu_t=\frac{2}{\Bl(t+q_0)}$ for all $t\geq 0$. Then for all $t\geq 1$, the iterates generated by Algorithm~\ref{alg:adaptive_updates} satisfy
\begin{align*}
\mathbb{E}\|x^{t}-x^{\star}\|^2 &\leq \frac{\max\left\{\frac{1}{2}\alpha^2(1+k_0)\|x_1-x^{\star}\|^2, ~32(1+2\|\bar A\|^2_F)\sigma^2 +8\delta^2 Z \max\{\frac{1+k_0}{1+q_0},1\} \right\}}{\alpha^2(t+k_0)}\\
&~~~+\frac{\max\left\{\frac{1}{2}\alpha^2(1+k_0)^{3/2}\|x_1-x^{\star}\|^2,~64\sigma^2 Z \max\{\frac{1+k_0}{1+q_0},1\}\right\}}{\alpha^2(t+k_0)^{3/2}}.
\end{align*}
where we set $Z=\max\left\{(1+\frac{2 R^2}{\Bl})\|\hat{A}^1-\bar
A\|^2_F,\frac{8\sum_{i=1}^n \normalfont{\tr}(\Sigma_{i})\Bui}{\Bl^2}\right\}$.
\end{theorem}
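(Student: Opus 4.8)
The plan is to reduce the analysis to the generic biased-SGD recursion of Theorem~\ref{thm:sgd} in the appendix, using the two one-step lemmas already established. Concretely, view Algorithm~\ref{alg:adaptive_updates} as the stochastic gradient method \eqref{eqn:performSGD} applied to the game \eqref{eqn:main_problem} (whose vector of individual gradients is $L$-Lipschitz by Assumption~\ref{assum:smoothness_elli} and $\alpha$-strongly monotone by hypothesis), but with the exact stochastic gradient $v^t$ replaced by the biased estimator $\hat v^t$. By Lemma~\ref{lem:basic_thm_grad_est}, the bias satisfies $\mathbb{E}_t\|\hat v^t - v^t\| \leq \delta \|\hat A^t - \bar A\|_F^2$, which is a deterministic (given the past) quantity $C_t := \delta\,\E\|\hat A^t-\bar A\|_F^2$ that decays like $Z/(t+q_0)$ by Lemma~\ref{lem:estimation_error}. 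So I would invoke Theorem~\ref{thm:sgd} with $G(x)$ equal to the true vector of individual gradients, $g^t = v^t$, $B=D=0$, and $C_t$ as above, together with the variance bound from the last Assumption (Finite variance). This yields a recursion of the shape $\mathbb{E}\|x^{t+1}-x^\star\|^2 \leq (1-\alpha\eta_t)\mathbb{E}\|x^t-x^\star\|^2 + \eta_t^2\cdot(\text{const}\cdot\sigma^2) + \eta_t C_t + (\text{cross terms})$, where the driving term $\eta_t C_t$ scales like $\frac{1}{t}\cdot\frac{1}{t} = \frac{1}{t^2}$, i.e., it is summable at a faster rate than the $\sigma^2$ term.

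Next I would feed this recursion into the standard $1/t$-step-size machinery. With $\eta_t = \frac{2}{\alpha(t+k_0-2)}$, the homogeneous factor is $1-\alpha\eta_t = 1 - \frac{2}{t+k_0-2}$, which is exactly the profile handled by the technical sequence lemma (Lemma~\ref{lem:basic_recurs_lemma} in Appendix~\ref{sec:technical_sequences}); the choice $k_0 = 1 + 8L^2/\alpha^2$ is what guarantees $\eta_t \leq \alpha/(2L^2)$ for all $t\geq 1$, so that Theorem~\ref{thm:perfsgd_on_step}-type one-step bound is valid. The $\eta_t^2\sigma^2$-type term, after unrolling with this step size, produces the familiar $O(\sigma^2/(\alpha^2 t))$ contribution; this is the source of the first bracketed term (the $\frac{1}{\alpha^2(t+k_0)}$ piece), and the constant $32(1+2\|\bar A\|_F^2)\sigma^2$ there comes from bounding the variance of $\hat v^t$ in terms of the variance of $\nabla_{i,z_i}\ell_i$ and $\|A_i\|_{\rm op} \leq \|\bar A_i\|_F$, with a factor $2$ for splitting the two gradient components — essentially the sufficient-condition lemma for Assumption~\ref{a:bddvariance_lip}. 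The additional $\eta_t C_t$ contribution, being $\Theta(1/t^2)$, unrolls to a term of order $Z\delta^2/(\alpha^2 t)$ as well (summing $1/t^2$ against the Green's-function weights of the $1-2/t$ recursion gives $O(1/t)$), which is why $\delta^2 Z$ also appears inside the first max. The cross term between the bias $C_t$ and the distance $\|x^t-x^\star\|$, handled via Young's inequality, produces the slower $\Theta(t^{-3/2})$ tail: bounding $C_t \|x^t-x^\star\| \leq \tfrac12 C_t^2 + \tfrac12\|x^t-x^\star\|^2$ would be too crude, so instead one balances it as $\eta_t^{1/2} C_t \cdot \|x^t-x^\star\|$-scale terms, and since $C_t \sim Z/t$ and $\E\|x^t-x^\star\|^2 \sim 1/t$, the product integrates to $O(t^{-3/2})$; this is the second bracketed term with the $\sigma^2 Z$ and $\|x_1-x^\star\|^2$ entries. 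The factors $\max\{\frac{1+k_0}{1+q_0},1\}$ arise from aligning the two different step-size clocks $t+k_0$ and $t+q_0$ so that $\nu_t$'s decay can be compared term-by-term against $\eta_t$'s.

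The main obstacle I anticipate is the careful bookkeeping needed to combine two coupled stochastic recursions running on different time-scales — the $x^t$ recursion with rate $\eta_t \asymp 1/(\alpha t)$ and the $\hat A^t$ recursion with rate $\nu_t \asymp 1/(\Bl t)$ — and to extract clean closed-form constants rather than mere asymptotics. In particular, one must (i) verify that $\E\|\hat A^t - \bar A\|_F^2 \leq Z/(t+q_0)$ holds from $t=1$ onward with the stated $Z$ (this is exactly the conclusion of Lemma~\ref{lem:estimation_error}, provided $\nu_t < 2/R^2$, which $q_0 = 2R^2/\Bl$ ensures), (ii) substitute this into the perturbed one-step bound, and (iii) apply a two-term version of the recursion-solving lemma that can simultaneously absorb a $1/t$-order and a $1/t^{3/2}$-order forcing term — this likely requires an auxiliary induction or a second application of Lemma~\ref{lem:basic_recurs_lemma}. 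The Gaussian specialization ($\Bl=1$, $\Bui=d_i$, $R^2=3\max_i d_i$) then makes $Z = O(d\cdot\max_i\tr(\Sigma_i))$ and the leading term $O(d/(\alpha^2 t))$, giving the advertised $\mathcal{O}(d/\varepsilon)$ sample complexity; I would state this as the remark following the theorem rather than inside the proof.
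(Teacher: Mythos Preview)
Your overall architecture is right: frame Algorithm~\ref{alg:adaptive_updates} as biased stochastic gradient play on the $\alpha$-strongly monotone game, invoke the offset-bias case of Theorem~\ref{thm:sgd} with $g^t=\hat v^t$ (not $v^t$), feed in Lemma~\ref{lem:estimation_error} for $\mathbb{E}\|\hat A^t-\bar A\|_F^2\leq Z/(t+q_0)$, and unroll with a two-term sequence lemma (the paper uses Lemma~\ref{lem:lemma_on_seq}, which handles forcing terms $a/(t+t_0)^2$ and $b/(t+t_0)^3$). But you have mis-identified the mechanism that produces the $(t+k_0)^{-3/2}$ term, and this gap would block you when you try to execute the argument.

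The bias contribution is clean: in Theorem~\ref{thm:sgd}'s offset case it enters as $2\eta_t C_t^2/\alpha$, with $C_t=\delta\|\hat A^t-\bar A\|_F$ an $\mathcal{F}_t$-measurable quantity (not an expectation), so after taking expectations and applying Lemma~\ref{lem:estimation_error} it is $O(\eta_t\cdot Z/t)=O(1/t^2)$. This lands in the \emph{first} bracket and is the source of the $8\delta^2 Z$ term there. There is no residual ``bias--distance cross term'' to balance; Theorem~\ref{thm:sgd} has already absorbed that via Young's inequality into the $C_t^2$ form. What you are missing is that the \emph{variance} of $\hat v^t$ depends on the random estimate $\hat A^t$, not on the true $A_i$: one has $\mathbb{E}_t\|\hat v^t-\mathbb{E}_t\hat v^t\|^2\leq \max\{1,\|\hat A^t\|_F^2\}\sigma^2$, which is itself time-varying and cannot be bounded by a constant involving only $\|\bar A\|_F$. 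The paper splits $\|\hat A^t\|_F^2\leq 2\|\bar A\|_F^2+2\|\hat A^t-\bar A\|_F^2$, yielding a constant-variance piece $2\eta_t^2(1+2\|\bar A\|_F^2)\sigma^2=O(1/t^2)$ (this is your $32(1+2\|\bar A\|_F^2)\sigma^2$) \emph{and} a decaying piece $4\eta_t^2\sigma^2\|\hat A^t-\bar A\|_F^2$; the latter, after expectation, is $O(\eta_t^2\cdot Z/t)=O(1/t^3)$, and it is precisely this $1/t^3$ forcing that Lemma~\ref{lem:lemma_on_seq} converts into the $64\sigma^2 Z/(t+k_0)^{3/2}$ tail. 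If you proceed with a fixed variance bound as you propose, you will neither justify that bound nor recover the second bracketed term.
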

\begin{proof}
We will apply the standard convergence guarantees in Theorem~\ref{thm:sgd} of Appendix~\ref{sec:append:generic_res} for biased stochastic gradient methods. Using Lemma~\ref{lem:basic_thm_grad_est} we estimate the gradient bias:
$$\|\mathbb{E}_t[\hat{v}^t]- \mathbb{E}_t [v^t]\|=\mathbb{E}_t\|\hat{v}^t-{v}^t\|\leq  \delta \|\hat{A}^t- \bar A\|^2_F.$$
Next, we estimate the variance:
\begin{align*}
\mathbb{E}_t[\|\hat v_i^t-\E\hat v_i^t\|^2]&= \mathbb{E}_t\left\|\begin{bmatrix}I & 0\\
0 & \hat A_{ii}^t
\end{bmatrix}(\nabla_{i,z_i} \ell_i(x^t,z_i^t)-\E_{z_i'\sim
    \mathcal{D}_i(x^t)}\nabla_{i,z_i} \ell_i(x^t,z_i'))\right\|^2.
\end{align*}
Summing these inequalities over $i\in [n]$, we deduce
$$\E[\|\hat v_i^t-\E\hat v_i^t\|^2]\leq \max\{1,\|\hat A^t\|^2_{\rm op}\}\sigma^2.$$
Recalling the definition of $Z$ and $q_0$ and applying Theorem~\ref{thm:sgd} in Appendix~\ref{sec:append:generic_res} we deduce
\begin{align*}
\mathbb{E}_t\|x^{t+1}-x^{\star}\|^2&\leq \frac{1}{1+\eta^t\alpha} \|x^t-x^{\star}\|^2+\frac{2\eta^2_t(\max\{1,\|\hat A^t\|^2_{\rm op}\})\sigma^2}{1+\eta_t\alpha}+\frac{2\eta_t\delta^2 \|\hat{A}^t-\bar A\|^2_F}{\alpha}\\
&\leq \frac{1}{1+\eta_t\alpha} \|x^t-x^{\star}\|^2+2\eta^2_t(1+\|\hat A^t\|^2_F)\sigma^2+\frac{2\eta_t\delta^2 \|\hat{A}^t-\bar A\|^2_F}{\alpha}\\
&\leq \frac{1}{1+\eta_t\alpha} \|x^t-x^{\star}\|^2+2\eta^2_t(1+2\|\bar
A\|^2_F)\sigma^2+\frac{2\eta_t\delta^2 \|\hat{A}^t-\bar A\|^2_F}{\alpha}\\
&\quad+4\eta_t^2\sigma^2\|\hat{A}^t-\bar A\|^2_F,
\end{align*}
where the last  inequality follows from the algebraic expression $\|\hat A^t\|^2\leq 2\|\bar A\|^2+2\|\hat A^t-\bar A\|^2$. Taking expectations and using the tower rule, we compute
\begin{align*}
\mathbb{E}\|x^{t+1}-x^{\star}\|^2&\leq \frac{1}{1+\eta_t\alpha}
\E\|x^t-x^{\star}\|^2+2\eta^2_t(1+2\|\bar A\|^2_F)\sigma^2+\frac{2\eta_t\delta^2
    \E\|\hat{A}^t-\bar A\|^2_F}{\alpha}\\
    &\quad+4\eta_t^2\sigma^2\E\|\hat{A}^t-\bar A\|^2_F\\
&\leq \frac{1}{1+\eta_t\alpha} \E\|x^t-x^{\star}\|^2+2\eta^2_t(1+2\|\bar A\|^2_F)\sigma^2+\frac{2\eta_t\delta^2  Z}{\alpha(t +q_0)}+\frac{4\eta_t^2\sigma^2 Z}{t +q_0},
\end{align*}
where  the last inequality follows from Lemma~\ref{lem:estimation_error}.

Now our choice $\eta_t=\frac{2}{\alpha(t+k_0-2)}$ ensures $\frac{1}{1+\eta_t \alpha}=1-\frac{2}{t+k_0}$.
Therefore we deduce
\begin{equation}\label{eqn:key_recursions_adaptive}
\begin{aligned}
\mathbb{E}\|x^{t+1}-x^{\star}\|^2&\leq \left(1-\frac{2}{t+k_0}\right)\E\|x^t-x^{\star}\|^2+\frac{8(1+2\|\bar A\|^2_F)\sigma^2}{\alpha^2(t+k_0-2)^2}\\
&\quad+\frac{16\sigma^2 Z}{\alpha^2(t+q_0)(t+k_0-2)^2}+\frac{4\delta^2 Z}{\alpha^2(t+q_0)(t+k_0-2)}.
\end{aligned}
\end{equation}
We now aim to apply  Lemma~\ref{lem:lemma_on_seq} in Section~\ref{sec:technical_sequences}. To this end, we need to upper bound the last three terms in \eqref{eqn:key_recursions_adaptive} so that the denominators are of the form $(t+k_0)^p$ for some power $p$.
To this end, note the following elementary estimates:
$$\frac{t+k_0}{t+k_0-2}\leq \frac{k_0+1}{k_0-1}\leq 2$$
and
$$\frac{(t+k_0)^2}{(t+q_0)(t+k_0-2)}\leq \frac{k_0+1}{k_0-1}\cdot \frac{t+k_0}{t+ q_0}\leq \frac{c(k_0+1)}{k_0-1}\leq 2 c$$
where $c=\max_{t\geq 1}\{ \frac{t+k_0}{t +q_0}\}=\max\{\frac{1+k_0}{1+q_0},1\}$. Combining these estimates with \eqref{eqn:key_recursions_adaptive}, we obtain
\begin{align*}
\mathbb{E}\|x^{t+1}-x^{\star}\|^2&\leq \left(1-\frac{2}{t+k_0}\right)\|x^t-x^{\star}\|^2+\frac{32(1+2\|\bar A\|^2_F)\sigma^2 +8\delta^2 Z c}{\alpha^2(t+k_0)^2}+\frac{64\sigma^2 Z\cdot c}{(\alpha^2(t+k_0)^3)}.
\end{align*}
Applying Lemma~\ref{lem:lemma_on_seq} in Section~\ref{sec:technical_sequences}, we conclude:
\begin{align*}
\mathbb{E}\|x^{t}-x^{\star}\|^2 &\leq \frac{\max\left\{\frac{1}{2}\alpha^2(1+k_0)\|x_1-x^{\star}\|^2, ~32(1+2\|\bar A\|^2_F)\sigma^2 +8\delta^2 Z c \right\}}{\alpha^2(t+k_0)}\\
&~~~+\frac{\max\left\{\frac{1}{2}\alpha^2(1+k_0)^{3/2}\|x_1-x^{\star}\|^2,~64\sigma^2 Z c\right\}}{\alpha^2(t+k_0)^{3/2}}.
\end{align*}
This completes the proof.
\end{proof}

In particular, consider the Gaussian case $u^t\sim \mathcal{N}(0,I_d)$ in the
setting when $d_i=C_i d$ 
for some numerical constants $C_i$, and when the traces $\tr(\Sigma_i)\equiv \tr(\Sigma)$ are equal for all $i\in [n]$. Then the efficiency estimate in Theorem~\ref{thm:cconv_adapt} becomes
\begin{align*}
&\E\|x^t-x^{\star}\|^2\\
&= \mathcal{O}\left(
\frac{\max\left\{L^2\|x_1-x^{\star}\|^2, ~\|\bar A\|^2_F\sigma^2 +\delta^2
\max\left\{d\|\hat{A}^1-\bar A\|^2_F,\normalfont{\tr}(\Sigma)\sum_{i=1}^n
\Bui\right\} \max\{\frac{L^2}{d\alpha^2},1\} \right\}}{\alpha^2t+L^2}\right.\\
&~~~\left.+\frac{\max\left\{L^3\|x_1-x^{\star}\|^2,~\alpha\sigma^2
\max\left\{d\|\hat{A}^1-\bar A\|^2_F,\normalfont{\tr}(\Sigma)\sum_{i=1}^n
\Bui\right\} \max\{\frac{L^2}{d\alpha^2},1\}\right\}}{(\alpha^2
    t+L^2)^{3/2}}\right).
\end{align*}
Consequently, treating all terms besides $d$ and $t$ as constants, yields the rate $\mathcal{O}(\frac{d}{t})$.

\section{Numerical Examples}
\label{sec:numerics}
Section~\ref{sec:monotonicity} introduced two examples motivated by
practical problems including revenue maximization in ride-share markets and
interactions between election prediction platforms. In this section, we explore
each of these examples. For the former, we create \emph{semi-synthetic}
 experiments using real data from a ride-share market in Boston, MA that includes two
well-known platforms (Uber and Lyft). We demonstrate how the effects of modeling competition impacts
revenue generation and demand.  For the latter, we create synthetic
 experiments that explore the gap between the social cost at the social
optimum, performatively stable equilibrium and the Nash equilibrium. 
We start with the purely synthetic example in
Section~\ref{sec:strategic_prediction}, and then move on to the
semi-synthetic example in Section~\ref{sec:rideshare}.

\subsection{Multiplayer Performative Prediction with Strategic Data Sources}
\label{sec:strategic_prediction}
\begin{figure}[t!]
    \centering 
    \subfloat[][Error to Nash
    Equilibrium\label{fig:complexity_nash_synthetic}]{

        \includegraphics[height=0.3\textwidth]{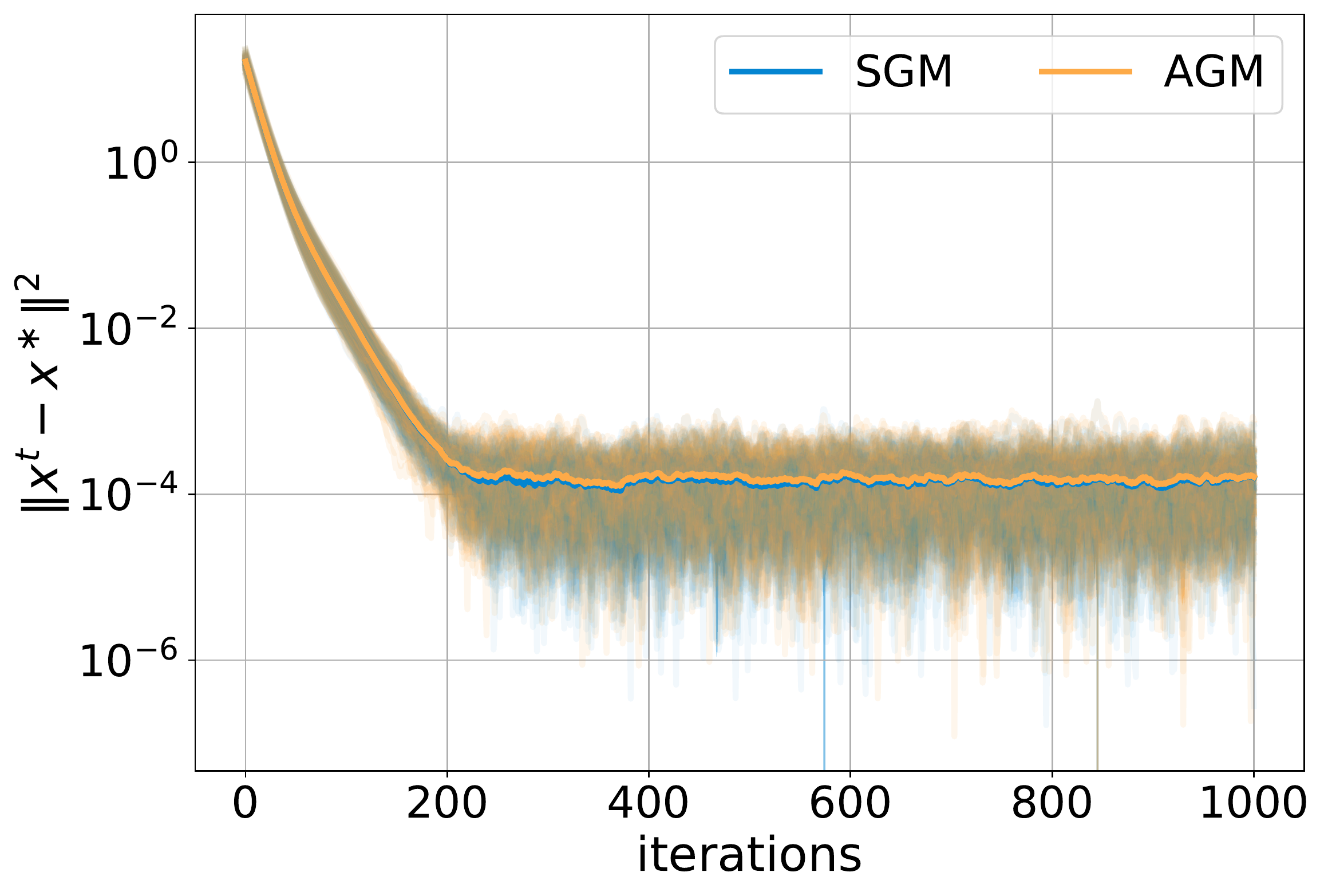}
    }\hspace*{0.25in}\subfloat[][\label{fig:complexity_ps_synthetic} Error to Performatively Stable
        Equilibrium
    ]{\includegraphics[height=0.3\textwidth]{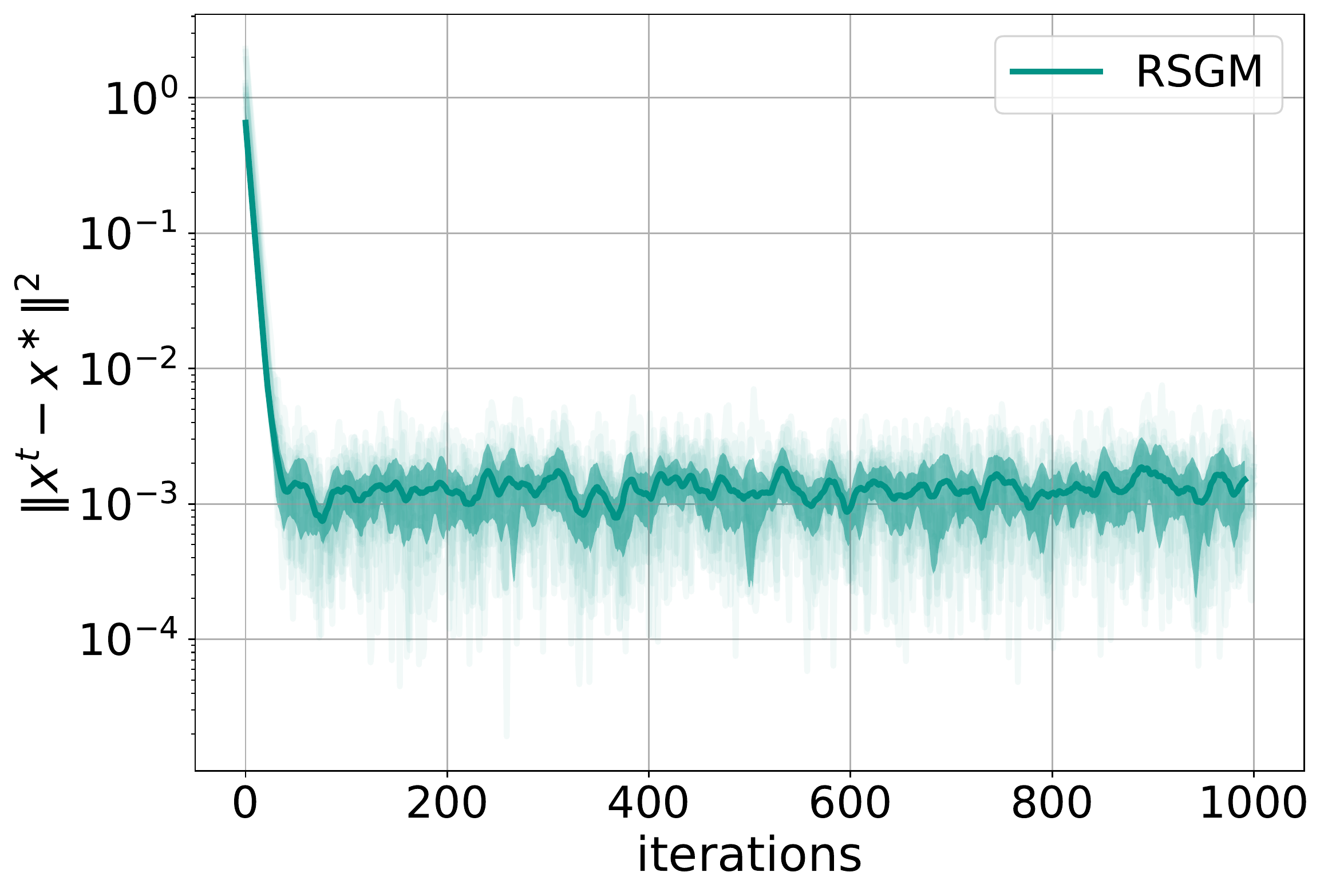}}

        \caption{\textbf{Strategic Prediction.} (a) Iteration complexity of the
        norm-squared error to the Nash equilibrium for both
    the stochastic gradient method (SGM) and the adaptive gradient method (AGM)
for a randomly generated problem instance where $m_i=10$ and $d_i=2$  for each
$i\in\{1,2\}$. The sample complexity of AGM empirically matches that of SGM even
for fairly large problem instances. (b) Iteration complexity of the norm-squared
error to the performatively stable equilibrium for the stochastic
repeated gradient method (SRGM) for a randomly generated problem instance where
$m_i=100$ and $d_i=2$  for each
$i\in\{1,2\}$. The empirical rates matches the theory in
Section~\ref{sec:algorithms} and Section~\ref{sec:algorithms_nash}. }
    \label{fig:convergence_synthetic}
\end{figure}

Recall in  Section~\ref{sec:monotonicity} in
Example~\ref{ex:strategic_prediction} we introduced a performative
prediction game motivated by multiple election platforms.
The decision problem that player $i$ faces is such that a set of features 
are drawn i.i.d.~from a
static distribution $\PP_\theta$, and player $i$ observes a sample drawn from the conditional distribution
\[z_i|\theta\sim \varphi_i(\theta)+A_ix_i+A_{-i}x_{-i}+w_i,\]
where $\varphi_i:\mb{R}^{d_i}\to \mb{R}$ is an arbitrary map, the parameter matrices
$A_i\in \mb{R}^{m_i\times d_i}$ and $A_{-i}\in \mb{R}^{m_i\times d_{-i}}$ are
fixed, and $w_i$ is a zero-mean random variable.  Each player seeks to predict $z_i$ by minimizing the loss
\[\ell_i(x_i,z_i)=\frac{1}{2}\|z_i-\theta^\top x_i\|^2.\]
We showed in Example~\ref{ex:strategic_prediction}
that  the map $x\mapsto H_x(y)$ is monotone as long as 
\[\sqrt{n-1}\cdot\max_{i\in [n]}\|A_{-i}^\top A_i\|_{\rm op}\leq \min_{i\in
[n]}\lambda_{\rm{min}}(A_i^{\top}A_i),.\]
As noted in Example~\ref{ex:strategic_prediction}, the
interpretation is that the performative effects due to
interaction with competitors are small relative to any player's own performative
effects. 
We enforce this condition on the parameters selected for the examples we explore
numerically.

We randomly generate problem instances---namely, the parameters $A_i,A_{-i}$ for
$i\in [n]$---by using {\texttt{scipy.sparse.random}}
which allows for the sparsity of the matrix to be set in addition to randomly
generating the matrix parameters. We set $d=5$ and $m=100$ for the experiments
in Figure~\ref{fig:convergence_synthetic}. These values can be changed in the
provided code, resulting in similar conclusions regarding the convergence rate.  In Figure~\ref{fig:complexity_nash_synthetic}
we show the iteration complexity of the norm-square error to the Nash equilibrium for both the stochastic gradient method 
and the adaptive gradient method. Analogously, in Figure~\ref{fig:complexity_ps_synthetic}, we show the iteration
complexity of the norm-square error to the performatively stable equilibrium for
the stochastic repeated gradient method. 
 The mean and $\pm1$ standard deviation are
depicted with darker  lines and the individual sample trajectories are 
shown using lighter shade lines of the same color indicated in the legend for
the two methods. 
For both Nash-seeking methods (stochastic gradient and adaptive gradient method), we 
use a constant step size $\eta=0.01$ for both
methods for the stochastic gradient update step. Additionally, we use the step size
$\nu=2/(t+6d)$ for
the estimation update in the adaptive gradient method. We see that their iteration complexities are
empirically the same.  For the performatively stable equilibrium seeking
algorithm, we use a constant step-size $\eta=0.1$.

\subsection{Revenue Maximization: Competition in Ride-Share Markets}
\label{sec:rideshare}
The next numerical example we explore is semi-synthetic competition between two ride-share
platforms seeking to maximize their revenue given that the demand they
experience is influenced by their own prices as well as their competitors. We
use data from a prior Kaggle competition to set up the semi-synthetic simulation
environment.\footnote{The data used in this paper is publicly available
    (\url{https://www.kaggle.com/brllrb/uber-and-lyft-dataset-boston-ma}).}

    \paragraph{Game Abstraction.}   The abstraction for the game can be described as follows. Consider a ride-share market with two platforms that
each seek to maximize their revenue by setting the price $x_i$. The vector of
demands
$z_i\in \mb{R}^{m_i}$ containing demand information for $m_i$ locations that
each ride-share platform sees is influenced not only by the prices they set but also
the prices that their competitor sets.  Suppose that platform $i$'s loss is given by
\[\ell_i(x_i,z_i)=-\frac{1}{2}z_i^{\top}x_i+\frac{\lambda_i}{2}\|x_i\|^2\]
where $\lambda_i\geq 0$ is some
regularization parameter, and $x_i\in \mb{R}^{m_i}$ represents the vector of
price differentials from some nominal price for each of the $m_i$ locations.
Observe that this game is separable since the losses $\ell_i$ do not explicitly
depend on $x_{-i}$. Moreover, let us suppose  that the random demand $z_i$
takes the semi-parametric form 
$z_i=\bz_{i}+A_ix_i+A_{-i}x_{-i}$, where $\bz_{i}$ follows some base
distribution $\PP_{i}$ and
the parameters $A_i$ and $A_{-i}$ capture price
elasticities to player $i$'s and its competitor's change in price, respectively;
naturally, the price elasticity for player $i$ to its own price changes is
negative while the price elasticity for player $i$'s demand given changes in its
competitors actions is positive. Namely, we have that $A_i\preceq 0$ and
$A_{-i}\succeq 0$ capturing that an increase in player $i$'s prices results in a
decrease in demand, while an increase in its competitor's prices results in a
increase in demand. Moreover, 
we showed in Example~\ref{example:revenue_max} that the mapping $x\mapsto H_x(y)$
is trivially monotone. Hence, the game between ride-share platforms is strongly
monotone and admits a unique Nash equilibrium. 
Throughout the remainder of this section we set $\lambda_1=\lambda_2=1$.

\begin{figure}[t!]
    \centering \subfloat[][Error to Nash Equilibrium\label{fig:complexity_nash}]{
     \includegraphics[height=0.3\textwidth]{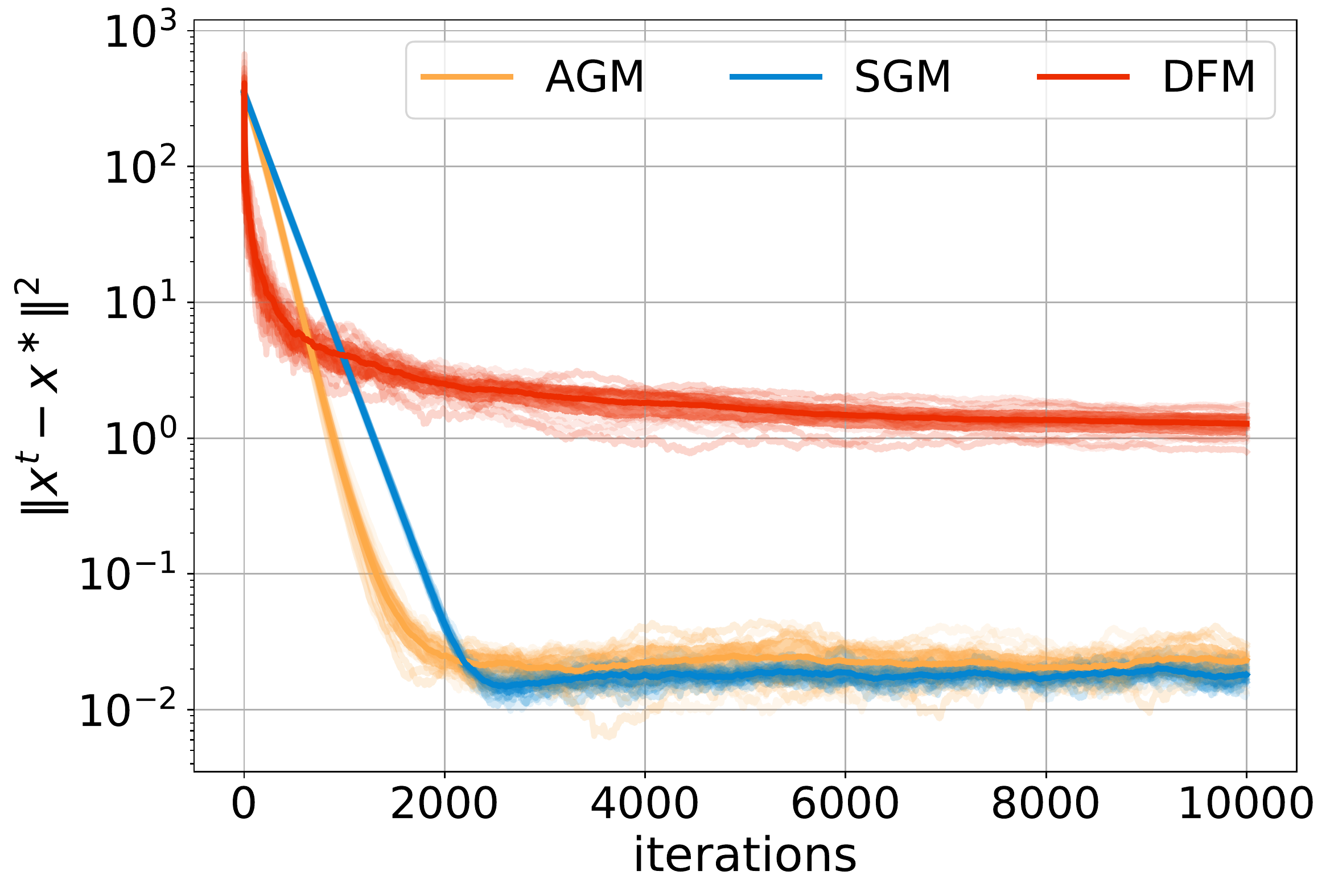}}\hspace*{0.25in}\subfloat[][\label{fig:complexity_ps}Error
        to Performatively Stable Equilibrium]{\includegraphics[height=0.3\textwidth]{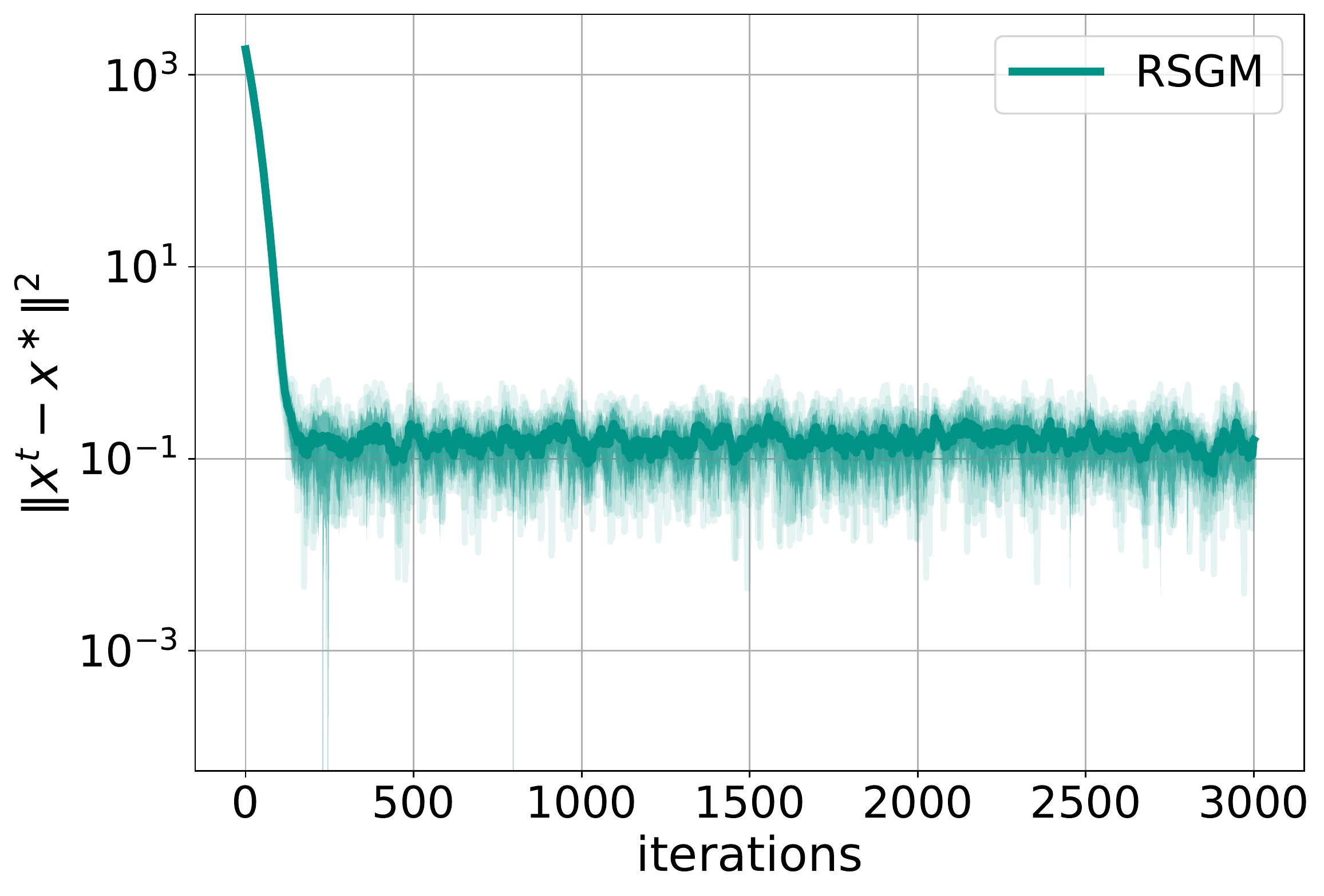}}
        \caption{\textbf{Competition in Ride-Share Markets: Experiment 1.} (a) Iteration complexity for 
        Nash seeking algorithms including derivative free
    gradient method (DFM), stochastic gradient method (SGM) and adaptive
gradient method (AGM). (b) Iteration complexity for the repeated stochastic
gradient method (RSGM), a performatively stable
equilibrium seeking algorithm. The plots demonstrate that the sample
complexities
of AGM and SGM are nearly identical and outperform DFM as expected.  }
    \label{fig:convergence}
\end{figure}
\begin{figure}[t!]
    \centering
   \subfloat[][\label{fig:loss_comp}Loss]{\includegraphics[height=0.25\textwidth]{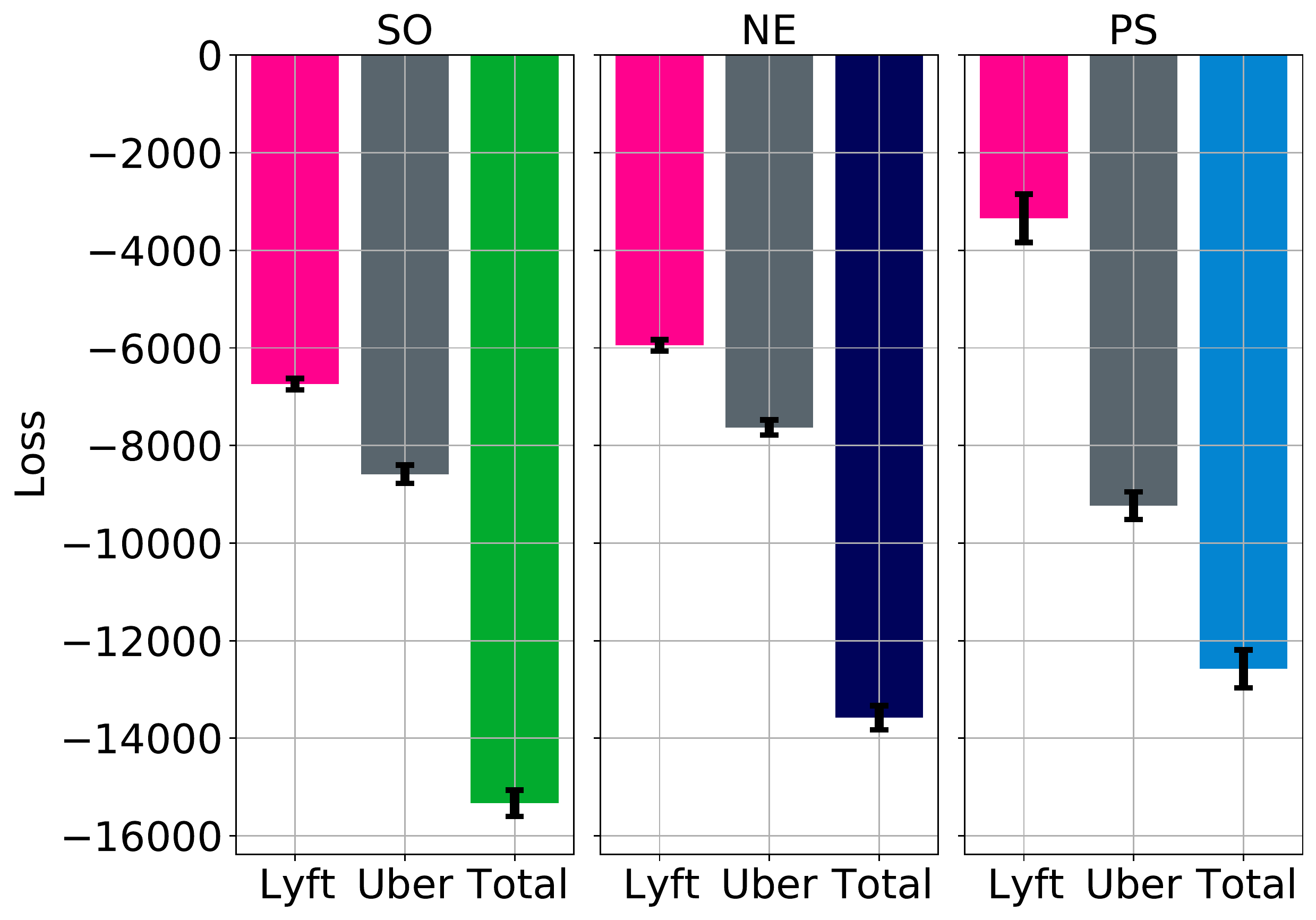}}\hfill\subfloat[][\label{fig:revenue}Revenue]{\includegraphics[height=0.25\textwidth]{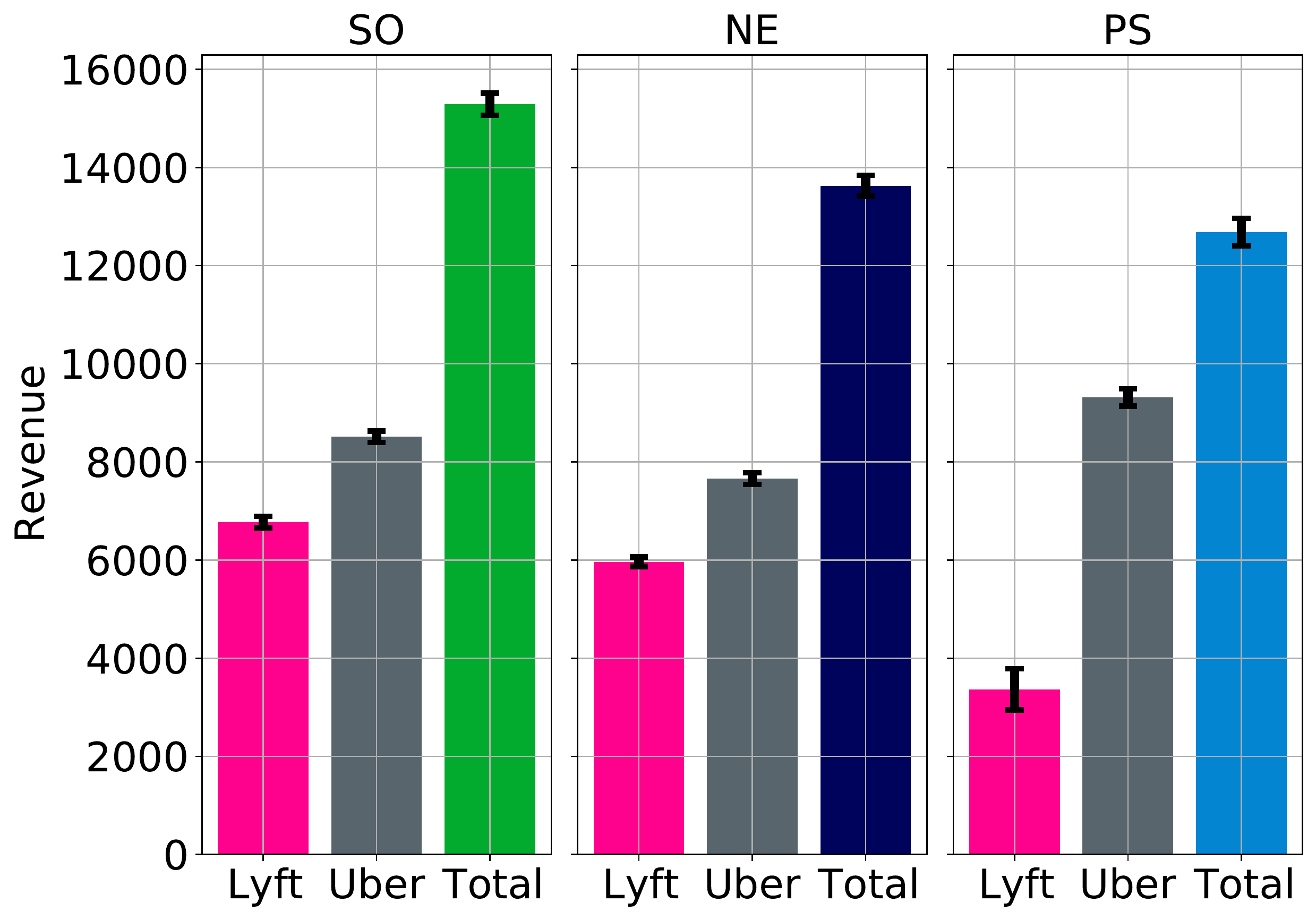}}\hfill\subfloat[][\label{fig:poa}PoA]{\includegraphics[height=0.24\textwidth]{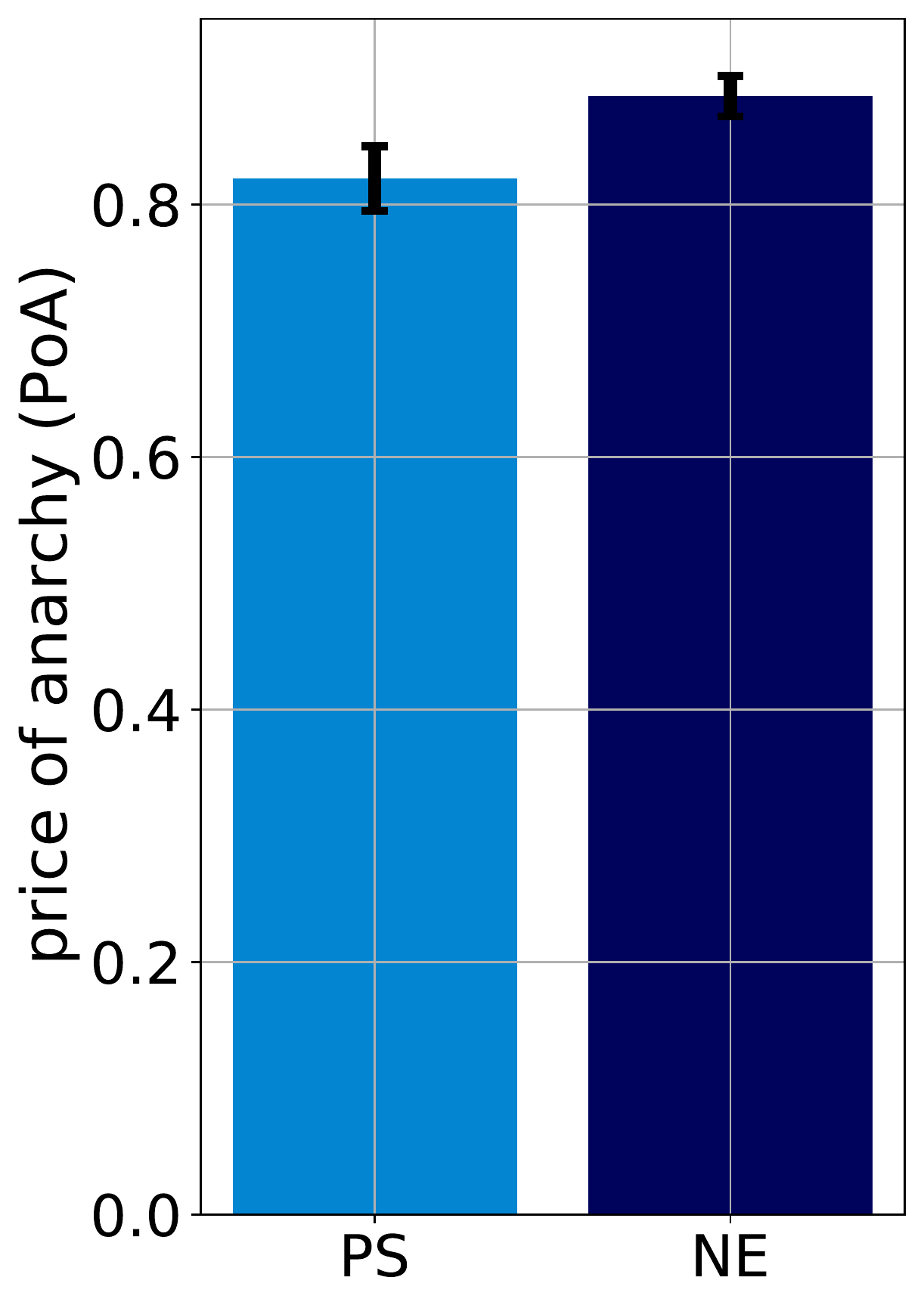}}\hfill

        \caption{\textbf{Competition in Ride-Share Markets: Experiment  2.}
        (a) Loss  and (b) revenue at the social optimum obtained via
stochastic gradient descent on the social cost (sum of players' costs), Nash
equilibrium obtained via the stochastic gradient method, and performatively stable equilibrium obtained via
the
repeated stochastic gradient method.  The overall (sum of both players) loss and
revenue are worse at the performatively stable equilibrium. Note that the loss
is the negative of the revenue plus some small $(\lambda_1=\lambda_2=1)$ regularization term.  (c) Average price of anarchy
(PoA) at the Nash equilibrum versus the performatively stable equilibrium. A
value closer to one is better, and hence the Nash equilibrium (by a small
margin) has a better PoA.}
    \label{fig:comparing}
\end{figure}
\paragraph{Semi-Synthetic Data Construction.} 
There are eleven locations that we consider in our simulation, and each element
in $x_i$ represents the price difference (set by platform $i$) from a nominal
price. We aggregate the rides into bins of \$5 increments; this is done by
taking the raw data and rounding the price to the nearest bin as follows
$5\cdot \lfloor \frac{p}{5} \rfloor$ where $p$ is the price of the ride. 
Then, for each bin $j$ we have a different base empirical distribution $\PP_{i,j}$ for each player
$i\in \{1,2\}$ which is just the collection of rides for that bin. 

For each bin, we estimate these price elasticity matrices $A_i$ and $A_{-i}$
from the data using the heuristic that a 50\% increase in price by any firm
leads to a 75\% decrease in demand. With this heuristic we use the average base
demand for each location and price bin pair to estimate both the diagonal elements of
$A_i$ and $A_{-i}$. In the experiments presented, our semi-synthetic model is
such that there is no correlation between locations; however, in the provided
code base, we have additional experiments that estimate the correlation between
locations and explore the effects of positive and negative correlations on
equilibrium outcomes.\footnote{The data and code used in this paper are publicly
    available
    (\url{https://github.com/ratlifflj/performativepredictiongames}).}
We further note that the results presented in this section are for the \$10 
nominal price bin, however, in the repository of code it is easy to adjust this
parameter to any of the other price bins. The conclusions we draw are similar
across the bins. 

    \paragraph{Experiment 1: Numerical Comparison of Iteration Complexity.} To validate the theory developed in the previous
    sections, we show the iteration complexity of the Nash seeking algorithms
    (Figure~\ref{fig:complexity_nash}),
    and the performatively stable equilibrium seeking algorithms
    (Figure~\ref{fig:complexity_ps}). We run each algorithm from twenty random
    initial conditions, and compute the error between the trajectory of the
    algorithm and the Nash equilibrium (respectively, performatvely stable
    equilibrium). In Figure~\ref{fig:complexity_nash} (respectively,
    Figure~\ref{fig:complexity_ps}), we  show the mean of these error
    trajectories and plus and minus one standard deviation. All the raw
    trajectories are also shown using a less opaque trajectory. For the
    stochastic gradient method, we use a
    constant step size $\eta=5e$-$5$ for the gradient update, and for the
    adaptive gradient method we use the step size schedule $\eta_t=\eta_0/t$ for
    the gradient update and $\nu_t=\nu_0/t$ for the estimation update where
    $\eta_0=5e$-$5$ and $\nu_0=1e$-$4$. For the derivative free method, we use a
    constant query radius $\delta=5$ and step size schedule $\eta_t=2/t$. 
The plots in Figure~\ref{fig:complexity_nash} demonstrate the that empirical sample complexity
of the adaptive gradient method and the stochastic gradient method are nearly
identical, and outperform the derivative free method as expected.
Figure~\ref{fig:complexity_ps} simply illustrates the convergence rate as
predicted by the theory in Section~\ref{sec:algorithms} for the repeated stochastic
gradient method.

    \paragraph{Experiment 2: Social Efficiency of Different Equilibrium
             Concepts.}
As noted in the preceding sections, in the study of equilibrium for games, it is
important to understand the efficiency of different equilibrium concepts. 
The typical benchmark for efficiency is 
the cost at the social optimum. 
The social cost is defined as the sum of all the players individual costs:
\[\mc{S}(x)=\sum_{i=1}^n\LL_i(x).\]
We find the unique socially optimal equilibrium $x^{\tt so}$ by running stochastic gradient
descent on the social cost (with $\eta=0.001$). Let $x^{\tt ne}$ be the Nash equilibrium of the game
$(\LL_1,\LL_2)$ and let $x^{\tt ps}$ be the performatively stable equilibrium of
the game $\mc{G}(x^{\tt ps})$, using the notation from
Section~\ref{sec:algorithms} for the game induced by $x^{\tt ps}$.
 
To compute the Nash equilibrium and the perfomatively stable equilibrium we run
the stochastic gradient method and the repeated stochastic gradient method with
step-size $\eta=0.001$, respectively.
In
Figure~\ref{fig:comparing} we show the loss at each of the 
equilibrium concepts for each player, and the total loss. For the ride-share game, both the Nash equilibrium and the
performatively stable equilibrium are unique, and hence this set is a singleton.

The price anarchy is a
common metric for equilibrium efficiency and is defined as the ratio of the
social cost at the worst case competitive equilibrium concept relative to the social cost
at the social optimum---namely, it is given by
\[{\tt PoA}(x)=\frac{\max_{x\in {\tt Eq}(\mc{G})}\mc{S}(x)}{\mc{S}(x^{\tt
so})},\]
where ${\tt Eq}(\mc{G})$ denotes the set of equilibria for the game
$\mc{G}$. An equilibrium concept is said to be more efficient the closer this
quantity is to one.

Given the stochastic
nature of the problem at hand, we define the empirical price of anarchy as the
ratio of the corresponding empirical social costs, and denote it as
$\widehat{{\tt PoA}}(x)$.
Hence, in comparing equilibrium concepts---i.e., Nash versus performatively
stable equilibrium---the equilibrium with the social cost closest to the social
cost at the social optimum is desirable. 
The empirical price of anarchy for the Nash equilibrium of the ride-share game is $\widehat{{\tt PoA}}(x^{\tt ne})=0.899$
while the empirical price of anarchy for the peformatively stable equilibrium is
$\widehat{{\tt
PoA}}(x^{\tt ps})=0.829$, where we compute the empirical social cost at the
corresponding equilibrium. This is also illustrated in Figure~\ref{fig:poa}. These ratios are
fairly close, indicating that it is an interesting direction of future research
to better understand the analytical properties of efficiency the different equilibrium
concepts. 

In Figure~\ref{fig:revenue}, we also show the revenue for each of the
equilibrium concepts. The revenue shares an analogous story to the loss: the
total revenue at the
Nash equilibrium is closer to the social optimum, but the performatively stable
equilibrium is not far off. 
\begin{figure}[t!]
    \centering
  \subfloat[][
   \label{fig:demand_revenue_total}Price change 
 by
 location.]{\includegraphics[height=0.155\textwidth]{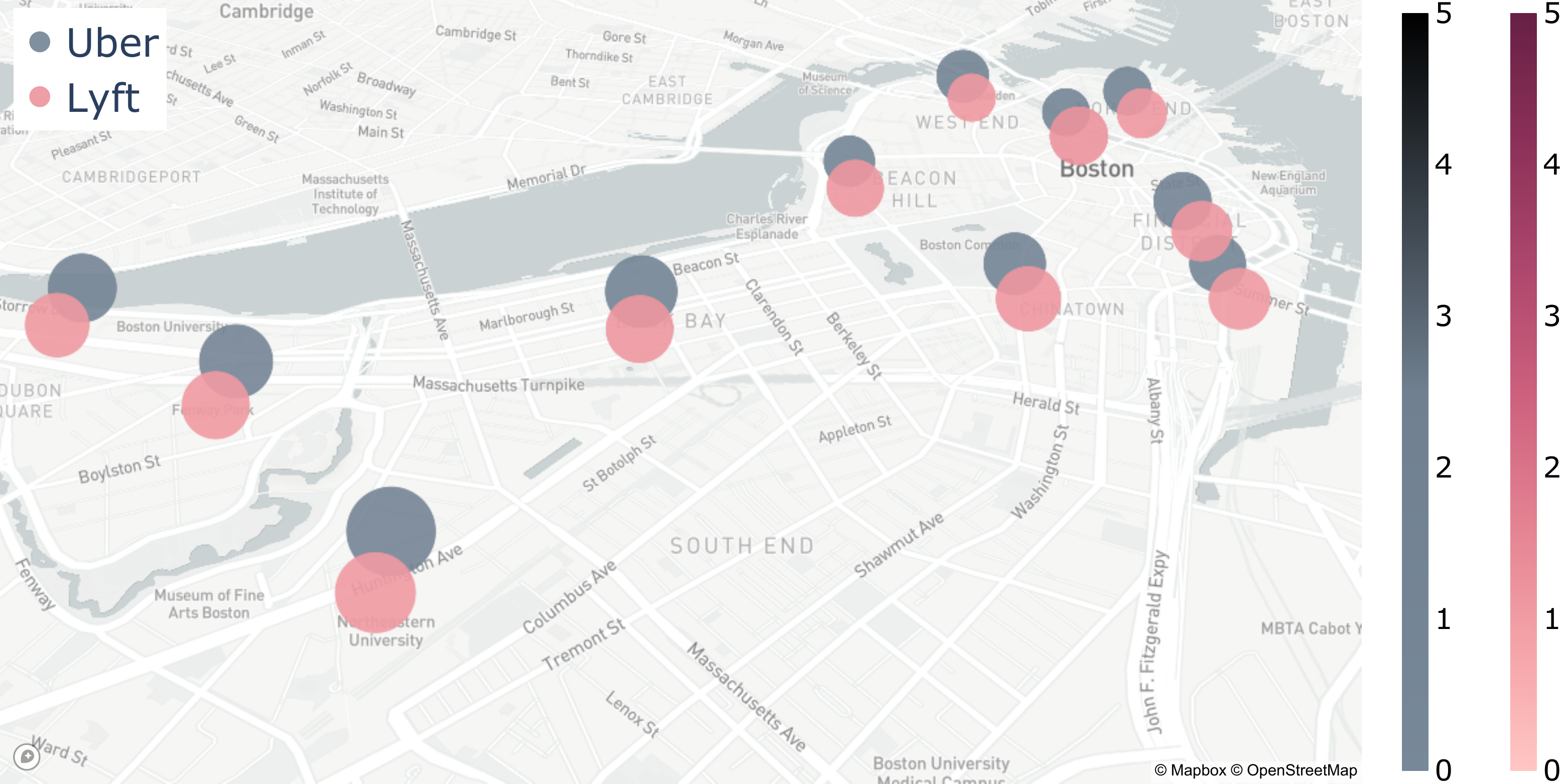}}\hfill\subfloat[][Demand
       change by location.
    ]{\includegraphics[height=0.155\textwidth]{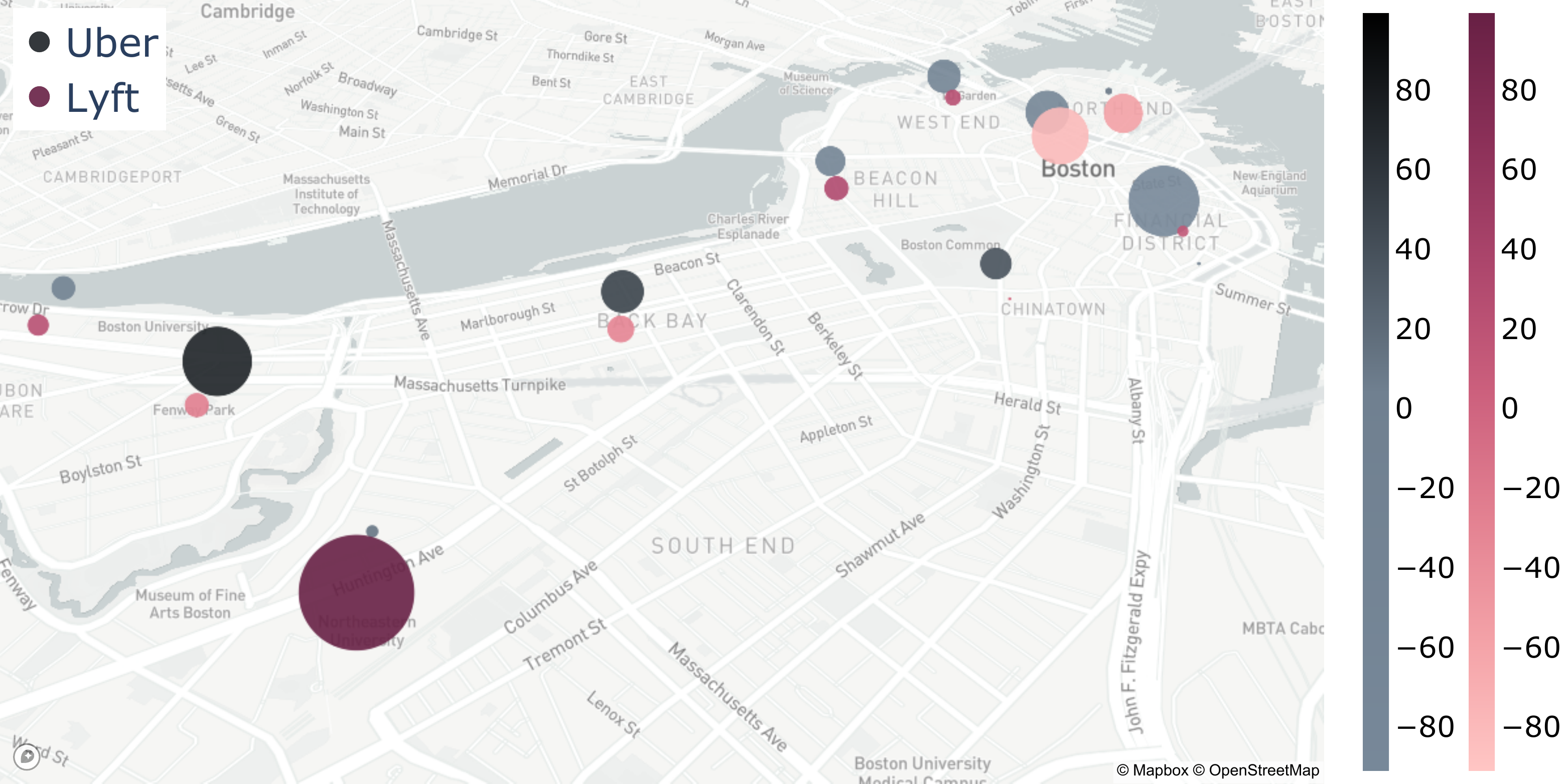}}\hfill\subfloat[][Revenue
    change by location.
]{\includegraphics[height=0.155\textwidth]{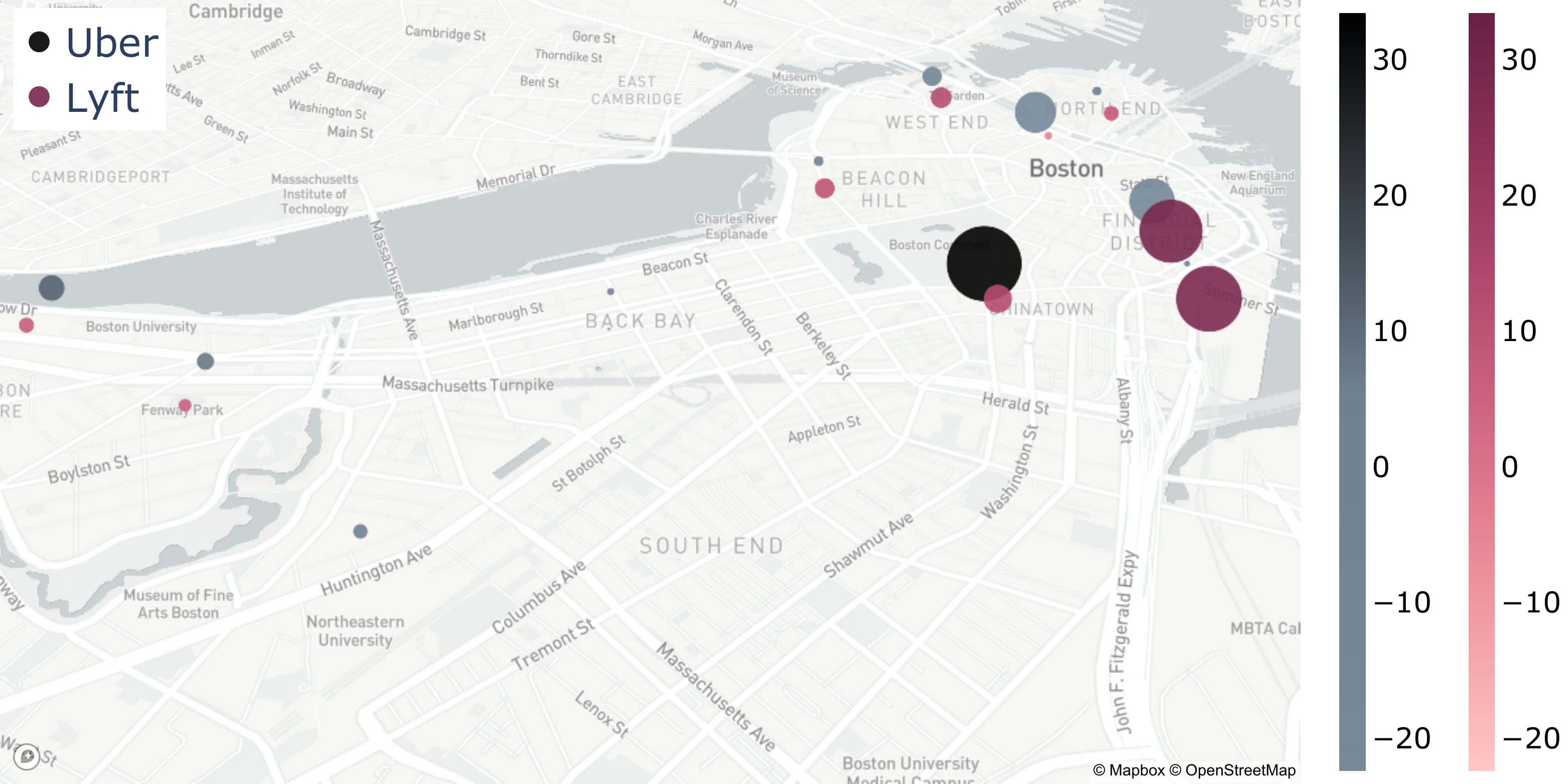}}   
    \caption{\textbf{Competition in Ride-Share Markets:  Experiment 2.} Effects due to myopic
    decision-making. Change in (a) price, (b) demand and (c) revenue from nominal by
    location for
\$10 nominal price bin due to ignoring performative effects: players run stochastic gradient descent, and the image shows
the change in demand (respectively, revenue) when both players model decision-dependence as compared to when
they both do not model decision-dependence. The
size of the circles shows the magnitude of the change, while the color indicates
the raw value. The majority of locations see a \emph{decrease} in demand, but
due to an increase in price at the Nash equilibrium relative to the myopic
outcome, there is an increase in 
revenue for
\emph{both} players.}
    \label{fig:demand_revenue_location}
\end{figure}

\begin{figure}[t!]
    \centering
   \subfloat[][
   \label{fig:both_myopic}Both
   Myopic]{\includegraphics[height=0.2\textwidth]{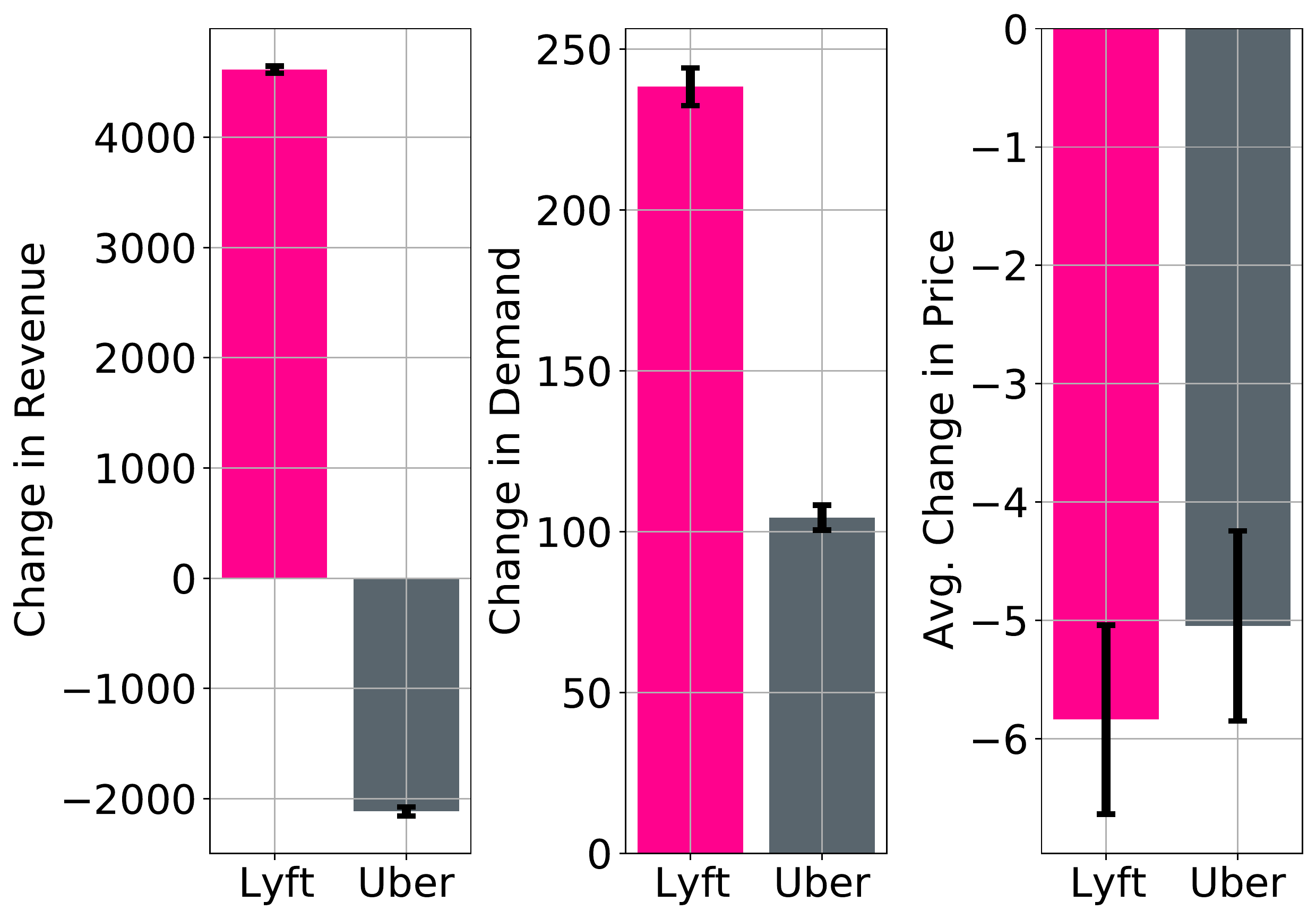}}\hfill\subfloat[][\label{fig:uber_myopic}Uber Myopic Only
   \label{fig:demand_revenue_total}]{\includegraphics[height=0.2\textwidth]{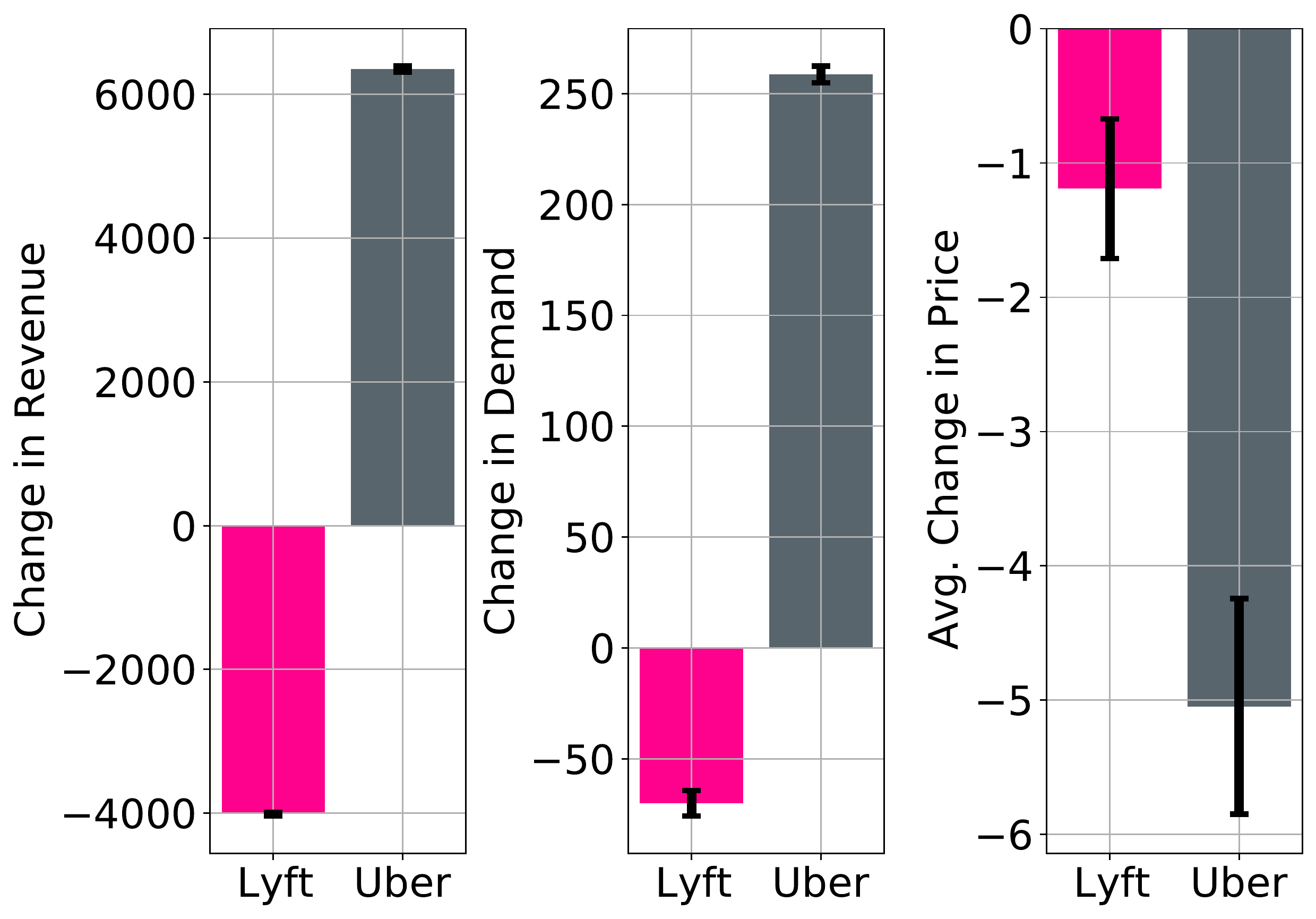}}\hfill\subfloat[][
\label{fig:lyft_myopic}Lyft
        Myopic Only]{\includegraphics[height=0.2\textwidth]{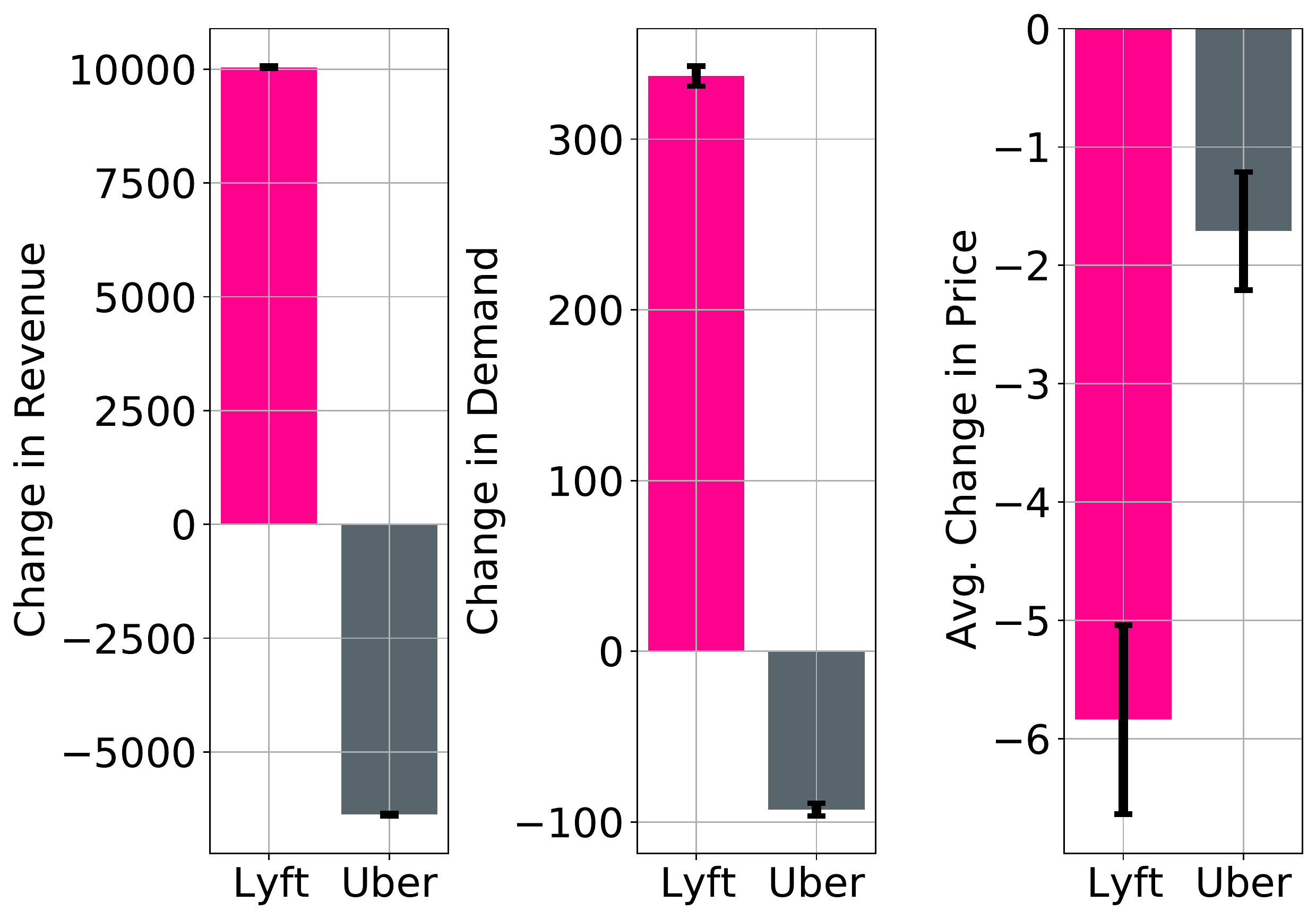}}

\subfloat[][\label{fig:both_partially_myopic}Both Partially Myopic
    ]{\includegraphics[height=0.2\textwidth]{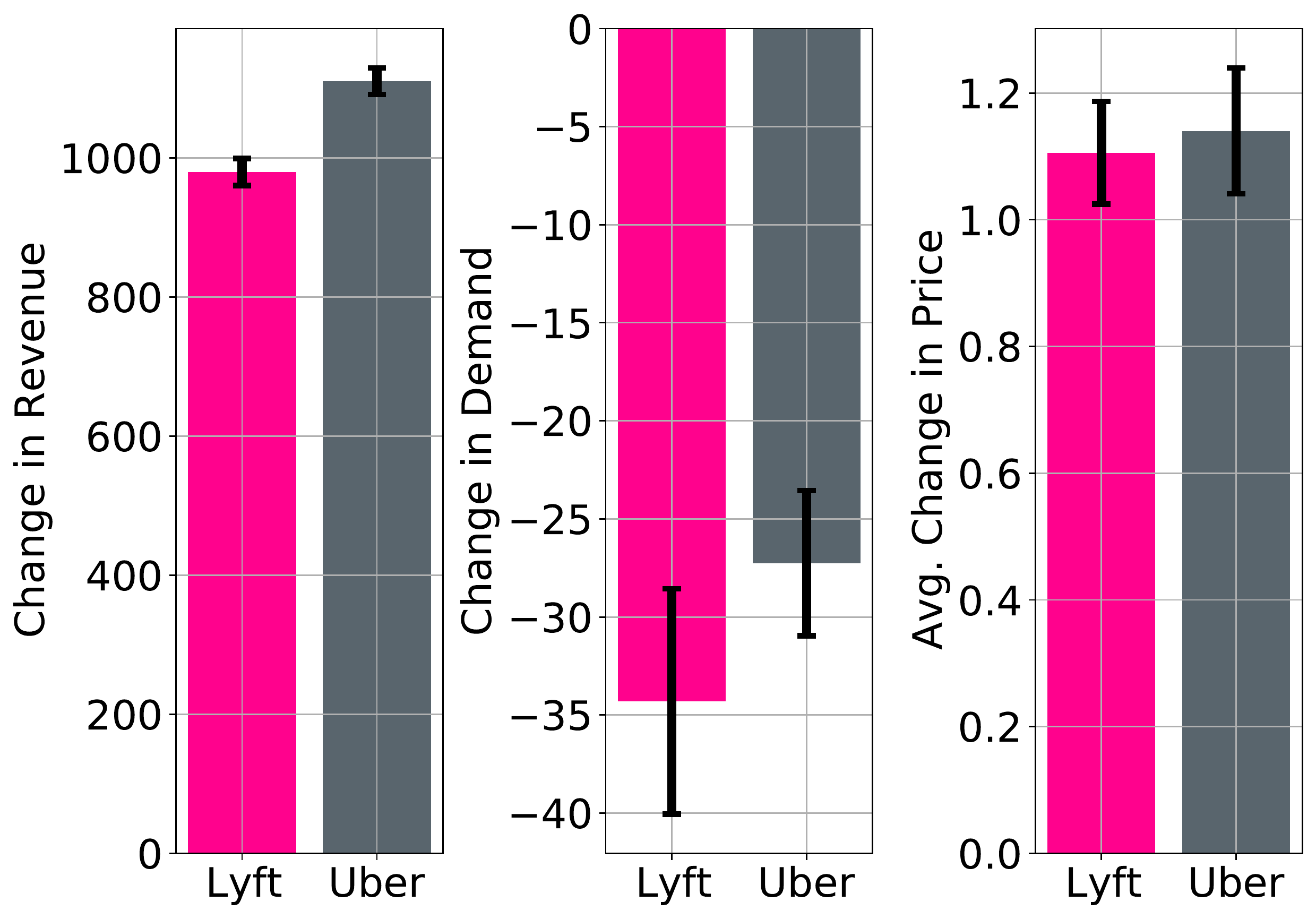}}\hfill\subfloat[][
\label{fig:uber_partially_myopic}Uber Partially Myopic Only]{\includegraphics[height=0.2\textwidth]{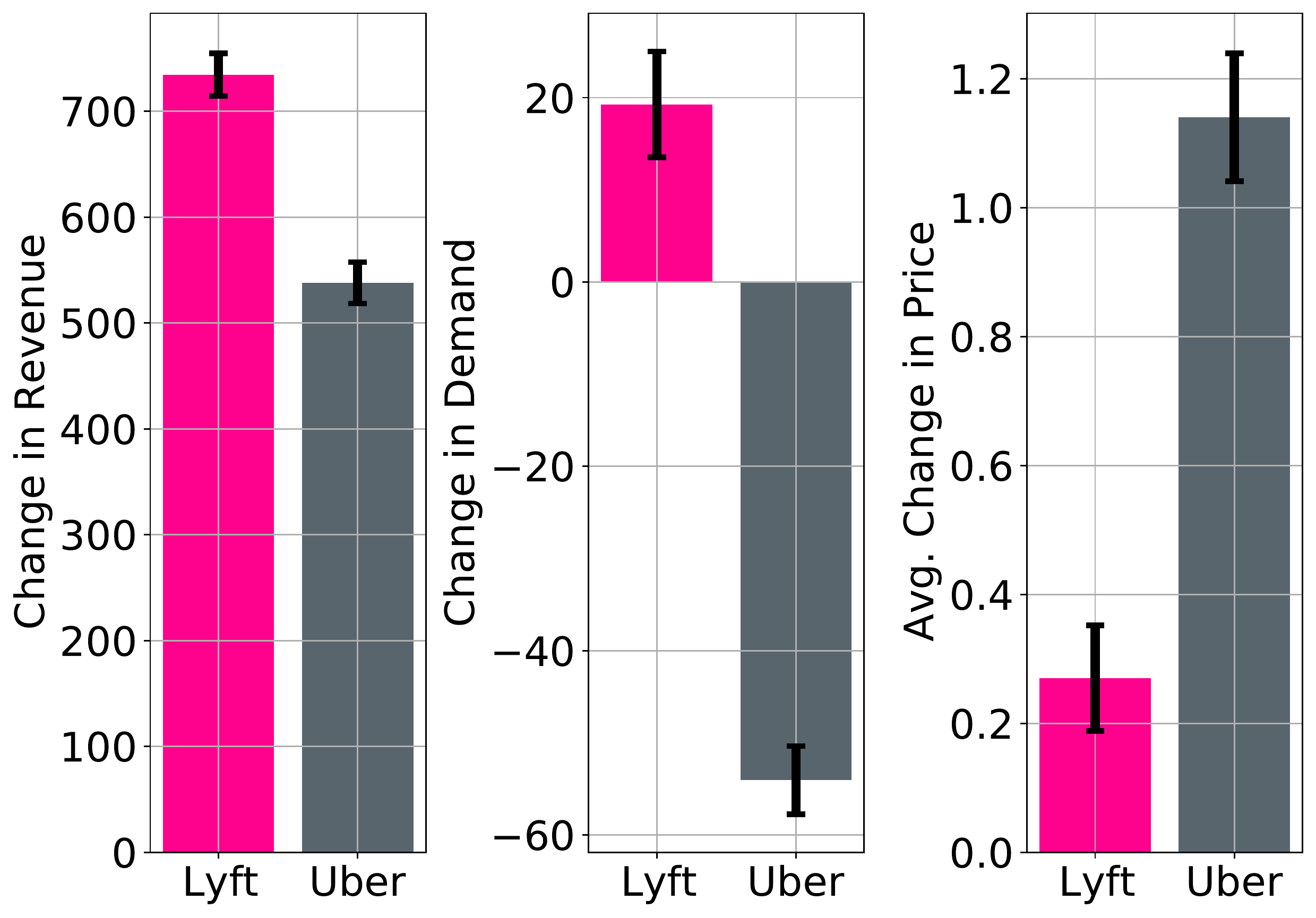}}\hfill\subfloat[][\label{fig:lyft_partially_myopic}Lyft
     Partially Myopic Only
    ]{\includegraphics[height=0.2\textwidth]{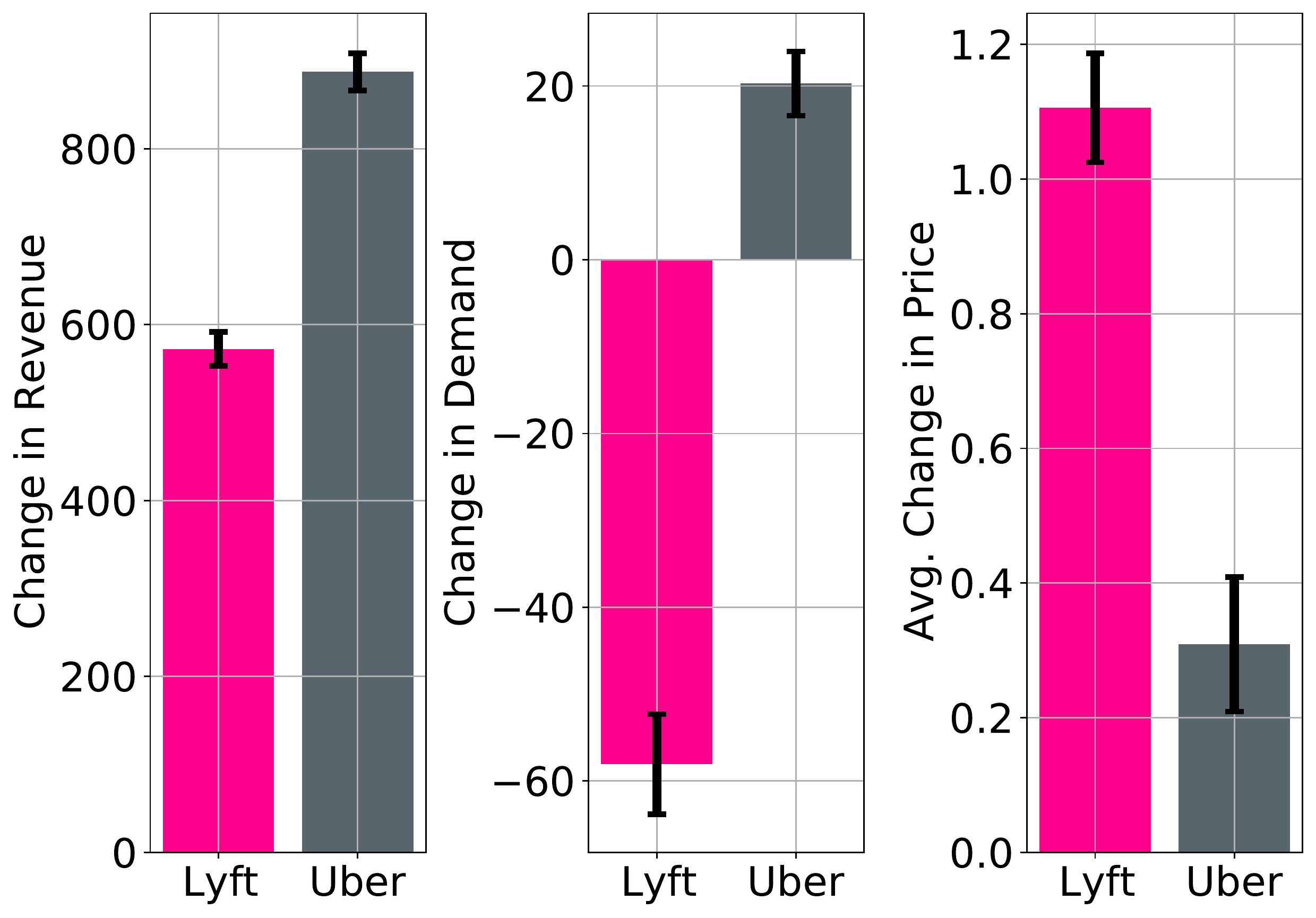}}     
    \caption{\textbf{Competition in Ride-Share Markets: Experiment 3.} Effects of players
    being (a)--(c) myopic
or  (d)--(f) partially myopic relative to Nash (not myopic, and consider competition).
Positive changes in revenue indicate the Nash equilibrium is better for that
player. When a player is myopic, they do not consider any performative effects
in their update---i.e., $g_i^t=\lambda_i  x_i^t-\frac{1}{2}\bz_i^t$---and when a player is partially myopic, they consider
their own performative effects, but not those of their competitor---i.e.,
$g_i^t=-(A_i-\lambda_i I)^\top x_i^t-\frac{1}{2}\bz_i^t$.  In (a)--(c), we
observe that when at least
one player is completely myopic, then at least one player is worse off at the
Nash equilibrium. In (d)--(f) we observe that when
at least one
player is partially myopic, the Nash equilibrium always is better for both
players.  }
    \label{fig:demand_revenue_oneplayer}
\end{figure}
        \paragraph{Experiment 3: Effect of Ignoring Performativity.} 
  We study the impact of players ignoring performative
    effects due to competition in the data distribution.    In Figure~\ref{fig:demand_revenue_oneplayer}, we explore the effects of
    players either being completely myopic---i.e., $g_i^t=\lambda_i x_i^t-\frac{1}{2}\bz_i^t$---or partially myopic---i.e., $g_i^t=-(A_i-\lambda_i
I)^\top x_i^t-\frac{1}{2}\bz_i^t$---on the change in revenue, demand and average
price (across locations)
from nominal at the Nash equilibrium.  Recall that players employing the
stochastic gradient method use the gradient  estimate $g_i^t=-(A_i-\lambda_i I)^\top
    x_i^t-\frac{1}{2}(\bz_i^t+A_{-i}x_{-i}^t)$; we refer to this as the
    non-myopic case since all performative effects are considered. 
Even when the players are myopic or partially myopic, the environment, however, does have these
    performative effects, meaning that $z_i=\bz_i+A_ix_i+A_{-i}x_{-i}$ and hence, the
    myopic player is in this sense ignoring or unaware of the fact that the data
    distribution is reacting to its competition's decisions. To compute the
    equilibrium outcomes we run the stochastic gradient method with a constant
    step-size of $\eta=0.001$. 

    In
Figure~\ref{fig:demand_revenue_oneplayer} (a)--(c), we observe that when at least
one player is completely myopic, then at least one player is worse off at the
Nash equilibrium in the sense that their revenue is lower. Interestingly, the
player that is worse off at the Nash is the non-myopic player. In
Figure~\ref{fig:demand_revenue_oneplayer} (d)--(f), on the
other hand, we observe that when
at least one
player is partially myopic, the Nash equilibrium always is better for both
players in the sense
that their individual revenues are higher at the Nash.

The values in Figure~\ref{fig:demand_revenue_oneplayer} represent the total demand and revenue changes, and average price
change across locations. It is also informative to examine the per-location
changes. Focusing in on the setting considered in
Figure~\ref{fig:demand_revenue_oneplayer} (d), we examine the per-location
price, revenue and demand. We see that the relative change depends on
    the location, however, the majority of locations see a decrease in demand,
    yet an increase in price and hence, revenue. This suggests that modeling performative effects due to
    competition can be beneficial for both players.

\section{Discussion}

The new class of games in this paper motivates interesting future work at the intersection of statistical learning theory and game theory. For instance, it is of interest to extend the present framework to handle more general parametric forms of the distributions $\mathcal{D}_i$.
Many multiplayer performative prediction problems exhibit a hierarchical structure such as a governing body that presides over an institution; hence, a  Stackelberg variant of multiplayer performative prediction is of interest. Along these lines, the multiplayer performative prediction problem is also of interest for mechanism design problems arising in applications such as recommender systems.
For instance, the recommendations that platforms  select at the Nash equilibrium influence the preferences of consumers (data-generators). A mechanism designer (e.g., the government) can place constraints on platforms to prevent them from \textit{manipulating} users' preferences to make their prediction tasks easier.

\bibliographystyle{plainnat}
\bibliography{2022jmlr_ddgames_refs}

\begin{thebibliography}{39}
\providecommand{\natexlab}[1]{#1}
\providecommand{\url}[1]{\texttt{#1}}
\expandafter\ifx\csname urlstyle\endcsname\relax
  \providecommand{\doi}[1]{doi: #1}\else
  \providecommand{\doi}{doi: \begingroup \urlstyle{rm}\Url}\fi

\bibitem[Ahmed(2000)]{ahmed2000strategic}
Shabbir Ahmed.
\newblock \emph{Strategic planning under uncertainty: Stochastic integer
  programming approaches}.
\newblock University of Illinois at Urbana-Champaign, 2000.

\bibitem[Angwin et~al.(2016)Angwin, Larson, Mattu, and
  Kirchner]{angwin2016propub}
Julia Angwin, Jeff Larson, Surya Mattu, and Lauren Kirchner.
\newblock Machine bias.
\newblock \emph{Propublica}, 2016.
\newblock URL
  \url{https://www.propublica.org/article/machine-bias-risk-assessments-in-criminal-sentencing}.

\bibitem[Bartlett et~al.(2019)Bartlett, Morse, Stanton, and
  Wallace]{bartlett2019consumer}
Robert Bartlett, Adair Morse, Richard Stanton, and Nancy Wallace.
\newblock Consumer-lending discrimination in the fintech era.
\newblock Technical report, National Bureau of Economic Research, 2019.

\bibitem[Bechavod et~al.(2020)Bechavod, Ligett, Wu, and
  Ziani]{bechavod2020causal}
Yahav Bechavod, Katrina Ligett, Zhiwei~Steven Wu, and Juba Ziani.
\newblock Causal feature discovery through strategic modification.
\newblock \emph{arXiv preprint arXiv:2002.07024}, 3, 2020.

\bibitem[Bravo et~al.(2018)Bravo, Leslie, and Mertikopoulos]{bravo2018bandit}
Mario Bravo, David~S Leslie, and Panayotis Mertikopoulos.
\newblock Bandit learning in concave $ n $-person games.
\newblock \emph{arXiv preprint arXiv:1810.01925}, 2018.

\bibitem[Brown et~al.(2020)Brown, Hod, and Kalemaj]{brown2020performative}
Gavin Brown, Shlomi Hod, and Iden Kalemaj.
\newblock Performative prediction in a stateful world.
\newblock \emph{arXiv preprint arXiv:2011.03885}, 2020.

\bibitem[Chasnov et~al.(2020)Chasnov, Ratliff, Mazumdar, and
  Burden]{chasnov20a}
Benjamin Chasnov, Lillian Ratliff, Eric Mazumdar, and Samuel Burden.
\newblock Convergence analysis of gradient-based learning in continuous games.
\newblock In Ryan~P. Adams and Vibhav Gogate, editors, \emph{Proceedings of The
  35th Uncertainty in Artificial Intelligence Conference}, volume 115 of
  \emph{Proceedings of Machine Learning Research}, pages 935--944. PMLR, 22--25
  Jul 2020.

\bibitem[Chen et~al.(2015)Chen, Mislove, and Wilson]{chen2015peek}
Le~Chen, Alan Mislove, and Christo Wilson.
\newblock Peeking beneath the hood of uber.
\newblock In \emph{Proceedings of the 2015 Internet Measurement Conference},
  IMC '15, page 495–508, 2015.
\newblock \doi{10.1145/2815675.2815681}.

\bibitem[Courtland(2018)]{courtland2018bias}
R~Courtland.
\newblock Bias detectives: the researchers striving to make algorithms fair.
\newblock \emph{Nature}, pages 357--360, 2018.
\newblock \doi{10.1038/d41586-018-05469-3}.

\bibitem[Cutler et~al.(2021)Cutler, Drusvyatskiy, and
  Harchaoui]{cutler2021stochastic}
Joshua Cutler, Dmitriy Drusvyatskiy, and Zaid Harchaoui.
\newblock Stochastic optimization under time drift: iterate averaging, step
  decay, and high probability guarantees.
\newblock \emph{arXiv preprint arXiv:2108.07356}, 2021.

\bibitem[Diethe et~al.(2019)Diethe, Borchert, Thereska, Balle, and
  Lawrence]{diethe2019continual}
Tom Diethe, Tom Borchert, Eno Thereska, Borja Balle, and Neil Lawrence.
\newblock Continual learning in practice.
\newblock \emph{arXiv preprint arXiv:1903.05202}, 2019.

\bibitem[Dieuleveut et~al.(2017)Dieuleveut, Flammarion, and
  Bach]{dieuleveut2017harder}
Aymeric Dieuleveut, Nicolas Flammarion, and Francis Bach.
\newblock Harder, better, faster, stronger convergence rates for least-squares
  regression.
\newblock \emph{The Journal of Machine Learning Research}, 18\penalty0
  (1):\penalty0 3520--3570, 2017.

\bibitem[Dong et~al.(2018)Dong, Roth, Schutzman, Waggoner, and
  Wu]{dong2018strategic}
Jinshuo Dong, Aaron Roth, Zachary Schutzman, Bo~Waggoner, and Zhiwei~Steven Wu.
\newblock Strategic classification from revealed preferences.
\newblock In \emph{Proceedings of the 2018 ACM Conference on Economics and
  Computation}, pages 55--70, 2018.

\bibitem[Drusvyatskiy and Xiao(2020)]{drusvyatskiy2020stochastic}
Dmitriy Drusvyatskiy and Lin Xiao.
\newblock Stochastic optimization with decision-dependent distributions.
\newblock \emph{arXiv preprint arXiv:2011.11173}, 2020.

\bibitem[Drusvyatskiy et~al.(2021)Drusvyatskiy, Fazel, and
  Ratliff]{drusvyatskiy2021improved}
Dmitriy Drusvyatskiy, Maryam Fazel, and Lillian~J. Ratliff.
\newblock Improved rates for derivative free gradient play in monotone games.
\newblock \emph{arXiv:2111.09456}, 2021.

\bibitem[Dupacov{\'a}(2006)]{dupacova2006optimization}
Jitka Dupacov{\'a}.
\newblock Optimization under exogenous and endogenous uncertainty.
\newblock \emph{University of West Bohemia in Pilsen}, 2006.

\bibitem[Ermoliev(1988)]{ermoliev1988}
Yu. Ermoliev.
\newblock Stochastic quasigradient methods.
\newblock In Yu. Ermoliev and R.~J-B Wets, editors, \emph{Numerical Techniques
  for Stochastic Optimization}, chapter~6, pages 141--185. Springer, 1988.

\bibitem[Gaivoronskii(1978)]{gaivoronskii1978}
A.~A. Gaivoronskii.
\newblock Nonstationary stochastic programming problems.
\newblock \emph{Cybernetics}, 14:\penalty0 575--579, 1978.

\bibitem[Ghadimi and Lan(2013)]{ghadimi2013optimal}
Saeed Ghadimi and Guanghui Lan.
\newblock Optimal stochastic approximation algorithms for strongly convex
  stochastic composite optimization, ii: shrinking procedures and optimal
  algorithms.
\newblock \emph{SIAM Journal on Optimization}, 23\penalty0 (4):\penalty0
  2061--2089, 2013.

\bibitem[Hardt et~al.(2016)Hardt, Megiddo, Papadimitriou, and
  Wootters]{hardt2016strategic}
Moritz Hardt, Nimrod Megiddo, Christos Papadimitriou, and Mary Wootters.
\newblock Strategic classification.
\newblock In \emph{Proceedings of the 2016 ACM conference on innovations in
  theoretical computer science}, pages 111--122, 2016.

\bibitem[Hellemo et~al.(2018)Hellemo, Barton, and
  Tomasgard]{hellemo2018decision}
Lars Hellemo, Paul~I Barton, and Asgeir Tomasgard.
\newblock Decision-dependent probabilities in stochastic programs with
  recourse.
\newblock \emph{Computational Management Science}, 15\penalty0 (3):\penalty0
  369--395, 2018.

\bibitem[Izzo et~al.(2021)Izzo, Ying, and Zou]{izzo2021learn}
Zachary Izzo, Lexing Ying, and James Zou.
\newblock How to learn when data reacts to your model: performative gradient
  descent.
\newblock \emph{arXiv preprint arXiv:2102.07698}, 2021.

\bibitem[Jonsbr{\aa}ten et~al.(1998)Jonsbr{\aa}ten, Wets, and
  Woodruff]{jonsbraaten1998class}
Tore~W Jonsbr{\aa}ten, Roger~JB Wets, and David~L Woodruff.
\newblock A class of stochastic programs withdecision dependent random
  elements.
\newblock \emph{Annals of Operations Research}, 82:\penalty0 83--106, 1998.

\bibitem[Lum and Isaac(2016)]{lum2016predict}
Kristian Lum and William Isaac.
\newblock To predict and serve?
\newblock \emph{Significance}, 13\penalty0 (5):\penalty0 14--19, 2016.

\bibitem[Mendler-D\"{u}nner et~al.(2020)Mendler-D\"{u}nner, Perdomo, Zrnic, and
  Hardt]{mendler2020stochastic}
Celestine Mendler-D\"{u}nner, Juan Perdomo, Tijana Zrnic, and Moritz Hardt.
\newblock Stochastic optimization for performative prediction.
\newblock In H.~Larochelle, M.~Ranzato, R.~Hadsell, M.~F. Balcan, and H.~Lin,
  editors, \emph{Advances in Neural Information Processing Systems}, volume~33,
  pages 4929--4939. Curran Associates, Inc., 2020.

\bibitem[Mertikopoulos and Zhou(2019)]{mertikopoulos2019learning}
Panayotis Mertikopoulos and Zhengyuan Zhou.
\newblock Learning in games with continuous action sets and unknown payoff
  functions.
\newblock \emph{Mathematical Programming}, 173\penalty0 (1):\penalty0 465--507,
  2019.

\bibitem[Miller et~al.(2020)Miller, Milli, and Hardt]{miller2020strategic}
John Miller, Smitha Milli, and Moritz Hardt.
\newblock Strategic classification is causal modeling in disguise.
\newblock In \emph{International Conference on Machine Learning}, pages
  6917--6926. PMLR, 2020.

\bibitem[Miller et~al.(2021)Miller, Perdomo, and Zrnic]{miller2021outside}
John Miller, Juan~C Perdomo, and Tijana Zrnic.
\newblock Outside the echo chamber: Optimizing the performative risk.
\newblock \emph{arXiv preprint arXiv:2102.08570}, 2021.

\bibitem[Perdomo et~al.(2020)Perdomo, Zrnic, Mendler-D{\"u}nner, and
  Hardt]{perdomo2020performative}
Juan Perdomo, Tijana Zrnic, Celestine Mendler-D{\"u}nner, and Moritz Hardt.
\newblock Performative prediction.
\newblock In \emph{International Conference on Machine Learning}, pages
  7599--7609. PMLR, 2020.

\bibitem[Ratliff et~al.(2016)Ratliff, Burden, and
  Sastry]{ratliff2016characterization}
Lillian~J Ratliff, Samuel~A Burden, and S~Shankar Sastry.
\newblock On the characterization of local nash equilibria in continuous games.
\newblock \emph{IEEE transactions on automatic control}, 61\penalty0
  (8):\penalty0 2301--2307, 2016.

\bibitem[Ray et~al.(2022)Ray, Ratliff, Drusvyatskiy, and Fazel]{ray2021dynamic}
Mitas Ray, Lillian~J. Ratliff, Dmitriy Drusvyatskiy, and Maryam Fazel.
\newblock Decision-dependent learning in geometrically decaying environments.
\newblock In \emph{Proceedings of the AAAI International Conference on
  Artificial Intelligence (AAAI)}, 2022.

\bibitem[Rosen(1965)]{rosen1965existence}
J~Ben Rosen.
\newblock Existence and uniqueness of equilibrium points for concave n-person
  games.
\newblock \emph{Econometrica: Journal of the Econometric Society}, pages
  520--534, 1965.

\bibitem[Rubinstein and Shapiro(1993)]{rubinstein1993discrete}
Reuven~Y Rubinstein and Alexander Shapiro.
\newblock \emph{Discrete event systems: Sensitivity analysis and stochastic
  optimization by the score function method}, volume~13.
\newblock Wiley, 1993.

\bibitem[Tatarenko and Kamgarpour(2019)]{tatarenko2019learning}
Tatiana Tatarenko and Maryam Kamgarpour.
\newblock Learning nash equilibria in monotone games.
\newblock In \emph{2019 IEEE 58th Conference on Decision and Control (CDC)},
  pages 3104--3109. IEEE, 2019.

\bibitem[Tatarenko and Kamgarpour(2020)]{tatarenko2020bandit}
Tatiana Tatarenko and Maryam Kamgarpour.
\newblock Bandit online learning of nash equilibria in monotone games.
\newblock \emph{arXiv preprint arXiv:2009.04258}, 2020.

\bibitem[Varaiya and Wets(1988)]{varaiya1988stochastic}
Pravin Varaiya and RJ-B Wets.
\newblock Stochastic dynamic optimization approaches and computation.
\newblock 1988.

\bibitem[Von~Stackelberg(2010)]{von2010market}
Heinrich Von~Stackelberg.
\newblock \emph{Market structure and equilibrium}.
\newblock Springer Science \& Business Media, 2010.

\bibitem[Wood et~al.(2021)Wood, Bianchin, and Dall’Anese]{wood2021online}
Killian Wood, Gianluca Bianchin, and Emiliano Dall’Anese.
\newblock Online projected gradient descent for stochastic optimization with
  decision-dependent distributions.
\newblock \emph{IEEE Control Systems Letters}, 2021.

\bibitem[Wu et~al.(2020)Wu, Dobriban, and Davidson]{wu2020deltagrad}
Yinjun Wu, Edgar Dobriban, and Susan Davidson.
\newblock Deltagrad: Rapid retraining of machine learning models.
\newblock In \emph{International Conference on Machine Learning}, pages
  10355--10366. PMLR, 2020.

\end{thebibliography}

$\ $\\
\appendix

\noindent{\Large \textbf{Appendix}}

\section{Stochastic gradient method with bias}\label{sec:append:generic_res}
In this section, we consider a variational inequality
\begin{equation}\label{eqn:VI}
0\in G(x)+N_{\X}(x),
\end{equation}
where $\X\subset\R^d$ is a closed convex set and $G\colon\R^d\to\R^d$ is an $L$-Lipschitz continuous and $\alpha$-strongly monotone map. We will analyze the stochastic gradient method, which in each iteration performs the update:
\begin{equation}\label{eqn:sgd_bias}
x^{t+1}=\proj_{\X} (x^t-\eta g^t),
\end{equation}
where $\eta>0$ is a fixed stepsize and $g_t$ is a sequence of random variables, which approximates $G(x^t)$. In particular, it will be crucial for us to allow $g^t$ to be a {\em biased} estimator of $G(x^t)$.
Formally, we make the following assumption on the randomness in the process. Throughout, $x^{\star}$ denotes the unique solution of \eqref{eqn:VI}.
\begin{assumption}[Stochastic framework]
Suppose that there exists a filtered probability space $(\Omega, \mathcal{F},\mathbb{F},\mathbb{P})$ with filtration $\mathbb{F}=(\mathcal{F}_t)_{t\geq 0}$ such that $\mathcal{F}_0=\{\emptyset, \Omega\}$. Suppose moreover that $g_t$ is $\mathcal{F}_{t+1}$-measurable and there exist  constants $B,\sigma\geq 0$ and $\mathcal{F}_t$-measurable random variables $C_t,\sigma_t\geq 0$ satisfying the bias/variance bounds
\begin{align*}
{\rm \bf (Bias)}\qquad\qquad \|\mathbb{E}_{t}g^t-G(x^t)]\|&\leq C_t+B\|x^t-x^*\|,\\
{\rm \bf(Variance)}\qquad\qquad \mathbb{E}_t\|g^t-\mathbb{E}_t[g^t]\|^2&\leq \sigma^2_t+D^2\|x^t-x^{\star}\|^2,
\end{align*}
where $\mathbb{E}_t=\E[\cdot\mid \mathcal{F}_t]$ denotes the conditional expectation.
\end{assumption}

The following is a one-step improvement guarantee for the stochastic gradient method in the two conceptually distinct cases $C_t=0$ and $B=0$. In the case $C_t=0$ , the bias $\mathbb{E}_{t}g^t-G(x^t)$ shrinks as the iterates approach $x^{\star}$. The theorem shows that as long as $B/\alpha<1$, with a sufficiently small stepsize $\eta$, the method can converge to an arbitrarily small neighborhood of $x^{\star}$. In the case $B=0$, one can only hope to convergence to a neighborhood of the minimizer whose radius depends on $\{C_t\}_{t\geq 0}$.
\begin{theorem}[One step improvement]\label{thm:sgd}
The following are true.
\begin{itemize}
\item {\bf (Benign bias)} Suppose $C_t\equiv 0$ for all $t$. Set $\rho:=B/\alpha$ and suppose that we are in the regime $\rho\in (0,1)$. Then with any $\eta<\frac{\alpha(1-\rho)}{8 L^2}$, the stochastic gradient method \eqref{eqn:sgd_bias} generates a sequence $x^t$ satisfying
\begin{equation}\label{eqn:ourtarget}
\mathbb{E}_t\|x^{t+1}-x^{\star}\|^2\leq \frac{1+2\eta\alpha \rho +4\eta^2 D^2+2\eta^2\alpha^2 \rho^2}{1+2\eta\alpha(\frac{1+\rho}{2})}\|x^t-x^{\star}\|^2+\frac{4\eta^2\sigma^2_t}{1+2\eta\alpha(\frac{1+\rho}{2})}.
\end{equation}
\item {\bf (Offset bias)} Suppose $B\equiv 0$.
Then with any $\eta\leq \frac{\alpha}{4 L^2}$, the stochastic gradient method \eqref{eqn:sgd_bias} generates a sequence $x^t$ satisfying
\begin{equation}\label{eqn:first_bound}
\mathbb{E}_t\|x^{t+1}-x^{\star}\|^2\leq \frac{1+2\eta^2 D^2}{1+\eta\alpha} \|x^t-x^{\star}\|^2+\frac{2\eta^2\sigma^2_t}{1+\eta\alpha}+\frac{2\eta C_t^2}{\alpha(1+\eta\alpha)}.\end{equation}
Moreover, in the zero bias setting $B\equiv C_t\equiv 0$, the estimate \eqref{eqn:first_bound} holds in the slightly wider parameter regime $\eta\leq \frac{\alpha}{2 L^2}$.
\end{itemize}
\end{theorem}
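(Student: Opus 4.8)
The plan is to derive a single one‑step recursion and then specialize it to the two cases. The engine is the standard descent estimate for the projected step: since $x^{t+1}=\proj_{\X}(x^t-\eta g^t)$ is the minimizer over $\X$ of the $1$-strongly convex function $u\mapsto\frac12\|u-(x^t-\eta g^t)\|^2$, comparing its values at $x^\star$ and at $x^{t+1}$ and expanding the square yields
$$\eta\langle g^t,x^{t+1}-x^\star\rangle\le\tfrac12\|x^t-x^\star\|^2-\tfrac12\|x^{t+1}-x^t\|^2-\tfrac12\|x^{t+1}-x^\star\|^2.$$
To bring in strong monotonicity I would evaluate $G$ at the \emph{new} iterate: adding $\langle G(x^\star),x^{t+1}-x^\star\rangle\ge0$ (valid since $-G(x^\star)\in N_{\X}(x^\star)$ and $x^{t+1}\in\X$) to $\alpha$-strong monotonicity gives $\langle G(x^{t+1}),x^{t+1}-x^\star\rangle\ge\alpha\|x^{t+1}-x^\star\|^2$, whence $\langle g^t,x^{t+1}-x^\star\rangle\ge\alpha\|x^{t+1}-x^\star\|^2-\langle G(x^{t+1})-g^t,x^{t+1}-x^\star\rangle$. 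Substituting and multiplying through by $2\eta$ produces the master inequality
$$(1+2\eta\alpha)\|x^{t+1}-x^\star\|^2\le\|x^t-x^\star\|^2-\|x^{t+1}-x^t\|^2+2\eta\langle G(x^{t+1})-g^t,x^{t+1}-x^\star\rangle,$$
whose $1+2\eta\alpha$ prefactor on the left is exactly what will give the $\tfrac{1}{1+\eta\alpha}$-type contraction.

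Next I would split the residual as $G(x^{t+1})-g^t=(G(x^{t+1})-G(x^t))-(g^t-\mathbb{E}_t g^t)-(\mathbb{E}_t g^t-G(x^t))$ and handle the three parts differently. The first is bounded by $L\|x^{t+1}-x^t\|$ (by $L$-Lipschitzness of $G$) and, by Young's inequality, its inner product contributes at most $\tfrac12\|x^{t+1}-x^t\|^2+2\eta^2 L^2\|x^{t+1}-x^\star\|^2$. For the second, writing $x^{t+1}-x^\star=(x^{t+1}-x^t)+(x^t-x^\star)$, the pairing with the $\mathcal F_t$-measurable vector $x^t-x^\star$ has zero conditional mean, while the pairing with $x^{t+1}-x^t$ is absorbed by Young into another $\tfrac12\|x^{t+1}-x^t\|^2$ plus $2\eta^2\|g^t-\mathbb{E}_t g^t\|^2$; the two half-terms annihilate the $-\|x^{t+1}-x^t\|^2$. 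The third (bias) part is the delicate one: since $\mathbb{E}_t g^t-G(x^t)$ is $\mathcal F_t$-measurable, $\mathbb{E}_t\langle\mathbb{E}_t g^t-G(x^t),x^{t+1}-x^\star\rangle=\langle\mathbb{E}_t g^t-G(x^t),\mathbb{E}_t[x^{t+1}]-x^\star\rangle\le\|\mathbb{E}_t g^t-G(x^t)\|\cdot\mathbb{E}_t\|x^{t+1}-x^\star\|$ by Jensen. Taking $\mathbb{E}_t$ everywhere and inserting the variance bound $\mathbb{E}_t\|g^t-\mathbb{E}_t g^t\|^2\le\sigma_t^2+D^2\|x^t-x^\star\|^2$ yields
$$(1+2\eta\alpha-2\eta^2 L^2)\,\mathbb{E}_t\|x^{t+1}-x^\star\|^2\le(1+2\eta^2 D^2)\|x^t-x^\star\|^2+2\eta^2\sigma_t^2+2\eta\,\|\mathbb{E}_t g^t-G(x^t)\|\cdot\mathbb{E}_t\|x^{t+1}-x^\star\|.$$

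The two cases differ only in the last term. In the \textbf{offset-bias} regime $B\equiv0$ one bounds $\|\mathbb{E}_t g^t-G(x^t)\|\le C_t$ and applies Young, $2\eta C_t\,\mathbb{E}_t\|x^{t+1}-x^\star\|\le\tfrac{\eta\alpha}{2}\mathbb{E}_t\|x^{t+1}-x^\star\|^2+\tfrac{2\eta C_t^2}{\alpha}$; moving the $\mathbb{E}_t\|x^{t+1}-x^\star\|^2$ terms to the left and using $\eta\le\tfrac{\alpha}{4L^2}$ (so $2\eta\alpha-2\eta^2 L^2-\tfrac{\eta\alpha}{2}\ge\eta\alpha$) gives exactly \eqref{eqn:first_bound}; and if moreover $C_t\equiv0$ there is no bias term, the $\tfrac{\eta\alpha}{2}$ buffer is unnecessary, and $\eta\le\tfrac{\alpha}{2L^2}$ suffices. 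In the \textbf{benign-bias} regime $C_t\equiv0$ one has $\|\mathbb{E}_t g^t-G(x^t)\|\le B\|x^t-x^\star\|$; here I would additionally exploit that $x^\star$ is a fixed point of $\proj_{\X}(\,\cdot-\eta G(\cdot)\,)$ together with nonexpansiveness to bound $\mathbb{E}_t\|x^{t+1}-x^\star\|\le(1+\eta(B+L+D))\|x^t-x^\star\|+\eta\sigma_t$, so that the bias term is dominated by $2\eta B\|x^t-x^\star\|^2$ plus lower-order pieces. Splitting the latter with Young and invoking $\rho=B/\alpha\in(0,1)$ together with $\eta<\tfrac{\alpha(1-\rho)}{8L^2}$ (which keeps $2\eta^2 L^2$ and the $O(\eta^2 B^2)$ leftovers inside the $\eta\alpha(1-\rho)$ slack on the left) reshapes the master recursion into \eqref{eqn:ourtarget}: the leading $2\eta B=2\eta\alpha\rho$ enters the numerator, the Young leftovers supply the $2\eta^2\alpha^2\rho^2$ there, and a slightly looser Young constant on the noise accounts for the $4\eta^2\sigma_t^2$ (rather than $2\eta^2\sigma_t^2$).

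I expect the bookkeeping in this last paragraph to be the main obstacle; the rest is routine. The three points that require care are: (i) evaluating strong monotonicity at $x^{t+1}$, not $x^t$, to obtain the $1+2\eta\alpha$ prefactor on the left; (ii) keeping the bias term paired with $\mathbb{E}_t\|x^{t+1}-x^\star\|^2$ via $\mathcal F_t$-measurability and Jensen, so it is charged against the left-hand contraction buffer rather than inflating the coefficient of $\|x^t-x^\star\|^2$; and (iii) choosing the Young weights and the stepsize thresholds so that every error term fits within the available slack---$\eta\alpha$ in the offset case, $\eta\alpha(1-\rho)$ in the benign case. The projected descent lemma, the three-way decomposition of the residual, and the cancellation of $\|x^{t+1}-x^t\|^2$ are standard.
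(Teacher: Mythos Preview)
Your master inequality, the evaluation of strong monotonicity at $x^{t+1}$, and the three–way split of $G(x^{t+1})-g^t$ are exactly the skeleton of the paper's proof, and your treatment of the \textbf{offset-bias} case (including the zero-bias refinement) matches the paper's argument and gives precisely \eqref{eqn:first_bound}.

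The \textbf{benign-bias} case, however, has a real gap in the bookkeeping you flagged. After pairing the bias with $\mathbb{E}_t\|x^{t+1}-x^\star\|$ and inserting your nonexpansiveness bound, the contribution is
\[
2\eta B\|x^t-x^\star\|\cdot\bigl[(1+\eta(B+L+D))\|x^t-x^\star\|+\eta\sigma_t\bigr]
=2\eta B\|x^t-x^\star\|^2+2\eta^2 B(B+L+D)\|x^t-x^\star\|^2+2\eta^2 B\sigma_t\|x^t-x^\star\|.
\]
The leftover $2\eta^2 BL\|x^t-x^\star\|^2$ is not $O(\eta^2 B^2)$ (typically $L\ge\alpha>B$), and since it multiplies $\|x^t-x^\star\|^2$ rather than $\mathbb{E}_t\|x^{t+1}-x^\star\|^2$ it cannot be ``moved to the left'' into the $\eta\alpha(1-\rho)$ slack as you suggest. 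So your route yields a valid one-step bound, but not the stated constants of \eqref{eqn:ourtarget}.

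The paper avoids this detour altogether: it applies the \emph{same} split $x^{t+1}-x^\star=(x^{t+1}-x^t)+(x^t-x^\star)$ to the bias term as well (not only to the noise). The pairing $\langle \mathbb{E}_t g^t-G(x^t),\,x^\star-x^t\rangle$ is then bounded directly by $B\|x^t-x^\star\|^2$ via Cauchy--Schwarz, producing the exact $2\eta B=2\eta\alpha\rho$ in the numerator with no auxiliary estimate of $\|x^{t+1}-x^\star\|$. The remaining pairing $\langle \mathbb{E}_t g^t-G(x^t),\,x^t-x^{t+1}\rangle$ is handled by Young's inequality and absorbed by the negative $\|x^{t+1}-x^t\|^2$ term; choosing the three Young parameters appropriately (in your notation, take $\Delta_1=\tfrac{1}{4\eta}$ for the noise, which is what doubles $2\eta^2$ to $4\eta^2$, and $\Delta_3^{-1}=\tfrac{(1-\rho)\alpha}{2}$ for the Lipschitz piece) yields exactly the numerator $1+2\eta\alpha\rho+4\eta^2 D^2+2\eta^2\alpha^2\rho^2$ and denominator $1+\eta\alpha(1+\rho)$.
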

\begin{proof}
We begin with the first claim. To this end, suppose $C_t\equiv 0$ for all $t$. Set $\rho:=B/\alpha$ and suppose that we are in the regime $\rho\in (0,1)$.
Fix three constants $\Delta_1,\Delta_2,\Delta_3>0$ to be specified later. Noting that $x^{t+1}$ is the minimizer of the $1$-strongly convex function $x\mapsto \frac{1}{2}\|x^t-\eta g^t-x\|^2$ over $\X$, we deduce
\begin{align*}
\frac{1}{2}\|x^{t+1}-x^{\star}\|^2
\leq \frac{1}{2}\|x^t-\eta g^t-x^{\star}\|^2-\frac{1}{2}\|x^t-\eta g^t-x^{t+1}\|^2.
\end{align*}
Expanding the squares on the right hand side and combining terms yields
\begin{align*}
\frac{1}{2}\|x^{t+1}-x^{\star}\|^2&\leq \frac{1}{2}\|x^t-x^{\star}\|^2-\eta\langle   g^t, x^{t+1}-x^{\star}\rangle -\frac{1}{2}\|x^{t+1}-x^t\|^2\\
&=\frac{1}{2}\|x^t-x^{\star}\|^2-\eta\langle   g^t, x^{t}-x^{\star}\rangle -\frac{1}{2}\|x^{t+1}-x^t\|^2-\eta\langle   g^t, x^{t+1}-x^{t}\rangle.
\end{align*}
Setting $\mu^t:=\E_t[g^t]$, we successively compute
\begin{align}
\frac{1}{2}\mathbb{E}_t\|x^{t+1}-x^{\star}\|^2&\leq
 \frac{1}{2}\|x^t-x^{\star}\|^2-\eta\langle    \mathbb{E}_t g^t, x^{t}-x^{\star}\rangle -\frac{1}{2}\mathbb{E}_t\|x^{t+1}-x^t\|^2-\eta \mathbb{E}_t\langle   g^t, x^{t+1}-x^{t}\rangle \notag\\
 &\leq \frac{1}{2}\|x^t-x^{\star}\|^2-\eta\langle    \mu^t, x^{t}-x^{\star}\rangle -\frac{1}{2}\mathbb{E}_t\|x^{t+1}-x^t\|^2-\eta \mathbb{E}_t\langle   g^t, x^{t+1}-x^{t}\rangle\notag\\
&=\frac{1}{2}\|x^t-x^{\star}\|^2-\eta\mathbb{E}_t\langle    G(x^{t+1}), x^{t+1}-x^{\star}\rangle -\frac{1}{2}\mathbb{E}_t\|x^{t+1}-x^t\|^2\notag\\
&+\eta \underbrace{\mathbb{E}_t\langle   g^t-\mu^t, x^{t}-x^{t+1}\rangle}_{P_1}+\eta\underbrace{\mathbb{E}_t\langle    \mu^t-G(x^{t+1}), x^{\star}-x^{t+1}\rangle}_{P_2}]. \label{eqn:startyoung}
\end{align}
Taking into account strong monotonicity of $G$, we deduce $\langle G(x^{t+1}), x^{t+1}-x^{\star}\rangle\geq \alpha\|x^{t+1}-x^{\star}\|^2$ and therefore
\begin{equation}\label{eqn:start_from_here}
\frac{1+2\eta\alpha}{2}\mathbb{E}_t\|x^{t+1}-x^{\star}\|^2\leq \frac{1}{2}\|x^t-x^{\star}\|^2 -\frac{1}{2}\mathbb{E}_t\|x^{t+1}-x^t\|^2+\eta(P_1+P_2).
\end{equation}
Using Young's inequality, we may upper bound $P_1$ and $P_2$ by:
\begin{equation}\label{eqn:ochenyoung}
P_1\leq \frac{\sigma^2_t+D^2\|x^t-x^{\star}\|^2}{2\Delta_1}+\frac{\Delta_1 \mathbb{E}_t\|x^{t+1}-x^t\|^2}{2}.
\end{equation}
Next, we decompose $P_2$ as
\begin{equation}\label{eqn:p2_bound}
P_2=\langle \mu^t-G(x^{t}),x^{\star}-x^{t}\rangle+\mathbb{E}_t\langle \mu^t-G(x^t),x^t-x^{t+1}\rangle+\mathbb{E}_t\langle G(x^t)-G(x^{t+1}),x^{\star}-x^{t+1}\rangle.
\end{equation}
We bound each of the three products in turn. The first bound follows from our assumption on the bias:
\begin{equation}\label{eqn:trick1}
\langle \mu^t-G(x^{t}),x^{\star}-x^{t}\rangle\leq B\|x^t-x^{\star}\|^2.
\end{equation}
The second bound uses Young's inequality and our assumption on the bias:
\begin{equation}\label{eqn:trick1}
\begin{aligned}
\mathbb{E}_t\langle \mu^t-G(x^t),x^t-x^{t+1}\rangle&\leq \frac{\Delta_2\|\mu^t-G(x^t)\|^2}{2}+\frac{\mathbb{E}_t\|x^t-x^{t+1}\|^2}{2\Delta_2}\\
&\leq \frac{\Delta_2B^2\|x^t-x^{\star}\|^2}{2}+\frac{\mathbb{E}_t\|x^t-x^{t+1}\|^2}{2\Delta_2}
\end{aligned}
\end{equation}
The third inequality uses Young's inequality and Lipschitz continuity of $G$:
\begin{equation}\label{eqn:strick}
\begin{aligned}
\mathbb{E}_t\langle G(x^t)-G(x^{t+1}),x^{\star}-x^{t+1}\rangle&\leq \frac{\Delta_3\|G(x^t)-G(x^{t+1})\|^2}{2}+\frac{\mathbb{E}_t\|x^{\star}-x^{t+1}\|^2}{2\Delta_3}\\
&\leq  \frac{\Delta_3L^2\|x^t-x^{t+1}\|^2}{2}+\frac{\mathbb{E}_t\|x^{\star}-x^{t+1}\|^2}{2\Delta_3}
\end{aligned}
\end{equation}
Putting together all the estimates  \eqref{eqn:start_from_here}-\eqref{eqn:strick} yields
\begin{equation}\label{eqn:whereitas}
\begin{aligned}
\frac{1+2\eta\alpha-2\eta\Delta_3^{-1}}{2}\mathbb{E}_t\|x^{t+1}-x^{\star}\|^2&\leq \frac{1+\eta D^2\Delta^{-1}_1+2\eta B +\eta\Delta_2 B^2}{2}\|x^t-x^{\star}\|^2\\
&~~~-\frac{1-\eta\Delta_1-\eta\Delta_2^{-1}-\eta\Delta_3 L^2}{2}\mathbb{E}_t\|x^{t+1}-x^{t}\|^2+\frac{\eta\sigma^2_t}{2\Delta_1}.
\end{aligned}
\end{equation}
Let us now set
$$\Delta_3^{-1}=\frac{(1-\rho)\alpha}{2}, \qquad \Delta_1=\frac{1}{4\eta},\qquad \Delta_2^{-1}:=\eta^{-1}-\Delta_1-\Delta_3 L^2.$$
Notice $\Delta_1\leq \frac{1}{2\eta}-\Delta_3 L^2$ by our assumption that  $\eta\leq\frac{\alpha(1-\rho)}{8 L^2}$. In particular, this implies $\Delta_2^{-1}\geq\frac{1}{2\eta}$.
Notice that our choice of $\Delta_2$ ensures that the
 the fraction multiplying $\mathbb{E}_t\|x^{t+1}-x^{t}\|^2$ in \eqref{eqn:whereitas} is zero. We therefore deduce
 \begin{align*}
\mathbb{E}_t\|x^{t+1}-x^{\star}\|^2&\leq \frac{1+\eta D^2\Delta^{-1}_1+2\eta B +\eta\Delta_2 B^2}{1+2\eta\alpha-2\eta\Delta_3^{-1}}\|x^t-x^{\star}\|^2+\frac{\eta\sigma^2_t}{\Delta_1(1+2\eta\alpha-2\eta\Delta_3^{-1})}\\
&\leq \frac{1+2\eta\alpha \rho +4\eta^2 D^2+2\eta^2\alpha^2 \rho^2}{1+2\eta\alpha(\frac{1+\rho}{2})}\|x^t-x^{\star}\|^2+\frac{4\eta^2\sigma^2_t}{1+2\eta\alpha(\frac{1+\rho}{2})},
\end{align*}
thereby completing the proof of \eqref{eqn:ourtarget}.

We next prove the second claim. To this end, suppose $B=0$. All of the reasoning leading up to and including \eqref{eqn:ochenyoung} is valid. Continuing from this point, using Young's inequality, we upper bound $P_2$ by:
\begin{equation}\label{eqn:ochenyoung2}
\begin{aligned}
P_2\leq \frac{\mathbb{E}_t\|\mu^t-G(x^{t+1})\|^2}{2\Delta_2}+\frac{\Delta_2\mathbb{E}_t\|x^{t+1}-x^{\star}\|^2}{2}.
\end{aligned}
\end{equation}
Next observe
\begin{equation}\label{eqn:square_expand}
\begin{aligned}
\mathbb{E}_t\|\mu^t-G(x^{t+1})\|^2&\leq 2\mathbb{E}_t\|\mu^t-G(x^t)\|^2+2\mathbb{E}_t\|G(x^t)-G(x^{t+1})\|^2,\\
&\leq 2C_t^2+2L^2\|x^t-x^{t+1}\|^2
\end{aligned}
\end{equation}
and therefore
\begin{equation}\label{eqn:final_piece}
P_2\leq \frac{2C_t^2+2L^2\|x^t-x^{t+1}\|^2}{2\Delta_2}+\frac{\Delta_2\mathbb{E}_t\|x^{t+1}-x^{\star}\|^2}{2}
\end{equation}
Putting the estimates \eqref{eqn:start_from_here}, \eqref{eqn:ochenyoung}, and \eqref{eqn:final_piece}  together yields:
\begin{equation}\label{eqn:almost_there2}
\begin{aligned}
\frac{1+\eta(2\alpha-\Delta_2)}{2}\mathbb{E}_t\|x^{t+1}-x^{\star}\|^2&\leq  \frac{1+\eta D^2\Delta_1^{-1} }{2}\|x^t-x^{\star}\|^2 \\
&+\frac{\eta \sigma^2_t}{2\Delta_1}+\frac{2\eta C_t^2 \Delta_2^{-1}}{2}-\frac{1-2\eta L^2 \Delta_2^{-1}-\eta\Delta_1}{2}\mathbb{E}_t\|x^{t+1}-x^t\|^2
\end{aligned}
\end{equation}
Setting $\Delta_2=\alpha$ and $\Delta_1=\eta^{-1}- \frac{2 L^2}{\alpha}$ ensures that the last term on the right is zero. Notice that our assumption $\eta\leq \frac{\alpha}{4 L^2}$ ensures $\Delta_1\geq \frac{1}{2\eta}$.
Rearranging \eqref{eqn:almost_there2} directly yields \eqref{eqn:first_bound}.
In the case $B\equiv C_t\equiv 0$, instead of \eqref{eqn:square_expand} we may simply use the bound $\mathbb{E}_t\|\mu^t-G(x^{t+1})\|^2=\mathbb{E}_t\|G(x^t)-G(x^{t+1})\|\leq L^2\|x^t-x^{t+1}\|^2$. Continuing in the same way as above yields the improved estimate. %
\end{proof}

\section{Technical results on sequences}\label{sec:technical_sequences}
The following lemma is standard and follows from a simple inductive argument.
\begin{lemma}\label{lem:basic_recurs_lemma}
Consider a sequence $D_t\geq 0$ for $t\geq 1$ and constants $t_0\geq 0$, $a>0$ satisfying
\begin{equation}\label{eqn:key_recurs}
D_{t+1}\leq (1-\tfrac{2}{t+t_0})D_t+\tfrac{a}{(t+t_0)^2}
\end{equation}
Then the estimate holds:
\begin{equation}\label{eqn:claim_est}
D_{t}\leq \frac{\max\{(1+t_0)D_1,a\}}{t+t_0}\qquad \forall t\geq 1.
\end{equation}
\end{lemma}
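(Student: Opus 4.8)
The plan is a direct induction on $t$. Set $M := \max\{(1+t_0)D_1,\, a\}$; I will prove the bound \eqref{eqn:claim_est}, i.e.\ $D_t \le M/(t+t_0)$ for every $t \ge 1$, by induction. For the base case $t=1$ the asserted bound reads $D_1 \le M/(1+t_0)$, which holds because $M \ge (1+t_0)D_1$ by the definition of $M$.

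For the inductive step, assume $D_t \le M/(t+t_0)$ and abbreviate $s := t+t_0$. In the parameter regime of interest one has $s \ge 2$ (indeed $t_0 \ge 1$ in every application of this lemma), so the coefficient $1 - \tfrac{2}{s}$ lies in $[0,1)$ and may be multiplied through the inductive hypothesis. Combining \eqref{eqn:key_recurs}, the inductive hypothesis, and the bound $a \le M$ then gives
\[
D_{t+1} \;\le\; \Bigl(1-\tfrac{2}{s}\Bigr)\frac{M}{s} + \frac{a}{s^{2}} \;\le\; \frac{M(s-2)+M}{s^{2}} \;=\; \frac{M(s-1)}{s^{2}}.
\]
To finish, invoke the elementary inequality $(s-1)(s+1)=s^{2}-1 \le s^{2}$, which rearranges to $\frac{M(s-1)}{s^{2}} \le \frac{M}{s+1} = \frac{M}{(t+1)+t_0}$. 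This is exactly the claimed bound at index $t+1$, closing the induction and establishing \eqref{eqn:claim_est}.

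Every step above is an elementary manipulation, so there is no genuine obstacle here; the only point requiring a moment's care is the sign of $1-\tfrac{2}{s}$ for the smallest indices, and this is harmless because $t_0$ is bounded below by a positive constant (in fact $t_0\ge 1$) in all the settings where the lemma is invoked — if one wished to cover arbitrary $t_0\ge 0$, for the indices with $s<2$ one simply drops the term $(1-\tfrac 2 s)D_t$ using $D_t\ge 0$ and bounds $D_{t+1}\le a/s^2$.
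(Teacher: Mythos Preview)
Your inductive argument is exactly what the paper has in mind (the paper does not spell out a proof, only remarks that the lemma ``is standard and follows from a simple inductive argument''), and your execution for the regime $t_0\ge 1$ is clean and correct.

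One small correction: your closing parenthetical about extending to arbitrary $t_0\ge 0$ does not work. Dropping the negative term gives $D_{t+1}\le a/s^2$, but to close the induction you would then need $a/s^2\le M/(s+1)$, i.e.\ $s+1\le s^2$, which fails for $s<\tfrac{1+\sqrt{5}}{2}$. In fact the lemma as literally stated is false at $t_0=0$: take $D_1=0$ and $D_2=a$, which satisfies the recursion at $t=1$ yet violates $D_2\le M/2=a/2$. So your instinct to restrict to $t_0\ge 1$ is the right call; the paper's applications all satisfy this (e.g.\ $t_0=2R^2/\lambda_1$ in Lemma~\ref{lem:estimation_error}), and the parenthetical patch should simply be dropped.
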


We will need the following more general version of the lemma.

\begin{lemma}\label{lem:lemma_on_seq}
Consider a sequence $D_t\geq 0$ for $t\geq 1$ and constants $t_0\geq 0$, $a,b>0$ satisfying
\begin{equation}\label{eqn:key_recurs}
D_{t+1}\leq (1-\tfrac{2}{t+t_0})D_t+\tfrac{a}{(t+t_0)^2}+\tfrac{b}{(t+t_0)^3}.
\end{equation}
Then the estimate holds:
\begin{equation}\label{eqn:claim_est}
D_{t}\leq \frac{\max\{(1+t_0)D_1/2,a\}}{t+t_0}+\frac{\max\{(1+t_0)^{3/2}D_1/2,b\}}{(t+t_0)^{3/2}}\qquad \forall t\geq 1.
\end{equation}
\end{lemma}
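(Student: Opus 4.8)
The plan is a direct induction on $t$. Write $s_t:=t+t_0$, $M_1:=\max\{(1+t_0)D_1/2,\,a\}$ and $M_2:=\max\{(1+t_0)^{3/2}D_1/2,\,b\}$, and set out to prove $D_t\le \tfrac{M_1}{s_t}+\tfrac{M_2}{s_t^{3/2}}$ for every $t\ge1$. The base case $t=1$ is immediate: since $M_1\ge(1+t_0)D_1/2$ and $M_2\ge(1+t_0)^{3/2}D_1/2$, the right-hand side at $t=1$ is at least $\tfrac{D_1}{2}+\tfrac{D_1}{2}=D_1$ --- this is precisely the role of the factors $1/2$. (Equivalently, one may split $D_t\le E_t+F_t$ with $E_1=F_1=D_1/2$, $E_{t+1}=(1-\tfrac{2}{s_t})E_t+\tfrac{a}{s_t^2}$, $F_{t+1}=(1-\tfrac{2}{s_t})F_t+\tfrac{b}{s_t^3}$; then $E_t$ is controlled by Lemma~\ref{lem:basic_recurs_lemma} and only $F_t$ needs a new argument.)

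For the inductive step, assume the bound at $t$ and abbreviate $s:=s_t$, so $s_{t+1}=s+1$; as in the applications we may take $t+t_0\ge 2$, so that $1-\tfrac 2s\ge 0$ and the induction hypothesis can be inserted into the recursion. Using also $a\le M_1$ and $b\le M_2$,
\begin{equation*}
D_{t+1}\ \le\ (1-\tfrac2s)\left(\tfrac{M_1}{s}+\tfrac{M_2}{s^{3/2}}\right)+\tfrac{a}{s^2}+\tfrac{b}{s^3}\ \le\ \left(\tfrac{M_1}{s}-\tfrac{M_1}{s^2}\right)+M_2\left(\tfrac{1}{s^{3/2}}-\tfrac{2}{s^{5/2}}+\tfrac{1}{s^3}\right).
\end{equation*}
The first bracket equals $\tfrac{M_1(s-1)}{s^2}\le\tfrac{M_1}{s+1}$ because $(s-1)(s+1)=s^2-1\le s^2$. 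Hence the induction closes provided one proves the scalar inequality
\begin{equation*}
\frac{1}{s^{3/2}}-\frac{2}{s^{5/2}}+\frac{1}{s^3}\ \le\ \frac{1}{(s+1)^{3/2}}\qquad(s\ge1),
\end{equation*}
since it then yields $D_{t+1}\le\tfrac{M_1}{s+1}+\tfrac{M_2}{(s+1)^{3/2}}=\tfrac{M_1}{s_{t+1}}+\tfrac{M_2}{s_{t+1}^{3/2}}$.

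This scalar inequality is the main obstacle: it is genuinely tight --- the ratio of its two sides tends to $1$ as $s\to\infty$ --- so any crude estimate (e.g.\ discarding the $\tfrac{1}{s^3}$ term, or a one-term Taylor bound on $(s+1)^{-3/2}$, which only works for $s\ge 4$) is too lossy. I would dispatch it by clearing denominators: it is equivalent to $(s^{3/2}-2s^{1/2}+1)(s+1)^{3/2}\le s^3$. Substituting $s=u^2$ with $u\ge1$ and using $(u^2+1)^{1/2}\le\tfrac{2u^2+1}{2u}$ (which squares to the trivial $0\le1$) together with $u^3-2u+1=(u-1)(u^2+u-1)\ge0$, it suffices to establish the polynomial inequality $(u^3-2u+1)(u^2+1)(2u^2+1)\le 2u^7$, i.e.\ $Q(u):=u^5-2u^4+5u^3-3u^2+2u-1\ge0$ for $u\ge1$. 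Finally one verifies this by grouping
\begin{equation*}
Q(u)=u^3\bigl((u-1)^2+4\bigr)-(3u^2-2u+1)\ \ge\ 4u^3-3u^2+2u-1\ =\ u^2(4u-3)+(2u-1)\ \ge\ 0\quad(u\ge1),
\end{equation*}
which completes the proof.
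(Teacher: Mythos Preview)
Your proof is correct and follows essentially the same route as the paper: induction on $t$, the same base-case device via the $1/2$ factors, and the same reduction of the inductive step to the scalar inequality $s^{-3/2}-2s^{-5/2}+s^{-3}\le(s+1)^{-3/2}$. The only difference is how that final inequality is verified: the paper sets $\phi(s)$ equal to the difference of the two sides and argues by ``elementary calculus'' that $\phi$ is increasing on $[1,\infty)$ with limit $0$, hence nonpositive, whereas you clear denominators and check a polynomial inequality by hand.
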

\begin{proof}
Clearly the recursion \eqref{eqn:key_recurs} continues to hold with $a$ and $b$
replaced by the bigger quantities $\max\{(1+t_0)D_1/2,a\}$ and
$\max\{(1+t_0)D_1/2,b\}$, respectively. Therefore abusing notation, we will make this substitution. As a consequence,
the claimed estimate \eqref{eqn:claim_est}  holds automatically for the base case $t=1$. As an inductive assumption, suppose the claim  \eqref{eqn:claim_est} is true for $D_t$. Set $s=t+t_0$.
We then deduce
\begin{align*}
D_{t+1}&\leq \left(1-\frac{2}{s}\right)D_t+\frac{ a}{s^2}+\frac{b}{s^3}\\
&\leq \left(1-\frac{2}{s}\right)\left(\frac{ a}{s}+\frac{ b}{s^{3/2}}\right)+\frac{ a}{s^2}+\frac{b}{s^3}\\
&\leq a\left(\frac{ 1}{s}-\frac{ 1}{s^2}\right)+b\left(\frac{1}{s^{3/2}}-\frac{2}{s^{5/2}}+\frac{1}{s^3}\right).
\end{align*}
Elementary algebraic manipulations show $\frac{ 1}{s}-\frac{ 1}{s^2}\leq \frac{1}{s+1}$. Define the function
$\phi(s)=\frac{1}{s^{3/2}}-\frac{2}{s^{5/2}}+\frac{1}{s^3}-\frac{1}{(1+s)^{3/2}}$. Elementary calculus shows that $\phi$ is increasing on the interval $s\in [1,\infty)$. Since $\phi$ tends to zero as $s$ tends to infinity, it follows that $\phi$ is negative on the interval $[1,\infty)$. We therefore conclude
$D_{t+1}\leq \frac{a}{s+1}+\frac{b}{(1+s)^{3/2}}$ as claimed.
\end{proof}

\section{Online Least Squares}
\label{app:proof_adaptive}

In this appendix section, we record basic and well-known results on estimation in online least squares, following \cite{dieuleveut2017harder}.

\begin{lemma}\label{lem:least_square_step} Fix a probability space $(\Omega, \mathcal{F},\mathbb{P})$ with two sub-$\sigma$-algebras $\mathcal{G}_1\subset \mathcal{G}_2\subset \mathcal{F}$ . Define the function
$$f(B)=\frac{1}{2}\|By-b\|^2,$$
where  $B\colon \Omega\to \R^{m_1\times m_2}$,
$b\colon\Omega\to \R^{m_1}$, and $y\colon \Omega\to \R^{m_2}$ are random variables. 
Suppose moreover that there exist random variables $V\colon\Omega \to \R^{m_1\times m_2}$ and $\sigma\colon\Omega\to\R$ satisfying the following.
\begin{enumerate}
\item $B$, $V$, and $\sigma_1$ are $\mathcal{G}_1$-measurable.
\item $y$ is $\mathcal{G}_2$-measurable.
\item The estimates, $\E[b\mid \mathcal{G}_2]=Vy$ and $\E[\|Vy-b\|^2\mid \mathcal{G}_2]\leq \sigma^2$ ,hold.
\item There exist constants $\lambda_1, \lambda_2,R >0$ satisfying 
$$\lambda_1 I\preceq \mb{E}[y y^\top\mid  \mathcal{G}_1],\qquad \mb{E}[\|y\|^2\mid  \mathcal{G}_1]\leq \lambda_2, \qquad\textrm{and}\qquad \E[\|y\|^2y y^{\top}\mid \mathcal{G}_1]\preceq R^2 \E[y y^{\top}].$$
\end{enumerate}
Then for any constant $\nu\in (0,\frac{2}{R^2})$, the gradient step  $B^+=B-\nu(By-b)y^{\top}$ satisfies the bound:
$$\frac{1}{2}\mathbb{E}[\|B^+-V\|^2_F\mid \mathcal{G}_1]\leq\frac{1-\lambda_1\nu(2-\nu R^2)}{2}\|B-V\|^2_F+\frac{\nu^2\sigma^2\lambda_2}{2}.$$
\end{lemma}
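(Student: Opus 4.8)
The plan is to run the textbook one-step contraction analysis for a stochastic gradient step on the quadratic map $B\mapsto f(B)$, being careful about the two nested $\sigma$-algebras. First I would record that the gradient of $f$ in $B$ is $\nabla f(B)=(By-b)y^{\top}$, so the update is $B^{+}=B-\nu(By-b)y^{\top}$. Expanding the Frobenius norm and using the identities $\langle B-V,(By-b)y^{\top}\rangle_F=\langle (B-V)y,\,By-b\rangle$ and $\|(By-b)y^{\top}\|_F^2=\|y\|^2\|By-b\|^2$, I obtain
\[
\|B^{+}-V\|_F^2=\|B-V\|_F^2-2\nu\,\langle (B-V)y,\,By-b\rangle+\nu^2\|y\|^2\|By-b\|^2 .
\]

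Next I would condition on $\mathcal{G}_2$. Since $B$ and $V$ are $\mathcal{G}_1$- (hence $\mathcal{G}_2$-) measurable and $y$ is $\mathcal{G}_2$-measurable, the only remaining randomness sits in $b$. Using $\E[b\mid\mathcal{G}_2]=Vy$ gives $\E[By-b\mid\mathcal{G}_2]=(B-V)y$, so the middle term has conditional expectation $\|(B-V)y\|^2$. For the last term I would split $By-b=(B-V)y+(Vy-b)$ and expand $\|By-b\|^2=\|(B-V)y\|^2+2\langle (B-V)y,\,Vy-b\rangle+\|Vy-b\|^2$; under $\E[\cdot\mid\mathcal{G}_2]$ the cross term vanishes because $\E[Vy-b\mid\mathcal{G}_2]=0$, the last term is at most $\sigma^2$ by hypothesis (3), and $\|y\|^2$ factors out of the conditional expectation. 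This produces
\[
\E\big[\|B^{+}-V\|_F^2\mid\mathcal{G}_2\big]\le\|B-V\|_F^2-2\nu\|(B-V)y\|^2+\nu^2\|y\|^2\|(B-V)y\|^2+\nu^2\|y\|^2\sigma^2 .
\]

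Finally I would take $\E[\cdot\mid\mathcal{G}_1]$ via the tower rule and invoke the moment hypotheses (4): the fourth-order term is bounded by $R^2$ times the second-order term $\E[\|(B-V)y\|^2\mid\mathcal{G}_1]$ via $\E[\|y\|^2yy^{\top}\mid\mathcal{G}_1]\preceq R^2\E[yy^{\top}]$, and the noise term is bounded by $\nu^2\lambda_2\sigma^2$ using $\E[\|y\|^2\mid\mathcal{G}_1]\le\lambda_2$ together with $\mathcal{G}_1$-measurability of $\sigma$. Collecting terms gives
\[
\E\big[\|B^{+}-V\|_F^2\mid\mathcal{G}_1\big]\le\|B-V\|_F^2-\nu(2-\nu R^2)\,\E\big[\|(B-V)y\|^2\mid\mathcal{G}_1\big]+\nu^2\lambda_2\sigma^2 .
\]
Since $\nu\in(0,2/R^2)$ the factor $2-\nu R^2$ is strictly positive, so I may substitute the lower bound $\E[\|(B-V)y\|^2\mid\mathcal{G}_1]\ge\lambda_1\|B-V\|_F^2$ coming from $\E[yy^{\top}\mid\mathcal{G}_1]\succeq\lambda_1 I$, and then divide by $2$ to reach exactly the claimed inequality with contraction factor $1-\lambda_1\nu(2-\nu R^2)$.

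The computation is mostly routine; the only step that needs real care — and the main obstacle — is the quadratic term $\nu^2\|y\|^2\|By-b\|^2$: one must split the residual relative to $(B-V)y$ so that the noise cross-term disappears under $\E[\cdot\mid\mathcal{G}_2]$, and then apply the fourth-moment condition in precisely the form that makes the $2\nu$ from the descent term and the $\nu^2R^2$ from the second-moment term combine into $\nu(2-\nu R^2)$, which is what forces the stepsize restriction $\nu<2/R^2$.
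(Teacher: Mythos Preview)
Your proposal is correct and follows essentially the same route as the paper's proof: expand the square, condition first on $\mathcal{G}_2$ to kill the noise cross-term via $\E[b\mid\mathcal{G}_2]=Vy$, then condition on $\mathcal{G}_1$ and apply the fourth-moment bound and the eigenvalue lower bound on $\E[yy^{\top}\mid\mathcal{G}_1]$ in exactly the same way.
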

\begin{proof}
Expanding the squared norm yields:
\begin{align*}
\frac{1}{2}\|B^+-V\|^2_F=\frac{1}{2}\|B-V-\nu
(By-b)y^{\top}\|^2_F&=\frac{1}{2}\|B-V\|^2_F-\nu\langle
B-V,(By-b)y^{\top}\rangle\\
&\quad+\frac{\nu^2}{2}\|(By-b)y^{\top}\|^2_F.
\end{align*}
Taking conditional expectations, we conclude
\begin{equation}\label{eqn:online_Least}
\begin{aligned}
\frac{1}{2}\mathbb{E}[\|B^+-V\|^2_F\mid \mathcal{G}_2]&=\frac{1}{2}\|B-V\|^2_F-\nu\langle B-V,(By-\mathbb{E}[b\mid \mathcal{G}_2])y^{\top}\rangle+\frac{\nu^2}{2}\mathbb{E}[\|(By-b)y^{\top}\|^2_F\mid \mathcal{G}_2]\\
&=\frac{1}{2}\|B-V\|^2_F-\nu\|(B-V)y\|_F^2+\frac{\nu^2}{2}\|y\|^2 \mathbb{E}[\|By-b\|^2_F\mid \mathcal{G}_2].
\end{aligned}
\end{equation}
Next, observe 
$$\|By-b\|^2_F=\|(B-V)y\|^2+\|Vy-b\|^2+2\langle By-Vy,Vy-b\rangle.$$
Taking the conditional expectation $\mathbb{E}[\cdot\mid\mathcal{G}_2]$, the last term vanishes, and therefore we deduce $\mathbb{E}[\|By-b\|^2_F\mid \mathcal{F}']\leq \|(B-V)y\|^2+\sigma^2$. Combining this with \eqref{eqn:online_Least}  we compute
\begin{align*}
\frac{1}{2}\mathbb{E}[\|B^+-V\|^2_F\mid \mathcal{G}_2]&\leq \frac{1}{2}\|B-V\|^2_F-\nu\|(B-V)y\|_F^2+\frac{\nu^2}{2}\|y\|^2\|(B-V)y\|^2+\frac{\nu^2\sigma^2}{2}\|y\|^2.
\end{align*}
Taking expectations with respect to $\mathcal{G}_1$ and using the tower rule, we deduce
\begin{align*}
\frac{1}{2}\mathbb{E}[\|B^+-V\|^2_F\mid
\mathcal{G}_1]&\leq\frac{1}{2}\|B-V\|^2_F-\nu\E[\|(B-V)y\|_F^2 \mid
\mathcal{G}_1]+\frac{\nu^2}{2}\E[\|y\|^2\|(B-V)y\|^2\mid \mathcal{G}_1]\\
&\quad+\frac{\nu^2\sigma^2\lambda_2}{2}.
\end{align*}
Observe next
$$\E[\|y\|^2\|(B-V)y\|^2\mid \mathcal{G}_1]=\langle (B-V)(B-V)^{\top}, \mathbb{E}[\|y\|^2y y^{\top}\mid \mathcal{G}_1]]\rangle\leq R^2\mathbb{E}[\|(B-V)y\|_F^2\mid \mathcal{G}_1],$$
and therefore 
\begin{align*}
\frac{1}{2}\mathbb{E}[\|B^+-V\|^2_F\mid \mathcal{G}_1]&\leq\frac{1}{2}\|B-V\|^2_F-(\nu-\frac{\nu^2 R^2}{2})\E[\|(B-V)y\|_F^2 \mid \mathcal{G}_1]+\frac{\nu^2\sigma^2\lambda_2}{2}
\end{align*}
Note that $\nu\geq \frac{\nu^2R^2}{2}$.
Next we estimate
\begin{align*}
\E[\|(B-V)y\|_F^2\mid \mathcal{G}_1]&=\tr ((B-V)^\top(B-V)\E[yy^{\top}\mid \mathcal{G}_1]]\rangle\geq \lambda_1\|B-V\|^2_F.
\end{align*}
This completes the proof.
\end{proof}

\end{document}